\newtheorem{thm}{Theorem}
\newtheorem{coro}[thm]{Corollary}
\newtheorem{lemma}[thm]{Lemma}
\newtheorem{propo}[thm]{Proposition}
\theoremstyle{definition}
\definecolor{red1}{rgb}{1,0.9,0.9}
\definecolor{blue1}{rgb}{0.9,0.9,1}
\definecolor{green1}{rgb}{0.9,1,0.9}
\definecolor{yellow1}{rgb}{1,1,0.9}
\definecolor{yellow2}{rgb}{1,1,0.8}
\def\question#1{ \vspace{2mm} \begin{center} \parbox{11.2cm}{{\bf Problem:} #1} \vspace{2mm} \end{center} }
\title{The graph spectrum of barycentric refinements}
\author{Oliver Knill}
\date{August 9, 2015}
\address{ Department of Mathematics \\ Harvard University \\ Cambridge, MA, 02138 }
\subjclass{Primary: 05C50, 57M15, 37Dxx } 
\keywords{Spectral graph theory, Barycentric subdivision, Julia sets}
\begin{document}
\maketitle

\begin{abstract}
Given a finite simple graph $G$, let $G_1$ be its barycentric refinement:
it is the graph in which the vertices are the complete subgraphs of $G$ and in which two 
such subgraphs are connected, if one is contained into the other. 
If $\lambda_0=0 \leq \lambda_1 \leq \lambda_2 \leq \cdots \leq \lambda_n$ are the
eigenvalues of the Laplacian of $G$, define the spectral function 
$F(x) = \lambda_{[n x]}$ on the interval $[0,1]$, where $[r]$ is the floor function giving 
the largest integer smaller or equal than $r$. The graph $G_1$ is known to be homotopic to $G$ with 
Euler characteristic $\chi(G_1)=\chi(G)$ and ${\rm dim}(G_1) \geq {\rm dim}(G)$.
Let $G_{m}$ be the sequence of 
barycentric refinements of $G=G_0$. We prove that for any finite simple graph $G$, the spectral functions 
$F_{G_m}$ of successive refinements converge for $m \to \infty$ uniformly on compact subsets of $(0,1)$ 
and exponentially fast to a universal limiting eigenvalue distribution function $F_d$ which only depends on the clique number
respectively the dimension $d$ of the largest complete subgraph of $G$ and not on the starting graph $G$. 
In the case $d=1$, where we deal with graphs without triangles, the limiting 
distribution is the smooth function $F(x) = 4 \sin^2(\pi x/2)$. This is related to the Julia set of the
quadratic map $T(z) = 4z-z^2$ which has the one dimensional Julia set $[0,4]$ and $F$ satisfies
$T(F(k/n))=F(2k/n)$ as the Laplacians satisfy such a renormalization recursion. 
The spectral density in the $d=1$ case is then the arc-sin distribution which is 
the equilibrium measure on the Julia set. In higher dimensions, where the limiting 
function $F$ still remains unidentified, $F'$ appears to have a discrete or singular 
component. We don't know whether there is an analogue renormalization in $d \geq 2$.
The limiting distribution has relations with the limiting vertex degree distribution and 
so in 2 dimensions with the graph curvature distribution of the refinements $G_m$. 
\end{abstract}

\section{Introduction}

The spectral theory of graphs \cite{Chung97,Mieghem,Brouwer,VerdiereGraphSpectra,Post,BLS}
parallels to a great deal the spectral theory of compact Riemannian manifolds 
\cite{Chavel,Rosenberg,BergerPanorama}. There are areas with a good match like
inverse spectral topics, heat kernel related topics like Hodge theory where
the dimension of the space of harmonic solutions $L f = 0$ for example
is the number of connected components and more generally, harmonic $k$-forms are
related to the k'th cohomology groups in a rather explicit way as applying
the heat flow to $k$ forms converges to harmonic forms. This can be used for example to
construct the Hurewicz homomorphism from $\pi_k(G)$ to the cohomology groups $H^k(G)$ by applying the
heat flow $e^{-iL_k}$ on a $k$-form $f$ supported initially on a $k$-sphere representing an element in 
the homotopy group $\pi_k(G)$. It leads to a harmonic $k$-form representing a homology
class. Discrete McKean-Singer \cite{knillmckeansinger} 
${\rm str}(e^{-t L}) = \chi(G)$ illustrates further, how the graph theory 
parallels the continuum \cite{McKeanSinger}: the super trace of the heat kernel for $t=0$
is the super trace of $1$ which is the definition of the Euler characteristic, while
in the case $t \to \infty$, it counts the alternating sum of the Betti numbers, which by Euler-Poincar\'e
is the same. The cohomology of discretizations of manifolds has since
the beginning of the development of algebraic topology been used to compute the cohomology 
of the manifold. For the convergence of the spectrum of $p$-form Laplacians of a compact
connected Riemannian manifold $M$ of dimension $d \geq 2$,  Mantuano's theorem 
\cite{Mantuano} tells that for any $\epsilon$ discretization $G$ given by a graph, 
the spectra of the Hodge Laplacians $L_p$ are related by 
$c \lambda_{k,p}(G) \leq \lambda_{k,p}(M) \leq C \lambda_{k,p}(G)$, where the constants $c,C$
only depend on dimension $d$, the maximal curvature and radius of injectivity of $M$. Also for eigenfunctions,
the number of nodal regions of the $k$ eigenfunction $f_k$ is bound by a theorem of 
Fiedler \cite{Fiedler1975} by $k$, paralleling the Courant nodal theorem. 
The ground state energy $\lambda_1>0$ is estimated in the same way from below by the Cheeger constant.
The nodal regions on geometric graphs correspond to Chladni figures on compact Riemannian manifolds. 
Also in topology like for Brouwer-Lefshetz \cite{brouwergraph} or Jordan-Brouwer theory \cite{KnillJordan},
the notions translate nicely to the discrete and  barycentric refinements help. 
There are also places, where things are different: the Weyl law relating the growth of the eigenvalues with volume
has no direct discrete analogue because the spectrum of a graph is a finite set.
Similarly, the Minashisundaram-Plejel zeta function $\sum_{\lambda>0} \lambda^{-s}$ of a manifold needs 
analytic continuation in the continuum, while in the graph case, the zeta function is an entire function. 
Already in the case of a circle, where the Dirac version of that zeta function is the Riemann 
zeta function, the zeros of the discrete analog on circular graphs can be analyzed well, unlike in the 
continuum, where it it is the Riemann hypothesis. 
The convergence of the roots \cite{KnillZeta} of the graph zeta functions is a limit when looking at barycentric 
subdivisions of circular graphs, which is also important for 
Jacobi matrices generalizing graph Laplacians \cite{BGH,BBM} which we studied in \cite{Kni95}
in an ergodic setup, where the limiting operators are almost periodic Jacobi matrices
over the von Neumann-Kakutani system. The later is the unique fixed point of the 2:1 integral extension operation
in ergodic theory. The hull of the operators is the compact Abelian group of dyadic integers, where space 
translation is addition by 1 and where the renormalization step is the shift. 
We look now at higher dimensional analogue questions to these one dimensional Jacobi spectral problems. 
There are other analogies between Schr\"odinger operators and geometric Laplacians: while isospectral
deformations are possible in the former case, geometric Laplacians are harder to deform as the isospectral
set of geometries is discrete in general. Having worked with isospectral deformations of Jacobi matrices
in \cite{Kni93a,Kni93b} we looked in \cite{Kni94} at higher dimensions and noted that writing $L=D^2$ can enable
isospectral deformation and break the spectral rigidity. Much later, we realized
that in the Riemannian as well as in the graph case, one can deform the exterior derivative
$d$ \cite{IsospectralDirac,IsospectralDirac2} in an isospectral way. The Riemannian case is a bit more
technical in that story as the deformed exterior derivatives $d_t$ are pseudo differential equations in the
continuum but they satisfy $d_t^2=0$ so that
they deform cohomology. The deformation does not affect the Laplacian $L$, but in the complex, the 
nonlinear deformation becomes asymptotic to the wave evolution. Interestingly, both in the discrete as well as in the
continuum, space expands, with an inflationary start. This is not artificially placed into the model,
but is a basic property if one lets the Dirac operator move freely in its symmetry group.
Graphs can serve as a laboratory to test concepts related to physics. It is reasonable for example to see
the evolutions of the various differential forms under the wave dynamics as manifestations
of different forces. It is a caricature for physics, where nothing has put into the system except
for geometry, which is a Riemannian manifold in the continuum and a graph $G$ in the discrete. 
When answering the question which graph to take, symmetry again could give a hint: take a scale invariant graph. 
While impossible to achieve in a finite discrete structure, taking
a graph which is multiple way refined in a barycentric way comes close. Not only are parts of the graph
homeomorphic to the entire graph, but also the nerve graphs of open covers are homeomorphic to the entire graph. 
This motivates to look at refinements and its spectral properties. \\

\begin{figure}
\scalebox{0.08}{\includegraphics{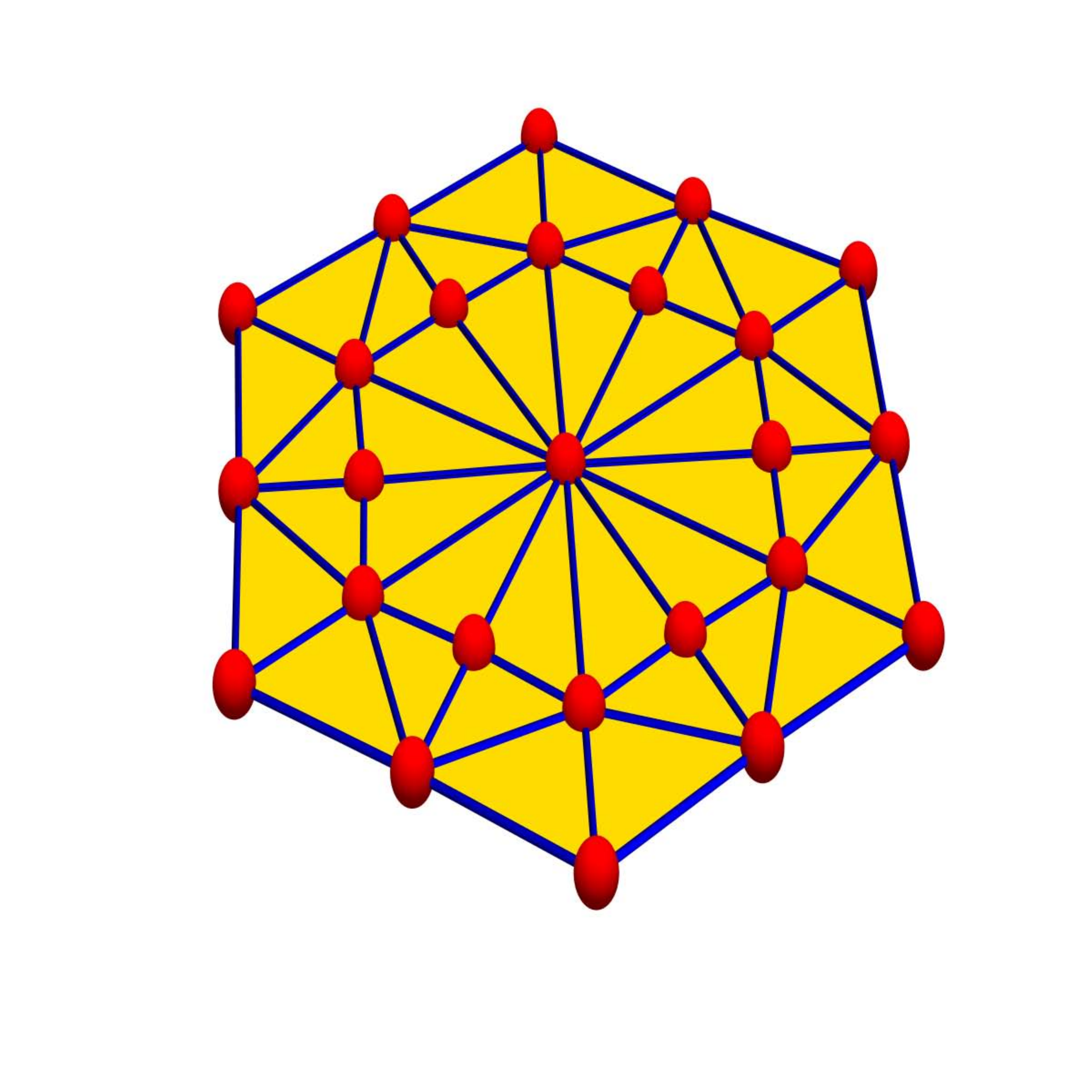}}
\scalebox{0.08}{\includegraphics{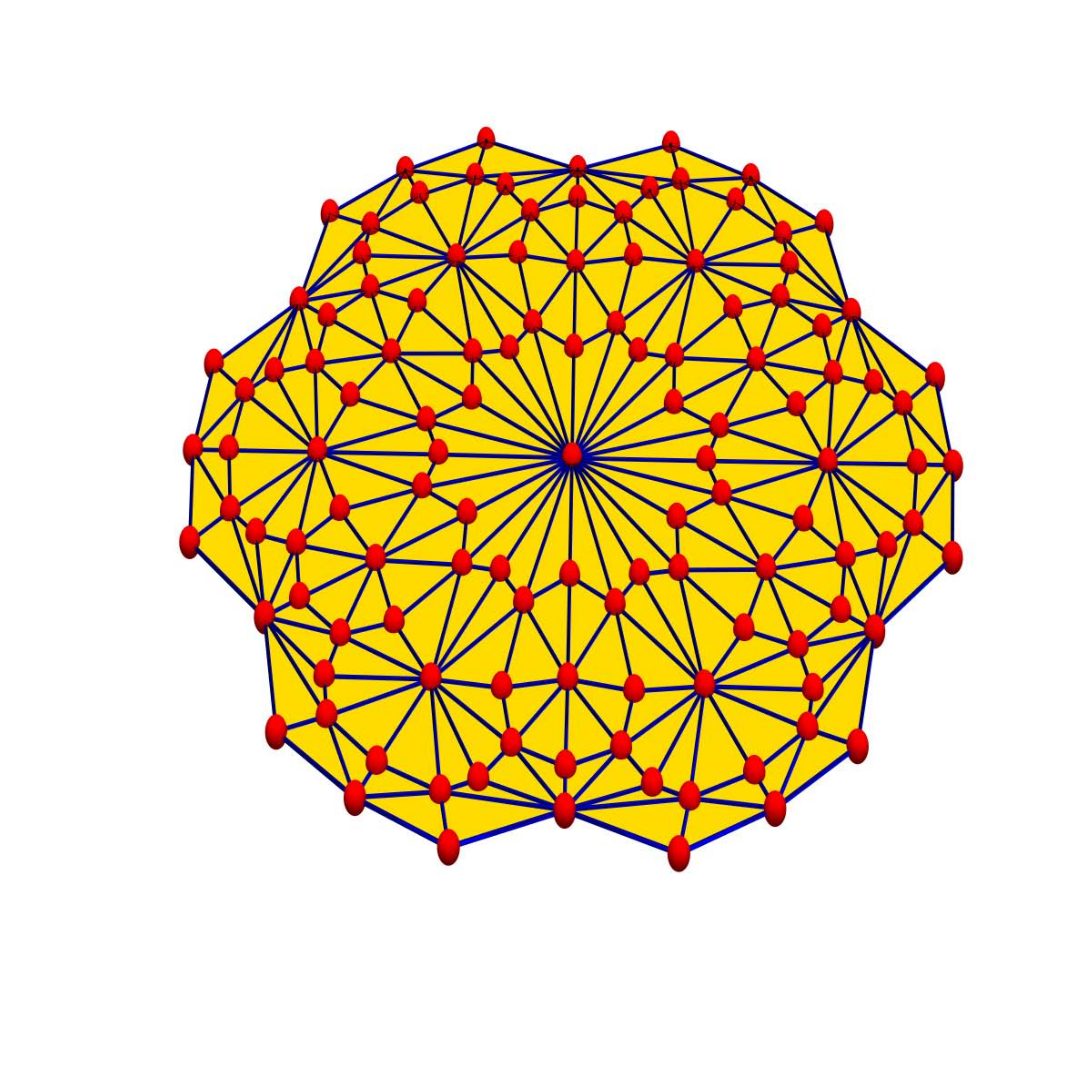}}
\scalebox{0.08}{\includegraphics{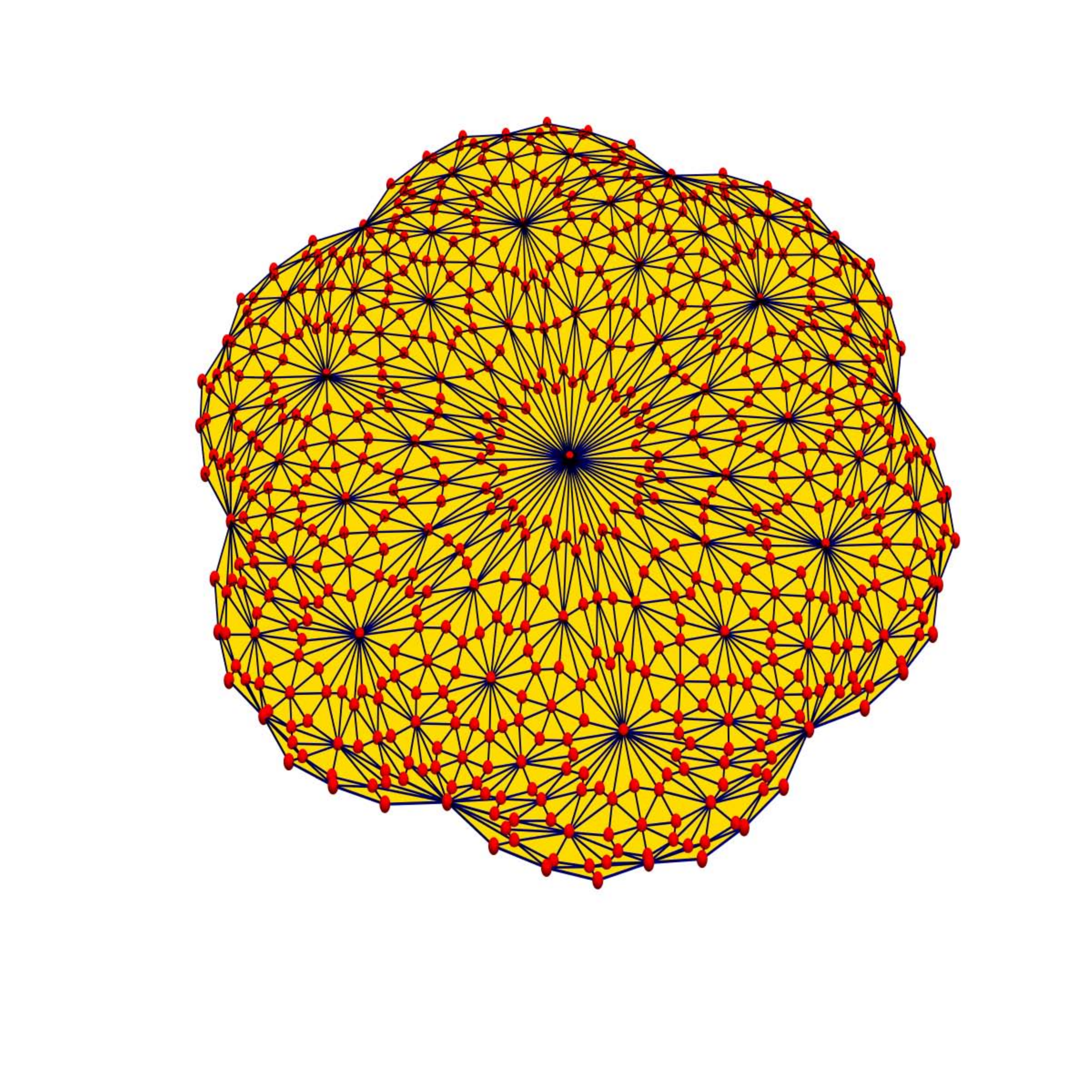}}
\caption{
Barycentric refinements of the triangle $G=K_3$. The number of vertices of the
ball $G_m$ grows exponentially like $O(6^m)$ and is exactly known.
}
\end{figure}

In \cite{KnillProduct}, we looked at a product of graphs which satisfies the same K\"unneth formula
for cohomology as in the continuum. This product of $G$ and $H$ is defined as follows: take two disjoint
unions of $G$ and $H$. Pick a complete subgraph $x$ of $G$ and a complete subgraph $y$ of $H$. 
Let $xy$ be the complete subgraph generated by $x \cup y$. These points $xy$ form the vertices of $G \times H$. 
Two such vertices $xy$ and $uv$ are connected if one is contained in the other. A special case is
if $H=K_1$ in which case $G_1 = G \times K_1$ is the barycentric refinement of $G$: its vertices are the 
complete subgraphs $x$ of $G$ and two complete subgraphs $x,y$ are connected if one is contained in an other. 
The graph $G_1$ is homotopic to $G$. It especially has the same Euler characteristic. The dimension of $G_1$
is bounded below by the dimension of $G$. If $G$ is geometric, then $G_1$ is even homeomorphic to $G$ and the 
dimensions of $G$ and $G_1$ are the same. For any $d$-dimensional geometric $G$, the graphs $G_m$ are all
$(d+1)$-colorable, the color being the dimension of the original simplices which make up now the vertices.
The automorphism group of $G$ also acts on $G_1$ so that
fixed points of graph automorphisms \cite{brouwergraph} can be realized as vertices in $G_1$. 
Barycentric refinements again stress the point of view taken by discrete Morse theory \cite{forman95,Forman1999} 
that complete subgraphs in a graph can be treated as ``points". 

\begin{figure}
\scalebox{0.08}{\includegraphics{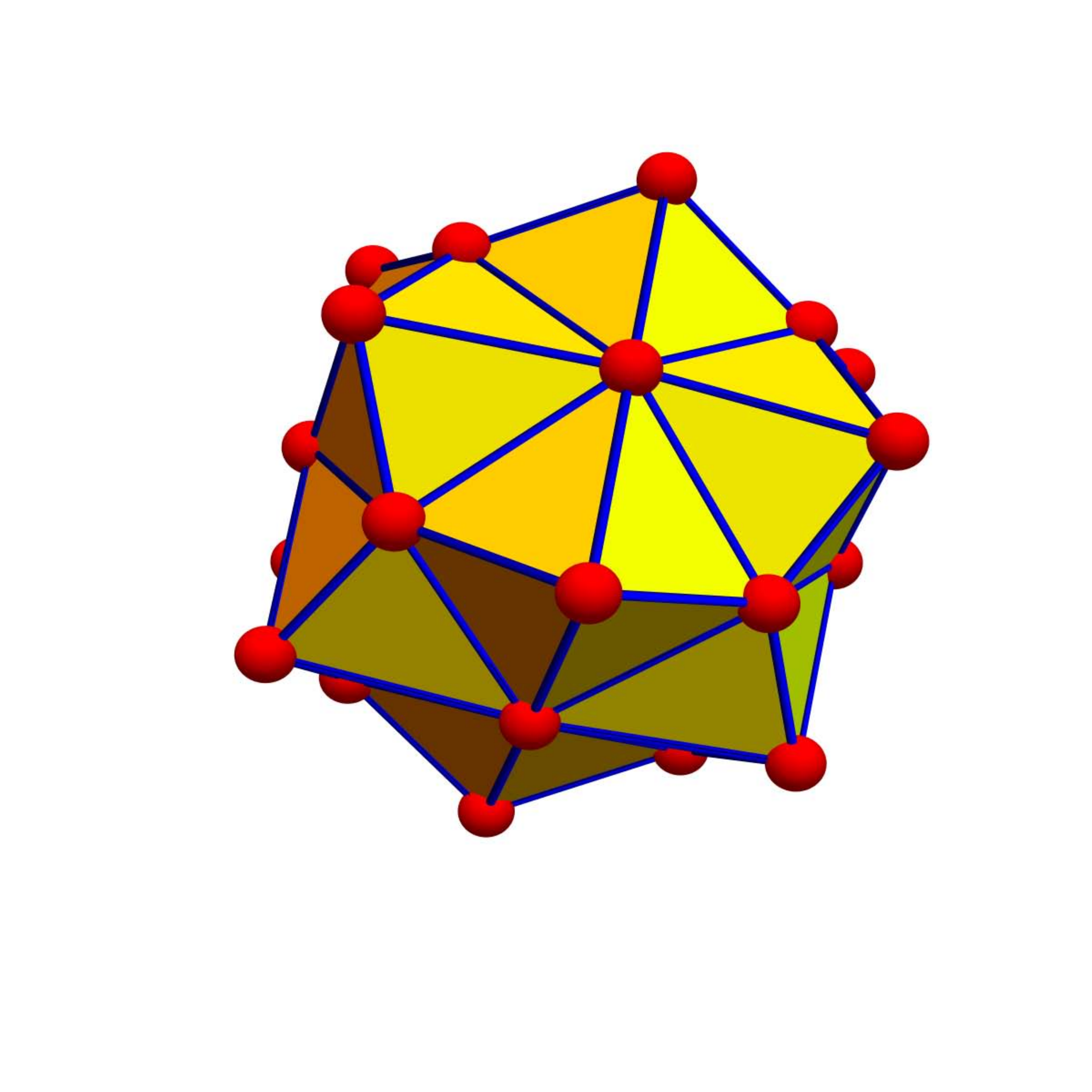}}
\scalebox{0.08}{\includegraphics{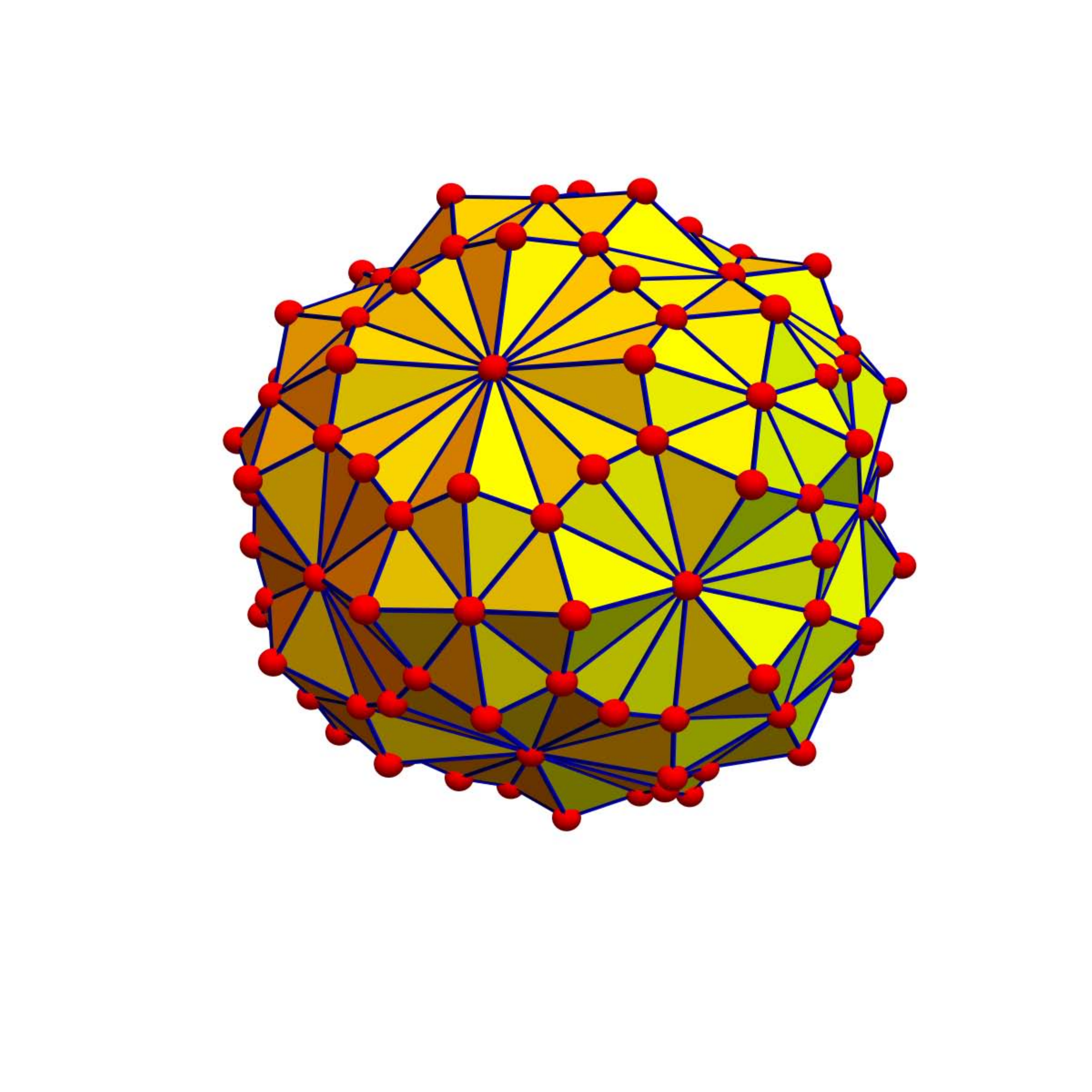}}
\scalebox{0.08}{\includegraphics{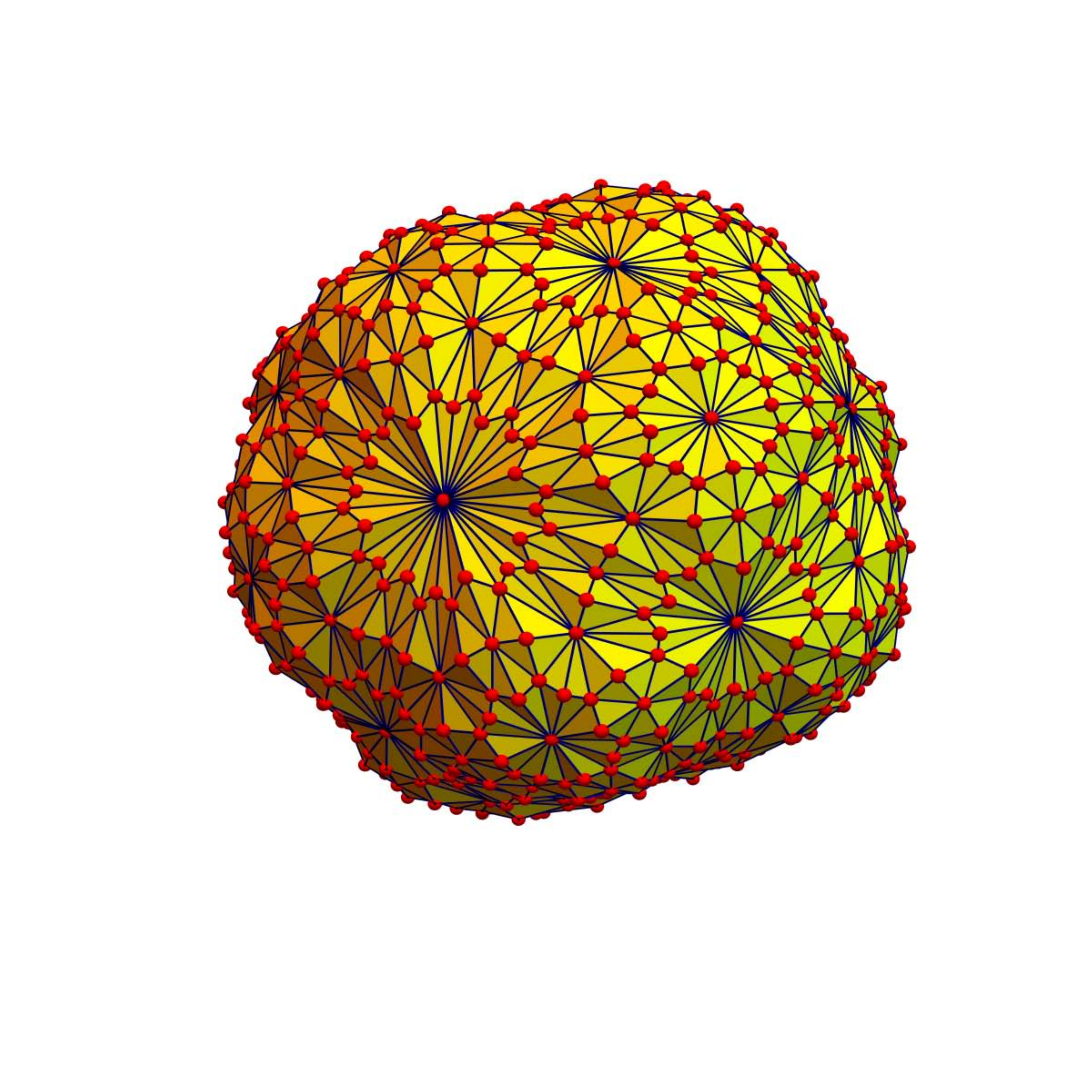}}
\caption{
Barycentric refinements of the octahedron.
}
\end{figure}

For practical triangulations, barycentric refinements are rarely used. The reason is the huge difference
between the possible vertex degrees, as each refinement doubles the maximal degree. But the appearance of 
large vertex degrees has also benefits: for any $m$, the $G_m$ graphs are Eulerian with Eulerian spheres
and have the property that a natural geodesic flow can be defined on them in any dimension. This is essential,
as spheres and lines are important in any geometry and that for defining lines, we need an Eulerian property as with an odd
degree vertex, the continuation of a ``straight line" is ambiguous: we have difficulty for example to continue
a line through a vertex of an icosahedron as we will have to chose what direction to continue. For graphs in 
which all spheres and all spheres in spheres are Eulerian, there is a natural continuation. Graph products and 
especially barycentric refinements have this property. 
The large and nonuniform degree of some density distributed vertices has the advantage as it allows to emulate
the rotational symmetry in the discrete. For a $d$-dimensional graph, and considering $G_m$, 
we can in dimension $d$ start with a vertex degree of $O((d+1)!^{m})$ directions. Unlike in regular
lattices or tessellations, the spheres appear more rounded and the fact that the unit sphere $S(x)$ 
has asymptotically a similar amount of directions than the number of vertices in $G_m$ gives us the property 
that the exponential map $\exp_x$ from the unit sphere to the graph, covers most of the graph.
This is probably the closest we can  get to a Hopf-Rynov property in the discrete without digressing into the
quantum (as nature does) and look at the wave equation $u_{tt} = L u$ for the Hodge Laplacian $L=(d+d^*)^2$,
which because $L=D^2$ has the d'Alembert solution $u(t) = \cos(Dt) u(0) + \sin(Dt) D^{-1} u'(u)$,  a superposition 
of wave group solutions $e^{\pm iDt} = \cos(Dt) \pm i \sin(Dt)$ of the Schroedinger equation $\psi' = \pm i D \psi$ 
with $\psi = u + i D^{-1} u'$ for the Dirac operator $D=d+d^*$ (see \cite{DiracKnill}) for which 
Hopf-Rynov is just linear algebra in the graph case as the unitary group is finite dimensional and
for any two vertices $x,y$ one can solve the problem to start a geodesic on $e_x$ and reach 
$e_y$ by choosing the correct velocity $u'(0)$ and time $t$. \\

\begin{figure}
\scalebox{0.08}{\includegraphics{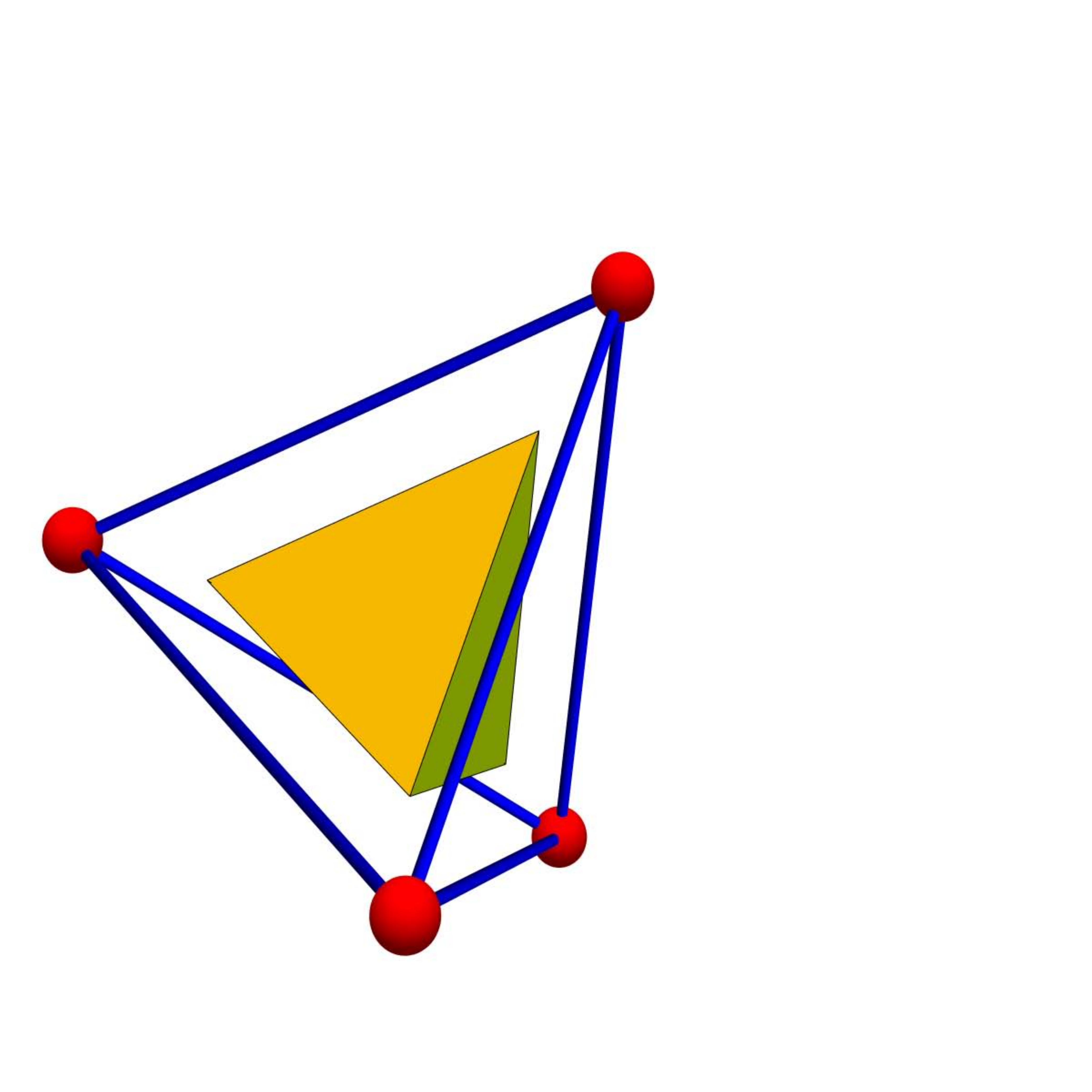}}
\scalebox{0.08}{\includegraphics{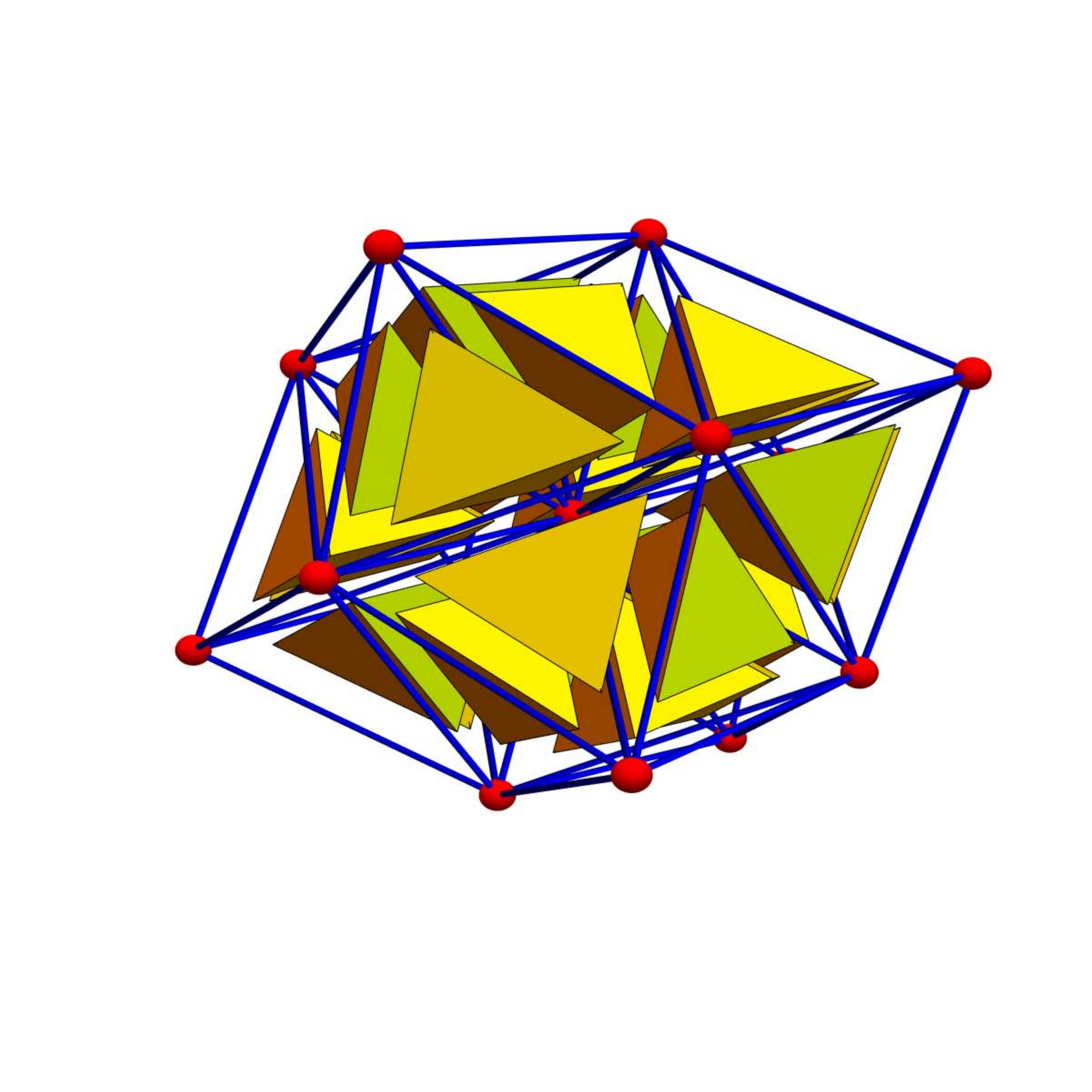}}
\scalebox{0.08}{\includegraphics{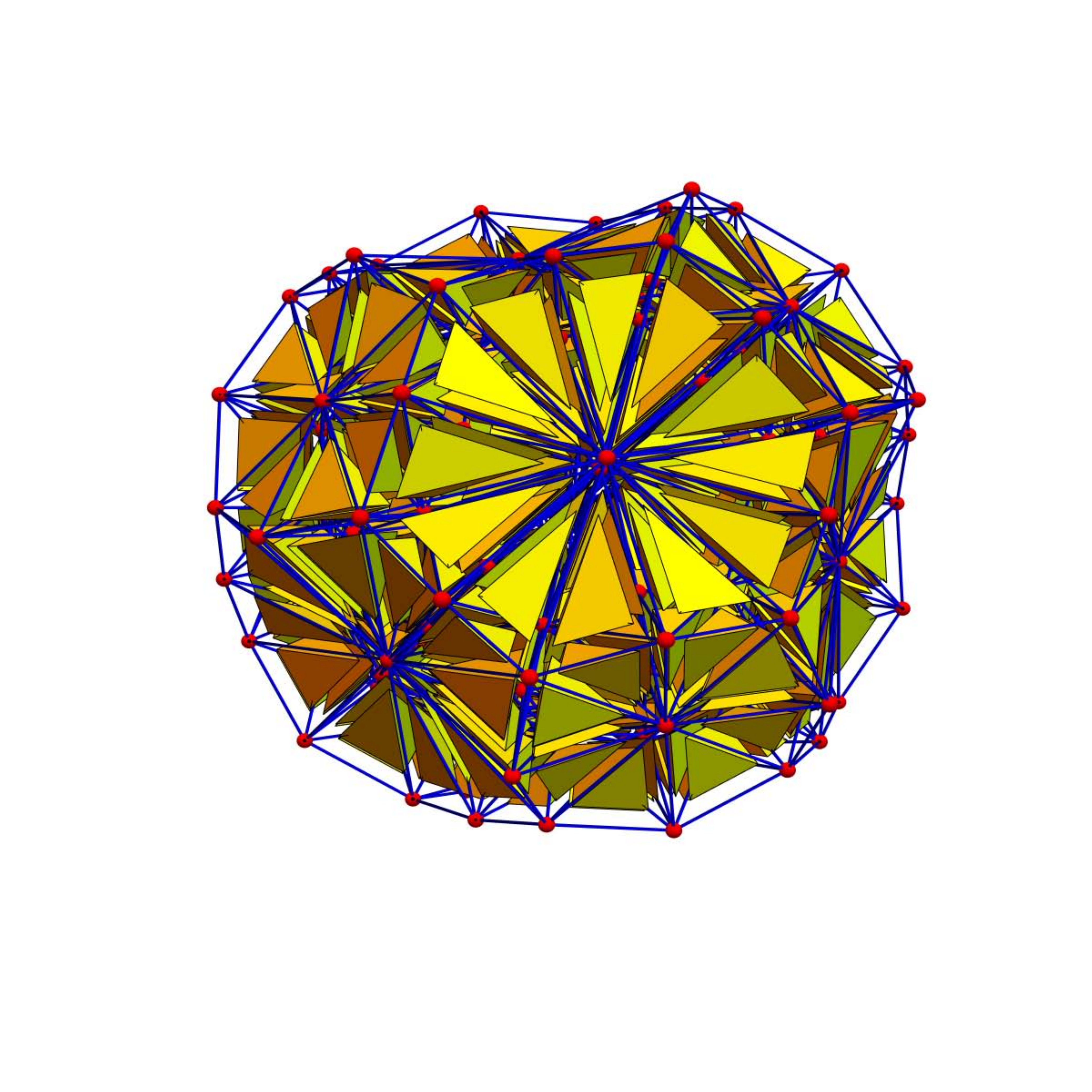}}
\caption{
Barycentric refinements of the tetrahedron $G=K_4$. 
The number of vertices grows like $O(24^m)$.
We know $v_4(G_m)=24^m$ but we don't have an
explicit formula yet for the number of vertices $v_0(G_m)$ if $G_0=K_4$. 
}
\end{figure}

Iterated barycentric subdivisions have long been used in topology and homology theory. 
The notion is mostly defined in geometric realizations used to refine a given triangulation of 
a simplex in Euclidean space.
Note that we don't look at any embedding in an Euclidean space but look at Barycentric 
refinement in a completely combinatorial setup. 
There are also probabilistic connections \cite{DiaconisMiclo,Hough}, 
as iterated barycentric subdivisions of a triangle define a random walk on ${\rm SL}(2,R)$ and
are linked to concrete geometric properties like angles triangles. Such work could suggest that unlike in the
one dimensional case, in higher dimensions, higher dimensional group actions and random walks 
could play a role when studying the spectrum. In graph theory, where no embedding into an Euclidean 
space is required, barycentric subdivisions are known in the context of flag complexes 
or Whitney complexes in particular. We are not aware of any work on the nature of the spectrum of 
such graphs. \\

The inductive dimension of a graph $G$ is defined inductively as the average of the dimensions of all unit 
spheres, incremented by $1$ \cite{elemente11,randomgraph}. 
A graph $G$ is a geometric $d$-dimensional graph, if every unit sphere is a
$(d-1)$-sphere or a $(d-1)$-ball and the boundary set of vertices with a ball unit sphere 
form a $(d-1)$-dimensional geometric graph without boundary. 
Geometric graphs play the role of manifolds with or without boundary. 
The definition of graph theoretical spheres and homotopy are both due to Evako.
We have defined spheres independently in \cite{KnillEulerian,KnillProduct} but realized in 
\cite{KnillJordan} that it is already in the work of Ivashchenko=Evako 
\cite{I94a,I94,Evako1994,Evako2013}. The definition of these Evako spheres is based on Ivashchenko homotopy 
which is homotopy notion inspired by Whitehead \cite{Whitehead} but defined for graphs. 
A simplification done in \cite{CYY} makes this homotopy much more practical, like for Lusternik-Schnirelman
theory for graphs \cite{josellisknill}, to define what a homeomorphism for graph is \cite{KnillTopology} or 
for graph colorings \cite{knillgraphcoloring,KnillEulerian}, where constructing an Eulerian 3 sphere
having a given 2 sphere embedded could explain why 4 colors are enough to color a planar graph. \\

We proved in \cite{KnillProduct} the inequality ${\rm dim}(G \times H) \geq {\rm dim}(G) + {\rm dim}(H)$,
which holds for all finite simple graphs $G,H$. 
This implies in a special case ${\rm dim}(G_1) \geq {\rm dim}G)$, still for all finite simple graphs $G$.
In the geometric case, the dimension of $G_1$ is the same as the dimension of $G$. 
The barycentric refinement process $G \to G_1$ honors geometric graphs, graphs for which the unit spheres are spheres. 
Starting with a $d$-simplex $G_0=K_{d+1}$, successive refinements $G_m$ 
produce geometric graphs which are $d$-balls, for which the boundary is a $(d-1)$-sphere. In the 
case of a triangle $G = K_3$ for example, the graph $G_1$ is a wheel graph with $6$ spikes. It has one interior point and
$6$ boundary point. Successive refinements produce larger and larger balls for which the boundaries are
spheres. \\

If $G$ is a geometric $d$-dimensional graph, then $G_1$ is homeomorphic to $G$ in the sense of
\cite{KnillTopology}: there is an open cover $U_j$ of $G_1$ such that the nerve graph of $G_1$ is $G$. 
To see this, start with unit disks $V_j$ centered at the vertices of $G_1$ which originally were vertices. Then add
to $V_j$ the $d$-simplices on the outside which contain $(d-1)$ simplex in $V_j$ and a vertex simplex $z$. 
This produces $U_j$. Two different $U_j$ intersect in a $d$-simplex, if and only the original vertices were connected.
Homeomorphic graphs have the same topological properties like cohomology, fundamental groups or connectivities. 
We furthermore know that for any starting point $G$, the dimension of $G_m$ converges to the dimension 
of the largest complete subgraph in $G$ \cite{KnillProduct} and the reason is the same as for the spectral 
convergence studied here: the largest dimensional simplices dominate. 

\section{Counting} 

The refinements $G_m$ grow exponentially fast already in the $1$-dimensional case, where the number of 
edges doubles exactly in each step.

\begin{lemma}
a) For $G=K_2$, $v_0(G_m) 1+2^m$ and $v_1(G_m)=2^m$. \\
b) For $G=K_3$, the number of vertices $v_0(G_m) = 1-3(2^{m-1}+2^m+2^{m-1} 3^{m-1})$,
the number of edges is $v_1(G_m) = 3(-2^{m-1}+2^m+2^{m-1} 3^m)$ and the number of triangles
$v_2(G_m)=6^m$.\\
c) For $G=K_d$, we have $v_d(G_m) = ((d+1)!)^m$ and $v_0(G_m) \geq ((d+1)!)^{m-1}$. 
\end{lemma}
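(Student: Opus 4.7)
The plan is to set up a linear transfer matrix describing how the $f$-vector $(v_0, v_1, \ldots, v_d)$ evolves under one barycentric refinement, and then read off each claim either from the diagonal entry alone or from a short linear recursion.

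Every $k$-simplex of $G_1$ is, by definition, a strict chain $\sigma_0 \subsetneq \sigma_1 \subsetneq \cdots \subsetneq \sigma_k$ of non-empty complete subgraphs of $G$. Partitioning such chains according to the dimension $j$ of the top element $\sigma_k$, one obtains
\[
 v_k(G_1) \;=\; \sum_{j \geq k} M(k,j)\, v_j(G),
\]
where $M(k,j)$ is the combinatorial number of chains of $k+1$ non-empty faces of an abstract $j$-simplex ending at the whole simplex; it depends only on $k$ and $j$ by symmetry. Two entries of the resulting upper-triangular transfer matrix $A=(M(k,j))$ are immediate: $M(0,j)=1$ for every $j$, and $M(j,j)=(j+1)!$ because a maximal flag in a $j$-simplex corresponds to an ordering of its $j+1$ vertices.

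Part (c) falls out of upper triangularity alone. The top row of the recursion reads $v_d(G_m) = (d+1)!\, v_d(G_{m-1})$, and iteration starting from $v_d(G_0)=1$ gives $v_d(G_m) = (d+1)!^{m}$. For the lower bound, every non-empty complete subgraph of $G_{m-1}$ contributes a vertex of $G_m$, so
\[
 v_0(G_m) \;=\; \sum_{j \geq 0} v_j(G_{m-1}) \;\geq\; v_d(G_{m-1}) \;=\; (d+1)!^{m-1}.
\]

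Parts (a) and (b) reduce to routine solutions of small upper-triangular systems. For $G=K_2$ the relevant matrix is $\bigl(\begin{smallmatrix} 1 & 1 \\ 0 & 2 \end{smallmatrix}\bigr)$, so $v_1(G_m)=2^m$ is immediate and $v_0(G_m) = v_0(G_{m-1}) + 2^{m-1}$ telescopes from $v_0(G_0)=2$ to $1+2^m$. For $G=K_3$ one further computes $M(1,1)=2$, $M(1,2)=6$, $M(2,2)=6$, producing a $3\times3$ upper-triangular recurrence with spectrum $\{1,2,6\}$. The top row gives $v_2(G_m)=6^m$; the middle and bottom rows are inhomogeneous linear recurrences whose particular solutions are a multiple of $6^m$ (respectively a combination of $2^m$ and $6^m$), solved by the standard ansatz and matched to the initial data $(v_0,v_1,v_2)(G_0)=(3,3,1)$ to produce the claimed closed forms. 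The only genuinely combinatorial input is the diagonal value $M(j,j)=(j+1)!$; the rest is bookkeeping with linear recurrences of small order, so no real obstacle is expected beyond algebraic care in fitting the particular-solution coefficients in (b) against the initial $f$-vector of $K_3$.
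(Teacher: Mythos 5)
Your proposal is correct and takes essentially the same approach as the paper: both set up the upper-triangular transfer matrix for the $f$-vector under barycentric refinement, read off $v_d(G_m)=((d+1)!)^m$ from the diagonal entry together with $v_0(G_m)=\sum_j v_j(G_{m-1})$ for part (c), and solve the resulting small linear recurrences (spectrum $\{1,2,6\}$ for $K_3$) for parts (a) and (b). The only cosmetic difference is that the paper encodes the $v_0$-row as the affine rule $v_0\mapsto 1+2v_1$, which agrees with your combinatorial rule $v_0\mapsto v_0+v_1+v_2$ on this orbit because $\chi=1$, and both yield the same closed forms.
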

\begin{proof}
a) We have $(v_0(0),v_0(1),v_0(2),\dots) = (2,3,5,9,17,33, \dots )$ as $v_0(m+1) = 2 v_0(m)-1$. Furthermore
$v_1(m)+v_0(m)=1$. \\
b) From the affine recurrence given by
$A=\left[ \begin{array}{cccc} 0&2&0&1 \\ 0&2&6&0 \\ 0& 0&6&0 \\ 0&0&0&1 \\ \end{array} \right]$
and $S^{-1} A S= {\rm Diag}(6,2,1,0)$ with 
$S=\left[ \begin{array}{cccc} 1 & 1 & 1 & 1 \\ 3 & 1 & 0 & 0 \\ 2 & 0 & 0 & 0 \\ 0 & 0 & 1 & 0 \\ \end{array} \right]$
we get the exact simplex data formula
$$ S B^m S^{-1} \left[\begin{array}{c}3 \\3\\1\\1 \end{array} \right] 
  = \left[ \begin{array}{c} 1-3 \cdot 2^{m-1}+3 \cdot 2^m+ 2^{m-1} 3^{m}) \\ 
                            3(-2^{m-1}+2^m+2^{m-1} 3^m) \\
                            6^m \\ 1  \end{array} \right] $$
which in the first coordinate gives the above sequence $3,7,25,\dots$ which grows like $6^m/2$. 
$$ (v_0(0),v_0(1),v_0(2),\dots) = (3,7,25,121,673,3937,23425,\dots) \; . $$
c) follows from the fact that in each step, a $d$-dimensional simplex is cut into $(d+1)!$ simplices. 
This shows $v_d(G_m) = (d+1)! v_d(G_{m-1})$. 
Furthermore $v_0(G_m) = \sum_{j} v_j(G_{m-1}) \geq v_d(G_{m-1})$. 
\end{proof}

In \cite{Snyder}, a combinatorial interpretation of the eigenvalues $1,2,6$ are given.
We hope it will be possible to get also exact formulas for $v_k(G_m)$ if $G_0=K_{d+1}$ if $d \geq 3$.
To do so, one could use the handshaking relation (see \cite{cherngaussbonnet})
$$  \sum_{x \in V} V_k(x)=v_{k+1} (k+2)   \; , $$ 
where $V_k(x)$ are the number of $k$-dimensional complete subgraphs $V_{k+1}$ in the unit 
sphere $S(x)$. In the case $k=1$, this is the Euler handshake giving the number $v_1$
of edges in terms of the vertex degrees $V_0(x)$ of the graph. In the graph $G_m$,
the spheres are all joins of smaller-dimensional spheres and in particular come in finitely 
many types. What one would have to do first is to give formulas 
for $v_k(S_{d,l,m}))$ for these $d$-spheres of type $l$ in level $m$ and then get 
$v_k(G_m)$ by the Handshaking lemma using the fact that we know the number of these
spheres. For example, in the case $d=2$, only degrees $4,6,8,12,16,24,..$ can occur in the 
interior and degrees $3,5,9,17, \dots$ at the boundary. It looks like a nice but not hopeless
combinatorial challenge to analyze this. \\

{\bf Some numbers.}\\
1) In the one dimensional case $G=K_2$, where $v_0(G_m)=2^m+1$ and $v_1(G_m)=2^m$, we have 
$$  v_0(G_{m+1})/v_0(G_m) \to 2=2! \; . $$
2) In the two dimensional case $G=K_3$, we have everything explicit. In particular,
$v_0(G_m) = 1-3 2^{m-1}+3 2^m+ 2^{m-1} 3^{m})$ which satisfies 
$$  v_0(G_{m+1})/v_0(G_m) \to 6=3! \; . $$
3) In the three dimensional case $G=K_4$, we have the following data for \\
$\vec{v} = (v_0,v_1,v_2,v_3)$: \\
$\vec{v}(G) = (4, 6, 4, 1)$ \\
$\vec{v}(G_1) = (15, 50, 60, 24)$,  \\
$\vec{v}(G_2) = (149,796,1224,576)$, \\ 
$\vec{v}(G_3) = (2745,17000,28080,13824)$,\\
$\vec{v}(G_4) = (61649,\dots,\dots,331776)$.  \\
We don't have a closed formula for the sequence $v_0(G_m)$ starting with
$4,15,149,2745,\dots$. We only know that $v_0(G_{m+1})/v_0(G_m) \to 24=4!$.  \\
4) In the four dimensional case $G=K_5$, we have computed so far:\\
$\vec{v}(G)=(5,10,10,5,1)$, \\
$\vec{v}(G_1)=(31,180,390,360,120)$, \\
$\vec{v}(G_2)=(1081,11340,33300,37440,14400)$. \\
$\vec{v}(G_3)=(97561,\dots,\dots,\dots,1728000)$. \\
Also here, we only know $v_0(G_{m+1})/v_0(G_m) \to 120=5!$ and have no closed formula
for the sequence $v_0(G_{m})$ starting with $5,31,1081,97561,\dots$ yet.

\section{Spectral distribution functions}

The Laplacian of a graph is the matrix $L=B-A$, where $A$ is the adjacency matrix and
$B$ is the diagonal degree matrix. For a complete graph $K_{d+1}$ for example, the
matrix is everywhere $-1$ except for the diagonal, where the entries are $d$. As
subtracting $d+1$ produces a matrix $B$ with $d$ dimensional kernel and trace $-d-1$,
the eigenvalues of the Laplacian of the complete graph $K_{d+1}$ is $d+1$ with multiplicity
$d$ and $0$ with multiplicity $1$. For the house graph $G$ for example, where a triangle $K_3$
(roof) is glued on top of a square $C_4$ leading to a graph of dimension $22/12$ as it is 
mixture of one and two dimensional components,  the Laplacian is 
$$ L = \left[
                 \begin{array}{ccccc}
                  2 & -1 & 0 & -1 & 0 \\
                  -1 & 3 & -1 & 0 & -1 \\
                  0 & -1 & 3 & -1 & -1 \\
                  -1 & 0 & -1 & 2 & 0 \\
                  0 & -1 & -1 & 0 & 2 \\
                 \end{array}
                 \right]  \; . $$

While experimenting with nodal surfaces for eigenfunctions of the Laplacian on discrete $3$-spheres, where 
we expect the nodal surface of the ground state to be a $2$-sphere in general, we also computed
eigenvalues of refinements  and got interested on how the eigenvalue distribution depends on refinements. 
While we expected some limit to be reached, we would have thought that the limit to depend on the topology
of the initial graph. This is not the case. Starting with a $2$-torus or a 2-sphere produces the
same limiting function, as the figures illustrate and as we will show below. 
We can not push numerical experiments far, as the number of vertices of the refinements grows so fast. 
But the theory confirms this. The proof is rather elementary using a Lidskii lemma. Here is the
main result: 

\begin{thm}[Central limit of barycentric refinements]
The sequence $F_{G_m}(x)$ converges in $L^1([0,1])$ 
to a limiting distribution function $F_d(x)$ 
which only depends on the dimension $d$ of the largest 
complete subgraph of $G$.
\end{thm}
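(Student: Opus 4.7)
My strategy is to reduce the general case to $G=K_{d+1}$ by exploiting locality of barycentric refinement and the dominance of maximal-dimensional simplices in the vertex count; the whole argument is driven by a Lidskii/Cauchy interlacing estimate.

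The first step is a decomposition of $V(G_m)$. Let $S_1,\dots,S_N$ be the maximal $d$-simplices of $G$, so $N=v_d(G)$; each refinement $(S_i)_m$ is canonically isomorphic to $(K_{d+1})_m$, and distinct $(S_i)_m$'s meet along refinements of lower-dimensional faces. Let $V_{\mathrm{bd}}\subset V(G_m)$ be the set of vertices that either lie in some intersection $(S_i)_m\cap(S_j)_m$ with $i\ne j$, or come from a simplex of $G$ not contained in any maximal $d$-simplex. Using the counting lemma of Section~2 recursively, both contributions are $O(d!^{\,m})$, whereas $|V(G_m)|=\Theta((d+1)!^{\,m})$, so $|V_{\mathrm{bd}}|/|V(G_m)|=O((d+1)^{-m})$.

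Next I would apply Cauchy interlacing. The principal submatrix of $L(G_m)$ indexed by $V(G_m)\setminus V_{\mathrm{bd}}$ is block-diagonal with $N$ identical blocks, each equal to the principal submatrix of $L((K_{d+1})_m)$ on its own interior. Cauchy interlacing says that removing $r$ rows and columns from a Hermitian $n\times n$ matrix moves the normalized eigenvalue counting function by at most $r/n$ in sup norm. Applying this both to $L(G_m)$ (removing $V_{\mathrm{bd}}$) and inside $(K_{d+1})_m$ (removing its boundary) yields
\[
 \bigl\| N_{L(G_m)} - N_{L((K_{d+1})_m)} \bigr\|_\infty = O\bigl((d+1)^{-m}\bigr),
\]
where $N_L$ denotes the normalized eigenvalue counting function of $L$. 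Translating this sup-norm bound on counting functions into an $L^1$-bound on the quantile functions $F_{G_m}$ and $F_{(K_{d+1})_m}$ costs a factor of the maximal eigenvalue, which is at most $2\Delta(G_m)=O(2^m)$ (as the paper notes, the maximal vertex degree doubles per refinement). For $d\ge 2$ we have $2/(d+1)<1$, and for $d=1$ the spectrum is uniformly bounded (since $(K_2)_m$ is a path), so $\|F_{G_m}-F_{(K_{d+1})_m}\|_{L^1}\to 0$ exponentially in $m$. Finally, applying the same bound with $G$ replaced by $(K_{d+1})_k$ for each fixed $k$ (which is still $d$-dimensional) gives $\|F_{(K_{d+1})_{k+m}}-F_{(K_{d+1})_m}\|_{L^1}\to 0$ as $m\to\infty$ uniformly in $k$. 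Thus $\{F_{(K_{d+1})_m}\}$ is Cauchy in $L^1([0,1])$, converging to some $F_d$ depending only on $d$, and the previous step yields $F_{G_m}\to F_d$ for every $G$ with clique number $d+1$.

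The hardest part is the $L^\infty\to L^1$ conversion, which requires controlling the growth of the maximal Laplacian eigenvalue of $G_m$. The naive estimate (max degree doubles per refinement) is tight enough only for $d\ge 2$; for $d=1$ one must separately use that $(K_2)_m$ is a path with spectrum in $[0,4]$. A secondary but routine issue is handling graphs not pure of dimension $d$: vertices coming from lower-dimensional strata simply contribute to $V_{\mathrm{bd}}$ at an exponentially small rate, so the argument extends without modification.
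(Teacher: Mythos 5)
Your argument reaches the same conclusion by the same overall architecture as the paper (cut $G_m$ into chambers isomorphic to $(K_{d+1})_m$, show the walls carry an exponentially vanishing fraction of vertices, deduce a Cauchy sequence for the simplex and then universality), but the key spectral-perturbation tool is genuinely different. The paper runs everything through a Lidskii-type lemma, $\sum_j|\alpha_j-\beta_j|\le\sum_{i,j}|A-B|_{ij}$, which bounds the $L^1$ distance of the quantile functions directly by $\frac1n\sum_{i,j}|A-B|_{ij}$; the price is that the two matrices must have the same size, so one has to account for the entries changed by the cut, and since wall vertices have degree $O(2^m)$ this yields the rate $O(d!^m\,2^m/(d+1)!^m)=O((2/(d+1))^m)$. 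You instead use Cauchy interlacing, which is rank-based and therefore indifferent to the size of the entries and to the mismatch between $G_m$ and a disjoint union of chambers, giving a clean $O((d+1)^{-m})$ bound on the normalized counting functions; you then pay the $O(2^m)$ factor in the Wasserstein conversion from counting functions to quantile functions. The two routes land on the same rate and the same $d=1$ caveat (which the paper sidesteps by explicit diagonalization of circular graphs and you handle by the uniform spectral bound for triangle-free refinements). Your version has the merit of making the size-mismatch issue disappear and of isolating exactly where the degree growth enters; the paper's version gives the $\ell_1$ eigenvalue bound in one step without passing through the integrated density of states.

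One step needs repair: your bound for $\|F_{(K_{d+1})_{k+m}}-F_{(K_{d+1})_m}\|_{L^1}$ is \emph{not} uniform in $k$ as stated, because the Wasserstein conversion must use the larger of the two spectral radii, namely $\lambda_{\max}((K_{d+1})_{k+m})=O(2^{k+m})$, so the estimate you actually get is $O(2^k(2/(d+1))^m)$, which blows up along $k=m$ unless $d\ge 4$. The fix is standard and cheap: apply your bound only to consecutive refinements ($k=1$), obtaining $\|F_{(K_{d+1})_{m+1}}-F_{(K_{d+1})_m}\|_{L^1}=O((2/(d+1))^m)$, which is summable for $d\ge 2$, and then telescope to get the Cauchy property with a tail bound uniform in $k$. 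With that adjustment the proof is complete.
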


The convergence is uniform on compact subsets in $[0,1)$ and exponentially fast in $L^1([0,1])$ norm
or in the $L^{\infty}([a,b])$ norm for every compact interval $[a,b] \subset (0,1)$. 
In the case $d=1$, where we can compute the spectrum for circular graphs, we have an
explicit limiting function $F$.  We first show that the limit $F_{G_m}(x)$ exists for each 
graph and then prove that the limit is {\bf universal} and only depends on the dimension $d$ of
the largest complete subgraph of $G$. 
The fact that a sequence of monotone functions $f_m$ in $L^1([0,1])$ which converges in $L^1([0,1])$ 
converges also uniformly on every closed interval $[a,b]$ with $0<a<b<1$ is known in real 
analysis (e.g. \cite{LewisShisha}). That the convergence can not be pushed 
to the boundary point $x=1$ follows from: 

\begin{lemma}
For $d \geq 2$, the function $F_d$ is not in $L^{\infty}([0,1])$, as the 
values $F_{G_m}(1)$ grow exponentially with $m$.
\end{lemma}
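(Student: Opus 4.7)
The plan is to establish the exponential blow-up of $F_{G_m}(1) = \lambda_{\max}(L(G_m))$ by a concrete degree-growth argument at a single vertex, then deduce the $L^\infty$ conclusion. The first reduction is the standard Rayleigh-quotient bound: testing the quadratic form against the function $f(v_0) = -\Delta$, $f(w) = 1$ for each neighbor $w$ of a maximum-degree vertex $v_0$, and $f = 0$ elsewhere gives $\lambda_{\max}(L(G)) \geq \Delta(G) + 1$ for any graph with at least one edge. It therefore suffices to show that $\Delta(G_m)$ grows exponentially when $d \geq 2$.

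Pick a vertex $v \in V(G)$ contained in a $(d+1)$-clique, so that the $d$ other vertices of that clique form an induced copy of $K_d$ inside the unit sphere $S_G(v)$. The key structural fact is that barycentric refinement commutes with taking the sphere at an unchanged vertex, namely
$$S_{G_m}(v) \cong (S_G(v))_m,$$
proved via the bijection $\tau \mapsto \tau \setminus \{v\}$ between complete subgraphs of $G$ strictly containing $v$ and complete subgraphs of $S_G(v)$, then iterated. Since refinement preserves induced-subgraph inclusion, $(K_d)_m$ is an induced subgraph of $S_{G_m}(v)$. The degree of $v$ in $G_{m+1}$ equals the number of nonempty complete subgraphs of $S_{G_m}(v)$, via the bijection $S \mapsto \{v\} \cup S$ with the complete subgraphs of $G_m$ strictly containing $v$. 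Combining these observations with the counting lemma applied to the $(d-1)$-simplex $K_d$,
$$\Delta(G_{m+1}) \;\geq\; \deg_{G_{m+1}}(v) \;\geq\; v_0\bigl(S_{G_m}(v)\bigr) \;\geq\; v_0\bigl((K_d)_m\bigr) \;\geq\; (d!)^{m-1}.$$
For $d \geq 2$ we have $d! \geq 2$, so $F_{G_m}(1) \geq \Delta(G_m) + 1 \geq (d!)^{m-2} + 1$ blows up exponentially in $m$. Because each $F_{G_m}$ is a nondecreasing step function whose essential supremum equals $F_{G_m}(1)$, the pointwise right-endpoint limit is $+\infty$, and hence $F_d$ is not in $L^\infty([0,1])$.

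The main bookkeeping hurdle is verifying the sphere-refinement identity $S_{G_m}(v) \cong (S_G(v))_m$, which is mechanical but should be written out carefully. A subtler question that this lemma leaves open is whether $F_d$ is \emph{essentially} unbounded, i.e.\ whether a positive-measure neighborhood of $x = 1$ sees the blow-up; this would require producing a positive density of exponentially large eigenvalues, whereas the degree argument here yields only a single such eigenvalue per refinement.
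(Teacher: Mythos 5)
Your proof is correct and follows the same route as the paper: bound $\lambda_{\max}(L(G_m)) = F_{G_m}(1)$ from below by the maximal vertex degree and show that the latter grows exponentially for $d \geq 2$. You are in fact more complete than the paper, whose proof merely asserts that ``the maximal degree increases indefinitely'' (and whose justifying sentence is garbled), whereas you derive the exponential growth concretely from the sphere-refinement identity $S_{G_m}(v) \cong (S_G(v))_m$ together with the counting lemma for $(K_d)_m$; the caveat you flag at the end --- that one exponentially large eigenvalue per level does not by itself show the $L^1$-limit $F_d$ is \emph{essentially} unbounded near $x=1$ --- is a genuine subtlety, but it is equally unaddressed in the paper's own proof, so your argument matches and slightly exceeds the paper's level of rigor.
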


\begin{proof}
The Courant-Fischer estimate (which is a special case of the Schur inquality in linear algebra)
shows that $\lambda_{n-1} \geq {\rm max}_x({\rm deg}(x))$, if $n$ is the number of vertices.
Therefore, for every $x$: then ${\rm max} (v,Lv)/(v,v) = L_{xx} = d$. Since the maximal 
degree increases indefinitely, the maximal eigenvalue $\lambda_{n-1} = F_{G_m}(1)$ grows,
where $n=n(m)$ is the number of vertices of $G_m$. 
\end{proof} 

We do not know yet about the nature of the limiting functions in the case $d \geq 2$:

\question{
What is the spectral nature of the limiting density of states $\mu = F'$?
Does it have a discrete or singular part for $d \geq 2$? 
}

\section{The one dimensional case}

\begin{propo}
For $d=1$, the limiting function is $4 \sin^2(\pi x)$ so that the limiting
function $F$ is smooth. 
\end{propo}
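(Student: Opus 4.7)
By Theorem~1, the limiting function $F_1$ is universal in dimension $d=1$: it does not depend on the particular $1$-dimensional starting graph. The strategy is therefore to pick the most convenient representative, compute the limit explicitly there, and invoke universality. A cycle $C_n$ is ideal because its Laplacian is a circulant matrix with fully explicit spectrum.

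The first step is the combinatorial observation that the barycentric refinement of a cycle is again a cycle, $(C_n)_1 = C_{2n}$. Indeed, the vertices of $(C_n)_1$ are the $n$ old vertices together with the $n$ edges, and the adjacency rule ``one contained in the other'' joins each old vertex to its two incident edges; tracing around the graph one sees a cycle of length $2n$. By induction $G_m = C_{n \cdot 2^m}$.

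Next, I would invoke the explicit spectrum of $C_N$: the Laplacian $2I - S - S^{-1}$ is diagonalised by the discrete Fourier transform, so its eigenvalues are $\mu_k = 2 - 2\cos(2\pi k/N) = 4\sin^2(\pi k/N)$ for $k = 0,1,\dots,N-1$, with $k$ and $N-k$ sharing an eigenvalue. Sorting ascending pairs them up, so the $j$-th eigenvalue is $\lambda_j = 4\sin^2(\pi \lceil j/2\rceil/N)$. Setting $N = n\cdot 2^m$ and evaluating at $x \in (0,1)$ gives
\[
F_{G_m}(x) = \lambda_{[Nx]} = 4\sin^2\!\left(\frac{\pi\lceil [Nx]/2\rceil}{N}\right) \longrightarrow 4\sin^2(\pi x/2)
\]
as $m\to\infty$, a smooth function on $[0,1]$; by universality this is $F_1$. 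Since each $F_{G_m}$ is monotone nondecreasing and the pointwise limit is continuous on the compact interval $[0,1]$, the classical fact that monotone pointwise convergence to a continuous limit is automatically uniform upgrades this to uniform convergence on all of $[0,1]$, which in turn gives the $L^1$ convergence claimed by Theorem~1.

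\textbf{Main obstacle and remark.} There is essentially no obstacle here: once $(C_n)_1 = C_{2n}$ is recognised, everything reduces to Fourier analysis on $\mathbb{Z}/N$. What is worth highlighting is the structural renormalisation $T(F(k/N)) = F(2k/(2N))$ with $T(z) = 4z - z^2$, which follows from the identity $4\sin^2\theta\,(1-\sin^2\theta) = \sin^2(2\theta)$; this explains the universality and identifies the limiting density $F_1'$ as the equilibrium measure on the Julia set $[0,4]$ of $T$, namely the arc-sine distribution, matching the description given in the abstract.
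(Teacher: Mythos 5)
Your proposal follows essentially the same route as the paper: reduce to circular graphs (using $(C_n)_1=C_{2n}$ and universality), diagonalize the circulant Laplacian by the discrete Fourier transform to get the eigenvalues $2-2\cos(2\pi k/N)=4\sin^2(\pi k/N)$, and pass to the limit in the definition of $F_{G_m}$. The one place where you genuinely depart from the paper is also the place where you are more careful, and in fact correct: each eigenvalue $4\sin^2(\pi k/N)$ occurs with multiplicity two (from $k$ and $N-k$), so the \emph{sorted} sequence is $\lambda_j=4\sin^2(\pi\lceil j/2\rceil/N)$ and the limit of $\lambda_{[Nx]}$ is $4\sin^2(\pi x/2)$, not $4\sin^2(\pi x)$ as the proposition states. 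The paper's own proof skips the sorting step and writes down $4\sin^2(\pi x)$, which cannot be right as stated since a sorted-eigenvalue distribution function must be nondecreasing while $4\sin^2(\pi x)$ decreases on $[1/2,1]$; note that the abstract of the paper gives $4\sin^2(\pi x/2)$, agreeing with you. Your final upgrade to uniform convergence on $[0,1]$ via the P\'olya-type theorem for monotone functions with continuous limit is also sound and slightly stronger than what the paper records. Two small blemishes: the identity in your closing remark should read $T(F(x))=F(2x)$ (equivalently $T(4\sin^2\theta)=4\sin^2(2\theta)$); as written, $F(2k/(2N))$ is just $F(k/N)$ and says nothing. Also, $(C_n)_1=C_{2n}$ requires $n\geq 4$, since $C_3=K_3$ is two-dimensional; this does not affect the argument because you may start from any $C_n$ with $n\geq 4$.
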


\begin{proof}
The eigenvalues of the Laplacian of circular graphs $C_n$ are explicitly known using discrete
Fourier transform conjugating the Laplace operator $L$ on $l_2(Z_n)$ to the diagonal
matrix with entries $2-2 \cos(2 \pi k/n) = 4 \sin^2(\pi k/n)$ so that 
$$ \lambda_n = 4 \sin^2(\pi k/n)  \; . $$ 
By definition, we have  $F_{G_m}(x) \to 4 \sin^2(\pi x)$. 
\end{proof}

\begin{propo}
For $d=1$, the Laplacian of the circular graphs 
satisfies the renormalization map  $L(G_{m+1}) = \phi(T(L(G_m)))$,
where $T(x)=4x-x^2$ is the quadratic map with Julia set $[0,4]$ and 
$\phi$ restricts the matrix to an invariant subspace of half the dimension. 
\end{propo}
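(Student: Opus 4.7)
The plan is to reduce the identity to an explicit spectral computation using the Fourier diagonalization of the Laplacian on the cyclic graph. Since $G_m$ is a circular graph $C_n$ and its barycentric refinement is $G_{m+1}=C_{2n}$, both Laplacians are circulant and simultaneously diagonalized by the discrete Fourier bases of $\mathbb{Z}/N$ ($N=n,2n$), with eigenvalues $\lambda_k^{(N)}=2-2\cos(2\pi k/N)=4\sin^2(\pi k/N)$, $k=0,\ldots,N-1$, and eigenvectors $\psi_k^{(N)}(j)=e^{2\pi ijk/N}/\sqrt{N}$. The whole claim will follow once one checks that $T$ implements the passage $\lambda_k^{(2n)}\mapsto \lambda_k^{(n)}$ on the Fourier side.

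The core computation I would perform is the trivial double-angle identity
\[
T(4\sin^2\alpha)=4\sin^2\alpha\,(4-4\sin^2\alpha)=16\sin^2\alpha\cos^2\alpha=4\sin^2(2\alpha).
\]
Setting $\alpha=\pi k/(2n)$ gives $T(\lambda_k^{(2n)})=4\sin^2(\pi k/n)$, which equals $\lambda_k^{(n)}$ for $0\leq k<n$ and $\lambda_{k-n}^{(n)}$ for $n\leq k<2n$. Thus each eigenvalue of $L(C_n)$ arises with exactly doubled multiplicity as an eigenvalue of $T(L(C_{2n}))$; equivalently, $T^{-1}$ implements the spectral refinement from level $m$ to level $m+1$, and the inclusion $\mathrm{Spec}(L(G_m))\subset T(\mathrm{Spec}(L(G_{m+1})))$ is an equality with multiplicity $2$.

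To lift this to an operator identity, I would let $V\subset\mathbb{C}^{2n}$ be the span of the first $n$ Fourier modes $\psi_k^{(2n)}$, $0\leq k<n$. Since $T(L(C_{2n}))$ is a polynomial in $L(C_{2n})$ it is diagonal in the Fourier basis of $\mathbb{Z}/(2n)$, so $V$ is invariant and the discrete Fourier transform identifies $V$ isometrically with $\ell^2(\mathbb{Z}/n)$. Restricted to $V$, the operator $T(L(C_{2n}))$ has diagonal entries $4\sin^2(\pi k/n)$, $k=0,\ldots,n-1$, which is exactly $L(C_n)$ expressed in the Fourier basis of $\mathbb{Z}/n$. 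Taking $\phi$ to be this restriction to a half-dimensional invariant subspace (composed with the Fourier identification) gives the renormalization identity of the proposition; the cleanest reading of the operator equation is $L(G_m)=\phi(T(L(G_{m+1})))$, which is the honest dimension-preserving form of the statement as written.

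The Julia set remark comes for free from the same trig identity: $z=4\sin^2(\pi x/2)$ conjugates $T$ to the angle-doubling map $x\mapsto 2x\bmod 1$ on $[0,1]$, whose Julia set is the whole circle, so the filled Julia set of $T$ is $[0,4]$; Lebesgue measure on the circle pushes forward under this conjugacy to the arc-sine distribution on $[0,4]$, matching the $d=1$ limit density of states of the preceding proposition. The only real obstacle in the proof is notational: making precise what $\phi$ means given the dimension mismatch between $L(G_m)$ and $L(G_{m+1})$. Once one commits to the Fourier realization above, the renormalization is a one-line trigonometric check and the Julia-set interpretation is manifest.
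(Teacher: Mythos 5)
Your proof is correct and follows essentially the same route as the paper, which explicitly reduces the identity to the double-angle formula $T(4\sin^2\alpha)=4\sin^2(2\alpha)$ in the Fourier picture; the only cosmetic difference is that you realize $\phi$ as restriction to the span of the low Fourier modes, whereas the paper's worked example ($C_8\to C_4$) uses the even/odd vertex sublattices of $T(K)=2I-S^2-S^{-2}$. You are also right to flag the direction of the identity: the paper's own verification computes $\phi(T(L(G_{m+1})))=L(G_m)$, consistent with your dimension-preserving reading rather than with the indices as literally printed in the proposition.
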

\begin{proof}
This is a direct verification for matrices. In the Fourier picture, it
becomes a double angle formula of trigonometry. Algebraically, it is an 
identity for Jacobi matrices, the Laplacian of the free particle on the circular
graph. For $G=C_4$ for example, we have the Laplacian
$$ L = \left[
                 \begin{array}{cccc}
                  2 & -1 & 0 & -1 \\
                  -1 & 2 & -1 & 0 \\
                  0 & -1 & 2 & -1 \\
                  -1 & 0 & -1 & 2 \\
                 \end{array}
                 \right] \; . $$
Now take the Laplacian $K$ of $C_8$ and form 
$T(K) = 4 K - K^2$. This gives
$$ K = \left[
                  \begin{array}{cccccccc}
                   2 & 0 & -1 & 0 & 0 & 0 & -1 & 0 \\
                   0 & 2 & 0 & -1 & 0 & 0 & 0 & -1 \\
                   -1 & 0 & 2 & 0 & -1 & 0 & 0 & 0 \\
                   0 & -1 & 0 & 2 & 0 & -1 & 0 & 0 \\
                   0 & 0 & -1 & 0 & 2 & 0 & -1 & 0 \\
                   0 & 0 & 0 & -1 & 0 & 2 & 0 & -1 \\
                   -1 & 0 & 0 & 0 & -1 & 0 & 2 & 0 \\
                   0 & -1 & 0 & 0 & 0 & -1 & 0 & 2 \\
                  \end{array}
                  \right] \; . $$
We see that that there are two $4$-dimensional subspaces
of $R^8$ on which $K$ is isomorphic to the old operator $L$. 
In the above notation, we have expressed this as $\phi(K) = L$. 
\end{proof} 

{\bf Remark:}
The quadratic map $T$ is associated to the Julia set of the map $x^2-2$ which is the
"tip of the tail" parameter $z=-2+i \cdot 0$ in the Mandelbrot set $M$. 
The polynomial $T$ is conjugated to a Tschebychev polynomial, which by the way are the polynomials
for which the Julia set is an interval \cite{Beardon}. Alternatively, in the context of
interval maps, the map $T$ is conjugated to the Ulam map $f(x) = 4x(1-x)$. 
The density of the eigenvalues follows the arcsin probability distribution 
$f(x) = (x(4-x))^{-1/2}/\pi$ supported on $[0,4]$.
It has with cumulative distribution function $(2/\pi) \arcsin(\sqrt{x}/2)$.
The absolutely continuous measure $\mu=f(x) 1_{[0,4]} dx$ is the natural equilibrium 
measure on the Julia set, the measure which maximizes metric entropy and equals it
to topological entropy $\log(2)$. 

\section{Convergence}

Intuitively, universal convergence is a consequence of the fact that
every simplex in $G_m$ spans more simplices producing a self similar pattern. 
Smaller dimensional parts and contributions from boundary get washed away as
the refinement progresses, as in dimension $d$, there are exponentially more 
points than in dimension $d-1$. Also in higher dimensions, we expect this to be 
a setup for a renormalization scheme, as the spectrum on part of the graph is 
close to the spectrum of the entire graph and different regions grow in the same manner.  \\

We can not write the limiting $F$ as a fixed point of a renormalization map yet. It can 
not be ruled out yet that the scalar Laplacian of $G_{m+1}$ could be related with the form 
Laplacian of $G_m$. In two dimensions, where only $1$-forms are Fermionic, 
the spectrum of $D^2$ is determined by the spectrum of the scalar $0$-forms and the
$2$-forms. \\

The refinements become more and more self-similar, even so the degrees become larger and larger. 
We will see that the average eigenvalue $\overline{\lambda}(G) = \sum_{k=1}^n \lambda_k/n = ||F_{G}||_1$
converges for $m \to \infty$ in such a way that $\overline{\lambda}(G_{m+1}) - \overline{\lambda}(G_m)$
decreases with an exponential rate depending only on the dimension of the largest complete subgraph. 

\question{Is there also in dimension $d>1$ a functional equation for which the 
limiting $F$ is a solution? Is there a relation with an equilibrium measure of a Julia 
set as in $d=1$?  }

When we look at convergence of $F_{G_m}$ in $L^1([0,1])$, the limiting 
graph density $2 v_1/v_0$ relating the number of edges $v_1$ with the number
of vertices $v_0$ matters. 

\begin{lemma}
$||F_{G_m}||_1 = \int_0^1 F_{G_m}(x) \; dx = 2 v_1(G_m)/v_0(G_m)$
which is the average vertex degree of $G_m$. 
\end{lemma}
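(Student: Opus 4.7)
The statement is essentially a bookkeeping identity, so the plan is to unwind the definition of $F_{G_m}$ as a step function, identify its integral as a Riemann-type sum, and then apply two standard facts: trace equals sum of eigenvalues, and handshaking equates the sum of vertex degrees to twice the edge count.

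First I would let $n = v_0(G_m)$ denote the number of vertices of $G_m$ and write out explicitly that $F_{G_m}$ is the piecewise constant function on $[0,1]$ which takes the value $\lambda_k$ on the subinterval where $\lfloor n x \rfloor = k$, i.e.\ on $[k/n, (k+1)/n)$. Each such subinterval has length $1/n$, so
\begin{equation*}
\int_0^1 F_{G_m}(x)\, dx \;=\; \frac{1}{n}\sum_{k=0}^{n-1}\lambda_k \;=\; \frac{1}{v_0(G_m)} \operatorname{tr}(L(G_m)),
\end{equation*}
using that the trace of a symmetric matrix is the sum of its eigenvalues.

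Next I would compute the trace of the Laplacian $L = B - A$ of $G_m$. The adjacency matrix $A$ has zero diagonal, so $\operatorname{tr}(L) = \operatorname{tr}(B) = \sum_{x \in V(G_m)} \deg(x)$. The Euler handshake identity then gives $\sum_{x} \deg(x) = 2 v_1(G_m)$, so
\begin{equation*}
\int_0^1 F_{G_m}(x)\, dx \;=\; \frac{2 v_1(G_m)}{v_0(G_m)},
\end{equation*}
which by definition is the average vertex degree of $G_m$. Finally I would observe that since $F_{G_m}\geq 0$, the left hand side coincides with $\|F_{G_m}\|_1$, yielding the full claim.

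There is no real obstacle here; the only subtle point is the careful matching of the indexing convention $\lambda_0 \leq \cdots \leq \lambda_{n-1}$ with the floor function in the definition of $F_{G_m}$, but both give rise to the same Riemann sum of the $n$ eigenvalues divided by $n$, so the argument is robust.
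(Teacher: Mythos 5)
Your proof is correct and follows essentially the same route as the paper: the integral of the step function $F_{G_m}$ is the average eigenvalue, which equals $\operatorname{tr}(L)/v_0$, and the handshake lemma gives $\operatorname{tr}(L)=2v_1$. You simply spell out the Riemann-sum step more explicitly than the paper does.
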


\begin{proof}
The trace of the matrix $L$ is the sum of the eigenvalues. By the Euler handshaking lemma,
it is twice the number $v_1(G_m)$ of edges. The average eigenvalue is therefore 
$$   2v_1(G_m)/v_0(G_m) \; . $$
\end{proof}

In the case of a circular graph, we have $||F_{G_m}||+1=2$. 
In the case of the triangle graph, we have 
$$  ||F_{G_m}||_1 =  \frac{3\ 2^{n+1} \left(3^n+1\right)}{3\ 2^n+6^n+2}  \to 6   \; . $$
The fact that the average degree goes to $6$ follows of also from Gauss-Bonnet, as $K(x)=1-d(x)/6$
is the curvature for graphs without $K_4$ subgraphs.
Since the Euler characteristic of the triangle is $1$, the sum of the curvature has to 
converge to $1$. More generally, we have:

\begin{coro}
In any dimension $d$, the one has $||F_{G_m}||_1 \to (d+1)!$ exponentially fast.
\end{coro}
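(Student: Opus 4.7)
The plan is to reduce the claim to asymptotics of the $f$-vector under iterated refinement. By the preceding lemma, $\|F_{G_m}\|_1 = 2 v_1(G_m)/v_0(G_m)$, so it suffices to prove that this ratio converges to $(d+1)!$ exponentially. The key observation is that refinement acts linearly on the $f$-vector: a $k$-simplex of $G_{m+1}$ is a strictly increasing chain $\sigma_0 \subsetneq \sigma_1 \subsetneq \cdots \subsetneq \sigma_k$ of complete subgraphs of $G_m$, and enumerating chains with top element of dimension $j$ by ordered set partitions of its $j{+}1$ vertices into $k{+}1$ nonempty blocks yields
\[
v_k(G_{m+1}) = \sum_{j=k}^{d} (k+1)!\, S(j+1,k+1)\, v_j(G_m),
\]
where $S(n,r)$ is the Stirling number of the second kind. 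Thus $f(G_{m+1}) = A_d\, f(G_m)$ with $A_d$ the $(d{+}1)\times(d{+}1)$ upper-triangular matrix whose diagonal entries are $1!,2!,\dots,(d+1)!$; the entry $A_d[d,d] = (d+1)!$ recovers the exact identity $v_d(G_m) = ((d+1)!)^m\,v_d(G_0)$ from the counting lemma.

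Since the diagonal entries of $A_d$ are distinct positive reals, $A_d$ is diagonalizable with spectrum $\{1!, 2!, \ldots, (d+1)!\}$. Expanding $f(G_0)$ in this eigenbasis and iterating gives $v_k(G_m) = c_k \cdot ((d+1)!)^m + O((d!)^m)$, with $(c_0,\dots,c_d)$ proportional to the right eigenvector $w$ of $A_d$ at the dominant eigenvalue $(d+1)!$; the proportionality constant is positive whenever the clique number of $G$ is exactly $d+1$, since the component of $f(G_0)$ in the dominant eigendirection is determined by $v_d(G_0) > 0$. Hence
\[
\frac{2 v_1(G_m)}{v_0(G_m)} = \frac{2 w_1}{w_0} + O\!\bigl((d+1)^{-m}\bigr),
\]
which gives exponentially fast convergence, the rate $(d+1)^{-m}$ being governed by the ratio of the two largest eigenvalues of $A_d$.

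The remaining step is to identify the limit $2 w_1/w_0$ with $(d+1)!$. For $d=2$ this follows at once from the Gauss--Bonnet argument in the remark above: $K(x) = 1 - \deg(x)/6$ sums to $\chi(G_m) = \chi(G)$, and $v_0(G_m)\to\infty$ then forces the average degree to $6 = 3!$. For higher $d$, the plan would be to proceed analogously using the full curvature $K(x) = \sum_{k\ge 0} (-1)^k V_k(x)/(k+1)$ together with the handshake relations $\sum_x V_k(x) = (k+2)\,v_{k+1}(G_m)$. The main obstacle I foresee is precisely here: these relations together only reproduce the topological alternating sum $\chi = \sum_k (-1)^k v_k$, which does not by itself isolate the ratio $v_1/v_0$ in the limit. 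An alternative route I would pursue is a direct eigenvector computation: back-substituting from $w_d = 1$ through the rows of $(A_d - (d+1)! I)w = 0$ yields closed-form expressions for $w_0$ and $w_1$ as ratios of Stirling-number sums, reducing the claim to a combinatorial identity $2 w_1 = (d+1)!\, w_0$ that should follow from recursions like $S(j{+}1,k{+}1) = (k{+}1)\,S(j,k{+}1) + S(j,k)$. Verifying this closed-form identity is, in my view, the combinatorial crux of the proof.
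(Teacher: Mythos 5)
Your transfer--matrix setup is correct and is considerably sharper than the paper's own one-line proof, which only compares the growth rates $(d!)^m$ and $((d+1)!)^m$ of lower- and top-dimensional simplices to get the \emph{rate} and never actually establishes the limiting \emph{value}. You are also right to isolate the identity $2w_1=(d+1)!\,w_0$ for the Perron eigenvector $w$ of $A_d$ as the crux, and the honest answer is that the step you could not complete cannot be completed: that identity fails for $d\geq 3$, so the corollary as stated is false beyond dimension $2$. Concretely, for $d=3$ your recursion reads $v_0'=v_0+v_1+v_2+v_3$, $v_1'=2v_1+6v_2+14v_3$, $v_2'=6v_2+36v_3$, $v_3'=24v_3$ (which reproduces the paper's table $(4,6,4,1)\mapsto(15,50,60,24)\mapsto\cdots$), and back-substitution from $w_3=1$ gives $w_2=2$, $w_1=13/11$, $w_0=2/11$, hence $2w_1/w_0=13\neq 24$. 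The paper's own data for $G_0=K_4$ confirms this: the average degrees $2v_1/v_0$ for $G_2$ and $G_3$ are $1592/149\approx 10.7$ and $34000/2745\approx 12.4$, and iterating the recursion one more step gives $\approx 12.85$, visibly converging to $13$.

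The underlying error, both in the paper's sketch and in the heuristic you were asked to justify, is a conflation of two different quantities: the vertex-growth ratio $v_0(G_{m+1})/v_0(G_m)$, which does tend to $(d+1)!$ because that is the top eigenvalue of $A_d$, and the average degree $2v_1(G_m)/v_0(G_m)$, which tends to $2w_1/w_0$. These happen to coincide for $d=1$ (both equal $2$) and for $d=2$ (both equal $6$, as the Gauss--Bonnet argument you cite shows), but not for $d\geq 3$; this is also why your attempt to extend the curvature argument only recovers the alternating sum $\chi$ and cannot isolate $v_1/v_0$. So the verdict is: your framework is the right one, it proves exponential convergence of $\|F_{G_m}\|_1$ at the rate $(d+1)^{-m}$ to the explicitly computable limit $2w_1/w_0$, and carried to completion it \emph{disproves} the stated value of that limit for $d\geq 3$. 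The one gap you flag is genuine, and it is unfillable.
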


\begin{proof}
Indeed, as the number of boundary simplices grows like $(d!)^n$ and the number
of interior simplices grows like $((d+1)!)^n$, the convergence is of
the order  $1/(d+1)^n$.
\end{proof}

Since the largest vertex degree grows exponentially in $m$, we know $||F_{G_m}||_{\infty} \to \infty$.
The Cheeger inequality for graphs allows to say something about 
the ground state energy, the smallest nonzero eigenvalue of the connected graph $G$. 

\begin{coro}
The ground state energy $\lambda_1(G_m)$ converge to $0$ exponentially fast for $m \to \infty$. 
\end{coro}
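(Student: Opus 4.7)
The plan is to invoke the Cheeger inequality for graphs, $\lambda_1(G) \leq 2 h(G)$ with $h(G) = \min_{0 < |S| \leq |V|/2} |\partial S|/|S|$, and exhibit in each $G_m$ a cut $S_m$ whose Cheeger ratio is $O((d+1)^{-m})$. The cuts are supplied by the self-similarity of barycentric refinement: the $m$-fold refinement of a $d$-simplex $\sigma$ of $G$ fills a region of volume $\Theta(((d+1)!)^m)$ inside $G_{m-1}$, whereas its boundary, which is a refinement of the $(d-1)$-sphere $\partial \sigma$ built from $d+1$ copies of $K_d$, contains only $\Theta((d!)^m)$ simplices of any dimension. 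This volume-to-surface ratio of order $(d+1)^m$ is the same asymmetry that drove the preceding corollary on $\|F_{G_m}\|_1$, and it will produce the exponential decay we need.

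In detail, I would first pass from $G$ to $G_1$, which by the counting lemma contains at least $(d+1)! \geq 2$ top-dimensional simplices lying inside the refinement of any $d$-simplex of $G$; picking $\sigma$ as one of them guarantees that the set $S_m \subset V(G_m)$ consisting of simplex-vertices of $G_m$ lying in the refined region $R_m$ of $\sigma$ satisfies $|S_m| \leq |V(G_m)|/2$, as required by the Cheeger formulation. The counting lemma applied to $K_{d+1}$ then gives $|S_m| = \Theta(((d+1)!)^m)$. A crossing edge of $G_m$ must connect a pair $\tau_1 \subsetneq \tau_2$ of simplices of $G_{m-1}$ with $\tau_1 \in R_m$ and $\tau_2 \not\subset R_m$, and this forces $\tau_1$ to lie on the boundary $\partial R_m$. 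The number of simplices on $\partial R_m$ is $O((d!)^m)$, and each admits only $O(1)$ extensions outside $R_m$, so $|\partial S_m| = O((d!)^m)$. Cheeger then yields $\lambda_1(G_m) \leq 2 h(G_m) \leq C (d+1)^{-m}$, exactly the exponential decay claimed.

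The main obstacle is the local finiteness statement: given $\tau_1 \in \partial R_m$, the number of simplices $\tau_2 \supsetneq \tau_1$ in $G_{m-1}$ that poke outside $R_m$ must be bounded in terms of $d$ alone, independently of $m$. This is true because any such $\tau_2$ must lie inside the refinement of a $d$-simplex of $G$ adjacent to $\sigma$ along a face containing $\tau_1$, and the number of cofaces of a $k$-simplex in a single maximal simplex of dimension $d$ is a local combinatorial quantity controlled purely by $d$. Making this rigorous requires a short analysis of how simplices in the combined refinement of $\sigma$ and an adjacent $\sigma'$ straddle their shared face, but once it is in hand, summing over $\tau_1 \in \partial R_m$ closes the Cheeger estimate. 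The edge case $G = K_{d+1}$, which has no two disjoint top-dimensional simplices to play with, is absorbed automatically by beginning the argument with $G_1$ in place of $G_0$.
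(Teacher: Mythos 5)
Your overall strategy is the same as the paper's: the paper's proof is a one--line invocation of the Cheeger inequality together with the assertion that the Cheeger ratio $|C(f)|/\min(|A(f)|,|B(f)|)$ decays exponentially, and you are supplying the cut and the counting that make that assertion concrete. The volume estimate $|S_m|=\Theta(((d+1)!)^m)$ and the identification of the crossing edges with pairs $\tau_1\subsetneq\tau_2$ straddling the refined $(d-1)$-dimensional interface are both correct.

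However, the step you yourself flag as the main obstacle is genuinely false as stated. You claim that a simplex $\tau_1$ on $\partial R_m$ admits only $O_d(1)$ cofaces $\tau_2$ poking outside $R_m$, justified by the finiteness of the coface count of a $k$-face inside a single maximal $d$-simplex. That justification confuses the coface count in the original complex $G$ (which is indeed bounded by a function of $d$) with the coface count in the refined complex $G_{m-1}$, which is governed by the star of $\tau_1$ in $G_{m-1}$ and grows without bound: each barycentric refinement doubles the maximal vertex degree, and the paper itself records that boundary vertex degrees in the $2$-dimensional case run through $3,5,9,17,\dots$. Concretely, an original vertex $a$ of the shared face persists through all refinements and lies in roughly $2^m$ top-dimensional simplices of the adjacent chamber, so your per-simplex bound fails already for $d=2$. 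The conclusion nevertheless survives, because the vertices with large stars are few: what one should bound is the \emph{total} number of straddling simplices, i.e.\ the size of the closed star of the refined interface, which a short recursion shows to be $O({\rm poly}(m)\,(d!)^m)$ (for $d=2$ one gets $(m+1)2^m$ triangles meeting a refined boundary edge). Multiplying by the at most $2^{d}$ choices of $\tau_1\subset\tau_2$ still gives $|\partial S_m|/|S_m| = O({\rm poly}(m)\,(d+1)^{-m})$, so the exponential decay of $\lambda_1(G_m)$ holds; but the argument must be rephrased as a global count over the interface rather than a uniform local bound.
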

\begin{proof}
In dimension $1$, where we know the explicit spectrum and because the number
of vertices grows exponentially. In general, it follows from the fact that the 
Cheeger number $|C(f)|/{\rm min}(|A(f)| |B(f)|)$ goes to zero exponentially fast.
\end{proof}

In order to compare eigenvalues, one could use a perturbation result of Weyl 
which tells that for selfadjoint matrices $A,B$, the eigenvalues $\lambda_k(A+B)$ are sandwiched
between $\lambda_k(A)+\lambda_1(B)$ and $\lambda_k(A)+\lambda_n(B)$. Since we need not only the 
individual eigenvalues to converge but also need a $l_1$-convergence,
the following variant of the Lidskii's theorem which was used already
in \cite{knillmckeansinger} is better suited:

\begin{lemma}[Lidskii]
For any two selfadjoint complex $n \times n$ matrices $A,B$ with
eigenvalues $\alpha_1 \leq \alpha_2 \leq \dots \leq \alpha_n$ and
$\beta_1 \leq \beta_2 \leq \dots \leq \beta_n$, one has
$\sum_{j=1}^n |\alpha_j - \beta_j| \leq \sum_{i,j=1}^{n} |A-B|_{ij}$.
\label{Lidskii}
\end{lemma}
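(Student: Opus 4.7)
The plan is to set $C := A-B$, which is Hermitian, and denote its real eigenvalues by $\gamma_1 \leq \gamma_2 \leq \cdots \leq \gamma_n$. I would then split the inequality into two independent pieces: a spectral estimate comparing $\sum_j |\alpha_j - \beta_j|$ with $\sum_j |\gamma_j|$, and a norm-comparison bounding $\sum_j |\gamma_j|$ by the entrywise sum of $|C_{ij}|$.

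For the first step I would invoke the classical Lidskii--Wielandt majorization: the vector $(\alpha_j - \beta_j)_{j=1}^n$, arranged in decreasing order, is majorized by the decreasingly ordered vector $(\gamma_j)_{j=1}^n$. Since $t \mapsto |t|$ is convex, the Hardy--Littlewood--P\'olya theorem then yields
$$ \sum_{j=1}^n |\alpha_j - \beta_j| \;\leq\; \sum_{j=1}^n |\gamma_j| \;=\; \|C\|_{S_1}, $$
where $\|\cdot\|_{S_1}$ is the Schatten trace norm (which for Hermitian $C$ is just the sum of absolute values of the eigenvalues). If I wanted a self-contained argument rather than citing Bhatia's \emph{Matrix Analysis}, I would derive the majorization from Ky Fan's principle: for any Hermitian $X$,
$$ \sum_{j=1}^k \lambda_{n+1-j}(X) \;=\; \max_{P^{*}P=I_k} \operatorname{tr}(P^{*} X P), $$
the maximum being taken over isometries $P:\mathbb{C}^k \to \mathbb{C}^n$. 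Applying this to both $A = B + C$ and $-A = -B + (-C)$, and choosing $P$ to project onto the top-$k$ eigenspace of $A$ and of $B$ respectively, bounds the partial sums $\sum_{j=1}^k (\alpha_{n+1-j} - \beta_{n+1-j})$ from above by $\sum_{j=1}^k \gamma_{n+1-j}$ and from below by $\sum_{j=1}^k \gamma_j$, which is the majorization statement.

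For the second step I would write $C = \sum_{i,j} C_{ij} E_{ij}$, where $E_{ij}$ is the matrix unit with a single $1$ in position $(i,j)$. Each $E_{ij}$ has only one nonzero singular value equal to $1$, so $\|E_{ij}\|_{S_1} = 1$, and the triangle inequality for the Schatten norm gives
$$ \|C\|_{S_1} \;\leq\; \sum_{i,j=1}^n |C_{ij}|\,\|E_{ij}\|_{S_1} \;=\; \sum_{i,j=1}^n |A-B|_{ij}. $$
Chaining the two inequalities finishes the proof.

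The main obstacle is clearly Step 1, the Lidskii--Wielandt majorization, as the second step is essentially immediate from the triangle inequality. Everything else, including the passage from majorization to the $\ell^1$-type bound, is routine. An alternative, fully elementary route that avoids invoking majorization theory explicitly would be to prove the $\ell^1$ eigenvalue bound directly by a doubly-stochastic-matrix argument: the matrix $M_{jk} = |\langle u_j, v_k\rangle|^2$, where $u_j, v_k$ are orthonormal eigenbases of $A$ and $B$, is doubly stochastic, and then Birkhoff's theorem expresses $M$ as a convex combination of permutation matrices, reducing the eigenvalue-difference bound to a comparison of sorted permutations; I would probably reserve this alternative as a backup if a shorter direct derivation is preferred over citing the standard majorization result.
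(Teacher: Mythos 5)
Your proposal is correct and follows the same overall strategy as the paper: set $C=A-B$, invoke the classical Lidskii inequality to reduce the left-hand side to $\sum_j|\gamma_j|$ where $\gamma_j$ are the eigenvalues of $C$, and then bound this quantity by the entrywise $\ell^1$ norm of $C$. For the first (and only hard) step the paper, like you, simply cites the classical result (Simon's \emph{Trace Ideals}), so you are on identical footing there. The one genuine difference is the second step: the paper picks a sign vector $(-1)^{m_i}$ with $|\gamma_i|=(-1)^{m_i}\gamma_i$, expands $\sum_i(-1)^{m_i}\gamma_i=\sum_{i,k,l}(-1)^{m_i}U_{ik}C_{kl}U_{il}$ via the diagonalizing unitary $U$, and bounds the coefficient of each $C_{kl}$ by $1$ using Cauchy--Schwarz on the unit columns of $U$; you instead expand $C=\sum_{i,j}C_{ij}E_{ij}$ in matrix units and apply the triangle inequality for the Schatten $1$-norm together with $\|E_{ij}\|_{S_1}=1$. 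Both establish $\|C\|_{S_1}\le\sum_{i,j}|C_{ij}|$ in a few lines; yours is slicker if unitary invariance of the trace norm is taken as known, while the paper's is self-contained at the level of bare linear algebra. One caution about your optional ``self-contained'' derivation of the majorization: applying Ky Fan's maximum principle to the top-$k$ blocks only yields $\sum_{j=1}^{k}(\alpha_j^{\downarrow}-\beta_j^{\downarrow})\le\sum_{j=1}^{k}\gamma_j^{\downarrow}$ for the differences in their \emph{original} sorted order, which is strictly weaker than majorization of the re-sorted difference vector that Hardy--Littlewood--P\'olya requires; the full Lidskii--Wielandt statement involves arbitrary index sets $i_1<\dots<i_k$ and needs Wielandt's minimax (or an equivalent). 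Since your primary route cites the classical theorem, exactly as the paper does, this does not affect the validity of your proof, but the sketched backup would need repair before it could stand on its own.
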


The reduction to the standard Lidksii theorem was told me by Yoram Last
\cite{Last1995}:

\begin{proof}
Denote with $\gamma_i \in R$ the eigenvalues of the selfadjoint
matrix $C:=A-B$ and let $U$ be the unitary matrix diagonalizing $C$
so that ${\rm Diag}(\gamma_1, \dots ,\gamma_n)=UCU^*$. We calculate
\begin{eqnarray*}
 \sum_i |\gamma_i| &=& \sum_i (-1)^{m_i} \gamma_i
                    = \sum_{i,k,l} (-1)^{m_i} U_{ik} C_{kl} U_{il} \\
                    &\leq& \sum_{k,l} |C_{kl}| \cdot
                                    |\sum_i (-1)^{m_i} U_{ik} U_{il}|
                    \leq \sum_{k,l} |C_{kl}| \; .
\end{eqnarray*}
The claim follows now from Lidskii's inequality
$\sum_j |\alpha_j-\beta_j| \leq \sum_j |\gamma_j|$
(see \cite{SimonTrace})
\end{proof}

\begin{propo}
For any $d \geq 1$ and $G=K_{d+1}$, there exists a limiting eigenvalue distribution 
$F_d(x) = \lim_{m \to \infty} F_{G_m}(x)$. 
\end{propo}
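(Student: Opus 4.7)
The plan is to show that $\{F_{G_m}\}_{m\geq 0}$ is Cauchy in $L^1([0,1])$ with exponentially decaying per-step increments, by combining Lidskii's lemma (Lemma~\ref{Lidskii}) with the exact self-similar structure of barycentric refinement started from $K_{d+1}$. Because each $F_{G_m}$ is monotone non-decreasing, $L^1$-convergence together with the Lewis--Shisha-type result cited in the text will automatically promote to uniform convergence on compact subsets of $(0,1)$.

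First I would set up a comparison operator. By the counting lemma, the graph $G_m$ has $N_m = ((d+1)!)^m$ top $d$-simplices (all copies of $K_{d+1}$) and only $O((d!)^m)$ lower-dimensional simplices; $G_{m+1}$ is obtained by barycentrically subdividing each top simplex individually and then re-identifying along shared $(<d)$-dimensional faces. I would construct $\tilde L_{m+1}$, of the same size as $L(G_{m+1})$, by ``un-gluing'' those boundary identifications: duplicate each lower-dimensional face once per incident top simplex. After reordering, $\tilde L_{m+1}$ becomes block-diagonal with $N_m$ identical blocks, each equal to a fixed local Laplacian, and the perturbation $\Delta_{m+1} := L(G_{m+1}) - \tilde L_{m+1}$ is supported only on rows/columns indexed by boundary (lower-dimensional) simplices of $G_m$.

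Next I would apply Lidskii, giving
\[
\sum_j \bigl|\lambda_j(L(G_{m+1})) - \lambda_j(\tilde L_{m+1})\bigr| \leq \|\Delta_{m+1}\|_{\ell^1}.
\]
A careful double count of (boundary face, incident top simplex) pairs, using that each top simplex has only $2^{d+1}-2$ proper faces (a constant depending on $d$), bounds $\|\Delta_{m+1}\|_{\ell^1}$ by $C_d \cdot (d!)^m$. Normalizing by $n_{m+1} \asymp N_m = ((d+1)!)^m$ yields
\[
\|F_{G_{m+1}} - F_{\tilde L_{m+1}}\|_{L^1([0,1])} = O\bigl((d+1)^{-m}\bigr),
\]
and an analogous estimate at the preceding scale, together with the identification of $F_{\tilde L_{m+1}}$ as the appropriate lift of $F_{G_m}$ (since each block of $\tilde L_{m+1}$ encodes one refinement step applied to a single $G_m$-patch), gives a telescoping bound $\|F_{G_{m+1}} - F_{G_m}\|_{L^1([0,1])} = O((d+1)^{-m})$. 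Summing this geometric series over $m$ produces the Cauchy property, and hence $F_{G_m} \to F_d$ in $L^1([0,1])$.

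The main obstacle is the precise bookkeeping in the Lidskii step: a $k$-face of $G_m$ (with $k<d$) can have valence in $G_{m+1}$ growing with $m$, since it is contained in many higher-dimensional simplices of $G_m$, so naively bounding ``number of affected rows times maximum entry'' is insufficient. The resolution is that the total valence summed over all lower-dimensional faces equals the number of (face, super-simplex) incidence pairs, which scales linearly in $N_m$ and, when distributed across the $O((d!)^m)$ affected rows, gives exactly the $(d+1)^{-m}$ gain needed after normalization. This exponential gap between top-dimensional and sub-top-dimensional counts is the single ingredient that drives the entire convergence rate, and is also what will later force the limiting $F_d$ to be independent of the starting graph.
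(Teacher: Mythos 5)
Your overall strategy (Lidskii plus self-similarity, then Lewis--Shisha to upgrade $L^1$-convergence) is the same as the paper's, but the decomposition you chose breaks the argument at its key quantitative step. You un-glue $G_{m+1}=(G_m)_1$ into $N_m=((d+1)!)^m$ atomic blocks, one per top $d$-simplex of $G_m$, and you justify smallness of the perturbation by asserting that $G_m$ has only $O((d!)^m)$ lower-dimensional simplices. That count is false: by the paper's own counting lemma for $d=2$, $v_0(G_m)\sim \tfrac12 6^m$ and $v_1(G_m)\sim \tfrac32 6^m$ while $v_2(G_m)=6^m$, so the shared faces along which you re-identify are of the \emph{same} exponential order as the top simplices (the $O((d!)^m)$ rate applies only to a fixed $(d-1)$-dimensional subcomplex refined $m$ times, e.g.\ the boundary sphere of the ball or a fixed interface --- not to the entire $(d-1)$-skeleton). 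Consequently your perturbation $\Delta_{m+1}$ touches a positive fraction of all rows, its $\ell^1$ norm is $\Theta(n_{m+1})$ rather than $O((d!)^m)$, and after normalization Lidskii yields only an $O(1)$ bound --- no decay. A second, independent problem: even if the estimate held, your block-diagonal comparison operator consists of $N_m$ copies of the \emph{fixed} matrix $L((K_{d+1})_1)$, so its normalized spectral function is $F_{G_1}$ for every $m$; the identification of $F_{\tilde L_{m+1}}$ with ``a lift of $F_{G_m}$'' is therefore not available, and the telescoping collapses to the trivial statement that every $F_{G_m}$ is within $O(1)$ of $F_{G_1}$.

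The paper avoids both issues by cutting at the coarsest scale only: since $G_{m+1}=(G_1)_m$ and $G_1$ is $(d+1)!$ copies of $K_{d+1}$ glued along a fixed $(d-1)$-dimensional interface, one compares $L(G_{m+1})$ with the direct sum of $(d+1)!$ copies of $L(G_m)$. Now the perturbation is supported on the $m$-fold refinement of that fixed lower-dimensional interface, which genuinely has $O(C_d\,(d!)^m)$ simplices, so the normalized Lidskii bound decays like $(d+1)^{-m}$ (up to the polynomial factors coming from the growing degrees of interface vertices, and a padding step to reconcile the slightly different matrix sizes); and since $F$ is a \emph{normalized} distribution, $(d+1)!$ disjoint copies of $G_m$ have exactly the spectral function $F_{G_m}$, which is what makes the telescoping $\|F_{G_{m+1}}-F_{G_m}\|_1=O((d+1)^{-m})$ legitimate. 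If you replace your $N_m$-block decomposition by this $(d+1)!$-block one, the rest of your write-up goes through.
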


\begin{proof}
We proceed by induction. For $d=1$, we have explicit eigenvalues.  For $d=2$, we 
make a refinement and divide it up into 6 pieces. Lidskii shows that the eigenvalues of $G_{m+1}$ 
consist of $6$ copies of the eigenvalues of $G_m$ plus a correction term which comes from lower 
dimensional interfaces and is much smaller. 
In words, $||F_{G_{m+1,2}}|| = ||F_{G_{m,2}}|| + 6 \cdot 3 ||F_{G_{m,1}}|| (2^n/6^n)$.
As $||F_{G_{m,1}}||$ converges, the sequence $||F_{G_{m,2}}||$ is a Cauchy sequence.
Now, lets go to the case $d=3$. We have
$||F_{G_{m+1,3}}|| = ||F_{G_{m,3}}|| + 24 \cdot 3 ||F_{G_{m,2}}|| (6^n/6^{n^2})$
and again have a Cauchy sequence showing that the limit exists in $L^1([0,1])$. 
In each dimension we get a limiting function as the lower dimensional interfaces between
the similar chambers grow exponentially slower than the chambers themselves.
\end{proof}

\begin{coro}
There exists a limiting density of states $\mu_d = F'_d(x)$ which is a measure on $[0,1]$. 
\end{coro}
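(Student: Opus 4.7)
The plan is to construct $\mu_d$ as the Lebesgue--Stieltjes measure associated with the monotone function $F_d$. By the very definition of the spectral function, each $F_{G_m}(x) = \lambda_{[nx]}$ is a non-decreasing step function on $[0,1]$, since the eigenvalues are ordered. Monotonicity is preserved under $L^1$ and pointwise limits (modulo modification on a null set), so the limiting function $F_d$ supplied by the preceding proposition is non-decreasing on $[0,1)$. Moreover, as noted in the paper, the convergence $F_{G_m} \to F_d$ is uniform on every compact subinterval $[a,b] \subset [0,1)$, hence $F_d$ is finite and locally bounded on $[0,1)$.

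Given this, the standard Lebesgue--Stieltjes construction yields a unique Borel measure $\mu_d$ on $[0,1)$ defined by
$$ \mu_d([a,b)) = F_d(b^-) - F_d(a^-) $$
for $0 \leq a \leq b < 1$, extended via Carath\'eodory. Because $F_d$ is locally bounded on $[0,1)$, this $\mu_d$ is a Radon measure there. In the distributional sense one then has $\mu_d = F_d'$, with the derivative interpreted as a positive Borel measure rather than a pointwise derivative (although by Lebesgue's theorem $F_d$ is in fact differentiable almost everywhere on $[0,1)$, with $F_d'$ as its absolutely continuous part).

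To get a measure on the full interval $[0,1]$ as asserted, one simply sets $\mu_d(\{1\}) = \lim_{\epsilon \to 0^+} \bigl(\sup F_d([0,1)) - F_d(1-\epsilon)\bigr)$, which is a well-defined element of $[0,\infty]$ by monotonicity; for $d=1$ this is zero (since $F_1$ is bounded), whereas for $d \geq 2$ the preceding lemma shows $F_d$ is unbounded near $1$, so $\mu_d$ may be only $\sigma$-finite rather than finite near the right endpoint. In all cases one obtains a Borel measure on $[0,1]$, and by the uniform convergence $F_{G_m} \to F_d$ on compact subsets of $[0,1)$, the empirical spectral measures of $G_m$ converge weakly to $\mu_d$ on each such subset.

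The only real obstacle is this unboundedness at $x=1$ for $d \geq 2$, which prevents $\mu_d$ from being a probability measure and requires one to work with Radon measures on $[0,1)$ or $\sigma$-finite measures on $[0,1]$. Once this nuance is acknowledged, the corollary reduces to the classical bijection between non-decreasing functions and positive Borel measures on an interval.
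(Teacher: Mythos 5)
Your proof is correct and takes essentially the same route as the paper, which simply observes that the monotone limit function $F_d$ defines a (Lebesgue--Stieltjes) measure. Your additional care about the endpoint $x=1$, where $F_d$ is unbounded for $d \geq 2$ so that the measure is only locally finite on $[0,1)$, is a genuine refinement the paper's two-sentence proof glosses over, but it does not change the underlying argument.
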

\begin{proof}
As $F_{G_m}$ are monotone, it defines a measure $\mu_m$. Since the limiting 
eigenvalue distribution is monotone, it defines a measure $\mu$. 
\end{proof} 

With pointwise convergence, we would have weak convergence of $\mu_m$ to $\mu$. 
But we don't know that yet. \\

Lets now look at the proof of the theorem: we have to show that the limiting function $F$ does not depend
on the initial graph. Having seen convergence for a simplex $K_{d+1}$, it follows for
a finite union of $k$ simplices glued along lower dimensional simplices. Each simplex
evolves in the same way under the barycentric evolution and the average is the same 
function $F$. When cutting the graph apart or disregarding lower dimensional parts,
the modifications lead to change on a set of vertices which becomes exponentially 
less relevant as $m$ grows as it is lower dimensional. 
This is justified by Lidskii's estimate as we can estimate the sum of the eigenvalue differences. \\

\begin{figure}
\scalebox{0.1}{\includegraphics{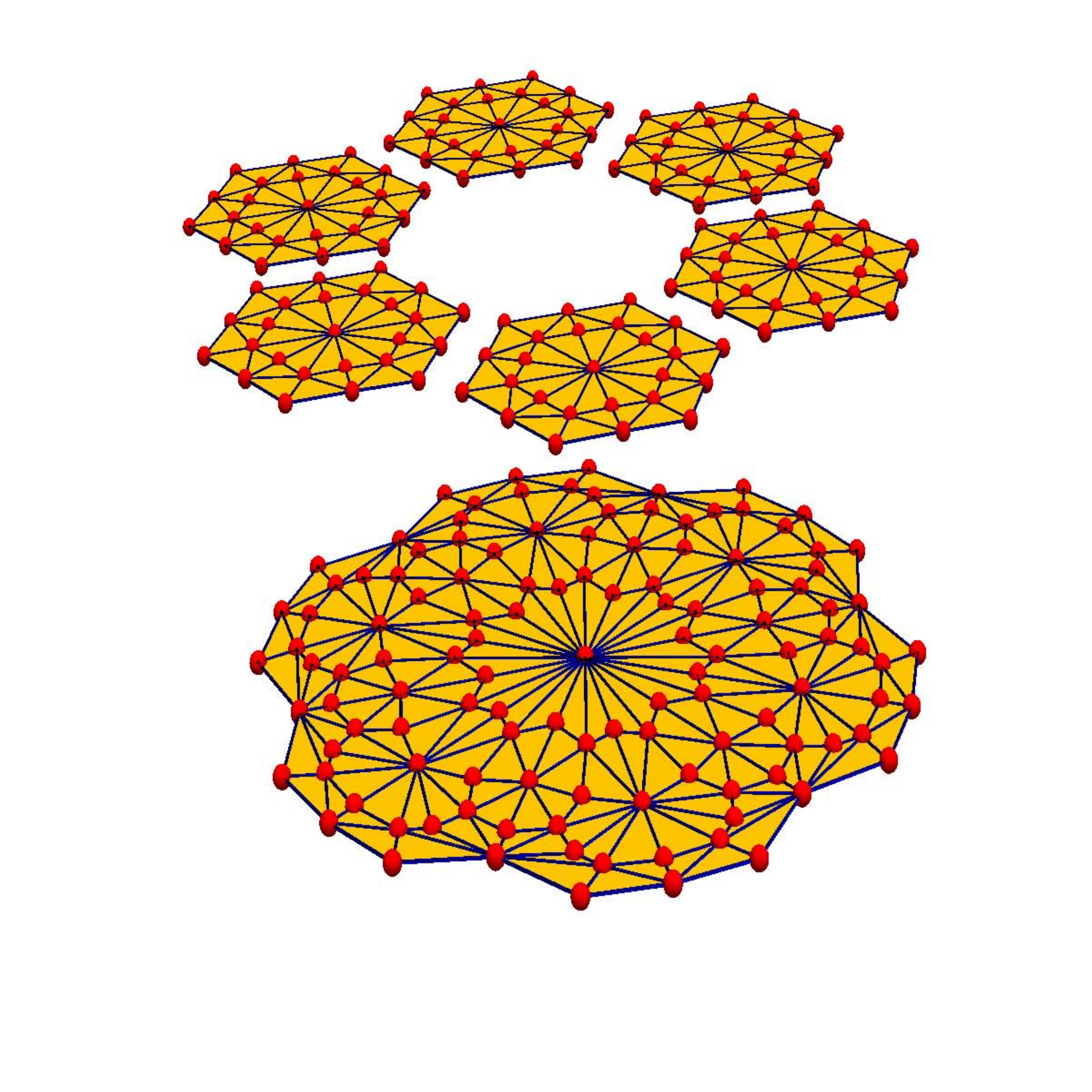}}
\caption{
For the proof, we cut the d-dimensional graph $G_m$ into $(d+1)!$ smaller pieces
of the form $G_{m-1}$. 
The modification on the boundary affects exponentially less vertices so that
by Lidskii, the $l_1$ difference between the spectra goes to zero exponentially
fast with $m$. This applies then to the $L^1([0,1])$ functions $F_{G_m}$. 
}
\end{figure}

\section{Vertex degree distribution}

Lets look at the vertex degree distribution $d_1 \leq d_2 \leq d_3 \dots $. In the 
same way as for the spectrum, we can define a degree distribution function 
$$   H_{G_m}(x) = d_{[x n]} \; . $$
The degree and eigenvalue distributions are linked by the Schur's inequality so that
one can deduce the following from the eigenvalues. It is also possible to do it 
directly: 

\begin{coro}[Limiting vertex distribution]
There exists a limiting vertex distribution function $H(x) = \lim_{m \to \infty} H_{G_m}(x)$. 
It is independent of the initial graph $G$ and depends only on the dimension $d$ of the largest
complete subgraph of $G$. The distribution 
satisfies $H(x) \geq F(x)$ for $x \in [0,1]$, $H(0)=F(0),H(1)=F(1)$.
\end{coro}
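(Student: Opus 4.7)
My plan is to mirror the three-step pattern that established Theorem~1 and its companion corollaries, now applied to degree sequences in place of spectra.

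Step~1 (existence). The vertex set of $G_m$ is indexed by the simplices $\sigma$ of $G_{m-1}$, and the degree of $\sigma$ as a vertex of $G_m$ equals the number of proper sub-simplices of $\sigma$ (which depends only on $\dim\sigma$, namely $2^{\dim\sigma+1}-2$) plus the number of simplices of $G_{m-1}$ strictly containing $\sigma$ (which depends only on the combinatorial link of $\sigma$ and becomes self-similar after one refinement). Partitioning the degree multiset of $G_m$ by the dimension $k$ of the corresponding simplex, and using the counting lemma $v_d(G_m)=((d+1)!)^m$, the top-dimensional contribution exceeds the lower-dimensional ones by a factor of at least $(d+1)^m$. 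Hence the sorted degree sequence of $G_m$ is, up to an $L^1$-negligible tail, a superposition of copies of one fixed reference sequence coming from an interior $d$-simplex, so $H_{G_m}$ is $L^1$-Cauchy and has a monotone limit $H$.

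Step~2 (universality). The same chamber-cutting argument used in Proposition~4 transfers: any two graphs of clique dimension $d$, after $m$ refinements, agree except on vertices within combinatorial distance~$1$ of the lower-dimensional interfaces along which $d$-chambers are glued. The number of such vertices grows only like $O((d!)^m)$, exponentially slower than the total $O(((d+1)!)^m)$, so the $L^1$-distance of the corresponding $H_{G_m}$ functions decays like $(d+1)^{-m}$ and the limit depends only on $d$.

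Step~3 (comparison with $F$). The main tool is the Schur majorization theorem applied to the Hermitian matrix $L_{G_m}=B-A$, whose diagonal is exactly the degree sequence. In the sorted-increasing convention of the paper this yields
\[
\sum_{i=1}^{k}\lambda_i(G_m)\;\le\;\sum_{i=1}^{k}d_i(G_m),\qquad k=1,\dots,n,
\]
with equality at $k=n$ since both sides equal $\operatorname{tr}(L_{G_m})=2v_1(G_m)$. Dividing by $n$ and passing to the $L^1$-limit gives $\int_0^x F \le \int_0^x H$ on $[0,1]$ with equality at $x=0$ and $x=1$. For the endpoint matches, $H(0)=F(0)$ uses Corollary~7 ($\lambda_1\to 0$) together with the observation that an exponentially dominant fraction of vertices in $G_m$ acquire arbitrarily small normalized degree index, while $H(1)=F(1)$ uses Gershgorin's bound $d_{\max}\le\lambda_{\max}\le 2d_{\max}$ applied to the vertex of highest degree, which corresponds to the original top-dimensional simplex of $G_0$ whose link has enough symmetry under refinement to force $\lambda_{\max}/d_{\max}\to 1$.

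The hard part will be promoting the integrated majorization $\int_0^x F\le\int_0^x H$ to the claimed pointwise inequality $H(x)\ge F(x)$. Schur alone only gives the integrated form, and in low dimension (e.g.\ the $1$-dimensional case where $H\equiv 2$ but $F(x)=4\sin^2(\pi x/2)$ can exceed $2$) the pointwise inequality does not literally hold, so the argument must genuinely use the matching endpoint values and the quasi-piecewise-constant structure of $H$ in dimension $d\ge 2$: degrees in $G_m$ take only finitely many values indexed by link type, while $F$ has a continuous limiting profile, so the integrated inequality plus equality at both endpoints is upgraded to the pointwise bound by ruling out any interior crossing via the jump structure of $H$. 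Making this rigorous, and identifying precisely the regime in which the pointwise statement is correct, is the main technical obstacle of the proof.
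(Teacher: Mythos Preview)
Your Steps~1--2 follow exactly the paper's route: the paper proves existence and universality of $H$ by the same chamber-splitting argument (``Splitting the graph $G_m$ into $(d+1)!$ smaller graphs $G_{m-1}$ and noting that the degrees of the lower dimensional walls can be neglected\dots''), without Lidskii this time since degrees, unlike eigenvalues, are local. Your Step~3 also coincides with the paper: the only tool invoked there is the Schur majorization $\sum_{i\le t} d_i \ge \sum_{i\le t}\lambda_i$ (ascending order), i.e.\ exactly the integrated inequality $\int_0^x H_{G_m}\ge\int_0^x F_{G_m}$ you wrote down. The paper says nothing further about $H(0)=F(0)$, $H(1)=F(1)$, or the passage from majorization to a pointwise bound.

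You are therefore more careful than the paper, and the gap you flag is genuine and shared by the paper's own argument: Schur yields only majorization, not $H(x)\ge F(x)$ pointwise, and your $d=1$ observation is a real counterexample (there $H\equiv 2$ while $F(x)=4\sin^2(\pi x/2)$ exceeds $2$ on $(1/2,1]$, and $H(1)=2\neq 4=F(1)$). So the corollary as stated is not literally correct in dimension $1$, and the paper's proof does not establish the pointwise inequality in any dimension; it stops at Schur, and the subsequent Remark even downgrades the pointwise claim to something that Grone's inequality merely ``suggests''. Your attempt to rescue $H(1)=F(1)$ via Gershgorin and a symmetry argument cannot work for the same reason your $d=1$ counterexample shows. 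The honest conclusion is that the provable content is the integrated inequality with equality of the total integrals, which is precisely what both you and the paper actually derive.
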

\begin{proof}
Splitting the graph $G_m$ into $(d+1)!$ smaller graphs $G_{m-1}$ and noting that the 
degrees of the lower dimensional walls can be neglected in the limit as the number
of vertices in those parts grows slower, we have a Cauchy sequence in $L^1$. \\
An inequality of Schur 
For any selfadjoint matrix $A$, the Schur inequality 
$$  \sum_{i=1}^t a_i \leq \sum_{i=1}^t  \mu_i $$ 
holds, where $\mu_i$ are the eigenvalues of $A$ and $a_i$ are the diagonal 
elements $a_i \geq a_{i+1}$ of $A$. In the case $t=n$ we have equality since
${\rm tr}(L) = \sum_i \lambda_i$. 
The Schur inequality gives $\sum_i^t d_i \geq \sum_{i=1}^t \lambda_i$
if the degrees $d_i$ of the graph and eigenvalues $\lambda_i$ of the graph 
are ordered in an ascending way.
\end{proof}

{\bf Remark.} A bit stronger than Schur is a result of Gone \cite{Brouwer}) which assures that
$$ \sum_{i=1}^k d_i \leq \sum_{i=1}^k \mu_i, 1 \leq k \leq n-1 $$
if $\mu_1 \geq \mu_2 \geq \dots \geq v_n=0$ are the eigenvalues of the Laplacian
of $G$ and $d_1 \geq d_2 \geq \dots \geq d_n>0$ are the vertex degrees of
the graph $G$. This implies that $H_{G_m}(x) > F_{G_m}(x)$ for each $m$ and suggest that 
also in the limit $H(x)> F(x)$ for every $x \in (0,1)$. \\

In the case $d=2$, the degree $d_i$ is related to the curvature $K(x) = 1- d(x)/6$
which adds up to the Euler characteristic of $G$. We see that the curvature
distribution is related to the eigenvalue distribution. \\

In general we can look at the curvature 
$$   K(x) = 1-\frac{V_0(x)}{2}+\frac{V_1(x)}{3}-\frac{V_2(x)}{4} \dots $$
of a vertex which by Gauss-Bonnet-Chern \cite{cherngaussbonnet} 
adds up to the Euler characteristic $\chi(G)$ of $G$.  \\

The limiting distribution works also for the Dirac operator $D=d+d^*$ and 
so for the Hodge Laplacian $D^2 = d d^* + d^* d$. The argument is the same. 

\begin{coro}
For any $k$, the Laplacian $L_k(G_m)$ on $k$-forms has a spectral limiting
function which only depends on $k$ and the dimension $d$ 
of the largest simplex in $G=G_0$.
\end{coro}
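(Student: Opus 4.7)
The plan is to replay, for each $k$, the Lidskii-based decomposition already used for the scalar case $k=0$. For each $k$ I would define the spectral distribution function $F_{G_m,k}(x) = \lambda_{[n_k(m)x]}$ where $\lambda_1 \leq \cdots \leq \lambda_{n_k(m)}$ are the eigenvalues of $L_k(G_m)$ and $n_k(m) = v_k(G_m)$, and then show that $F_{G_m,k}$ is Cauchy in $L^1([0,1])$ with a limit depending only on $d$ and $k$.

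First I would reduce to the universal case $G_0 = K_{d+1}$. Any $d$-dimensional graph $G$ can be cut along its $(d-1)$-dimensional interfaces into its maximal $d$-simplices; the resulting block-diagonal operator differs from $L_k(G)$ only in rows and columns indexed by $k$-simplices incident to the interface. Since the interface is $(d-1)$-dimensional while the bulk is $d$-dimensional, the counting identities used in the scalar proof give a ratio of interface to bulk $k$-simplices of order $(d+1)^{-m}$ after $m$ refinements.

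Second, for the iterated refinement $G_m \to G_{m+1}$ with $G_0 = K_{d+1}$, the graph $G_{m+1}$ decomposes into $(d+1)!$ isomorphic copies of $G_m$ glued along $(d-1)$-dimensional walls. Cutting along these walls produces an operator $\widetilde L_k$ that is unitarily equivalent to the direct sum of $(d+1)!$ copies of $L_k(G_m)$. Applying Lemma~\ref{Lidskii} to the pair $\bigl(L_k(G_{m+1}), \widetilde L_k\bigr)$ bounds the sum of absolute eigenvalue differences by the total $\ell^1$-norm of the matrix entries of the perturbation, which are supported on rows and columns indexed by wall-incident $k$-simplices. Dividing by $n_k(m+1)$ translates this into an $L^1([0,1])$ bound of order $C(d+1)^{-m}$ on $\|F_{G_{m+1},k} - F_{G_m,k}\|_{L^1}$, so the sequence is Cauchy and converges to some $F_{d,k}$. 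Universality then follows exactly as in the scalar proof: any two starting graphs of the same top dimension $d$ share, after enough refinements, the same self-similar bulk structure up to an exponentially vanishing $(d-1)$-dimensional correction.

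The main obstacle is the entry-wise bound inside the Lidskii estimate. The nonzero entries of $L_k$ grow with the simplex-incidence degrees, which themselves grow with each refinement. What must be verified is that this degree growth is beaten by the factor $(d+1)^{-m}$ coming from the ratio of boundary to bulk $k$-simplices. This reduces to showing that the number of $k$-simplices adjacent to any wall $(d-1)$-simplex of $G_m$ grows strictly slower than $((d+1)!)^m$, a combinatorial fact which in turn follows because the spheres around wall simplices in $G_m$ are joins of lower-dimensional refinements whose $k$-simplex counts are controlled by the same recursion as in the proof of the scalar case. A subtler point is that $L_k = d_{k-1}d_{k-1}^* + d_k^*d_k$ mixes ``up'' and ``down'' contributions, so one must check that both types of couplings across the wall are confined to the interface rows and columns; this is immediate from the fact that $d_j$ only pairs simplices with their faces or cofaces, both of which sit in the same neighborhood of the wall.
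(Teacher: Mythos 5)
Your proposal follows essentially the same route as the paper, which simply asserts that ``the argument is the same'' as in the scalar case: cut $G_m$ into $(d+1)!$ copies of $G_{m-1}$ glued along $(d-1)$-dimensional walls, apply Lemma~\ref{Lidskii} to the wall-supported perturbation, and let the exponentially smaller boundary count kill the correction. You are in fact more explicit than the paper about the one genuinely delicate point (the entry growth of $L_k$ versus the boundary-to-bulk ratio), which the paper glosses over entirely.
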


In each dimension $d$, the graph $G_m$ belonging to a simplex $G_0=K_d$ 
can be cut (using some modifications on smaller dimensional sub graphs) 
into $(d+1)!$ smaller isomorphic pieces, which are of the form 
$G_{m-1}$. The $(d-1)$-dimensional cuts do not matter in the limit as the
number of simplices in them grows exponentially less fast.  \\

\begin{figure}
\scalebox{0.1}{\includegraphics{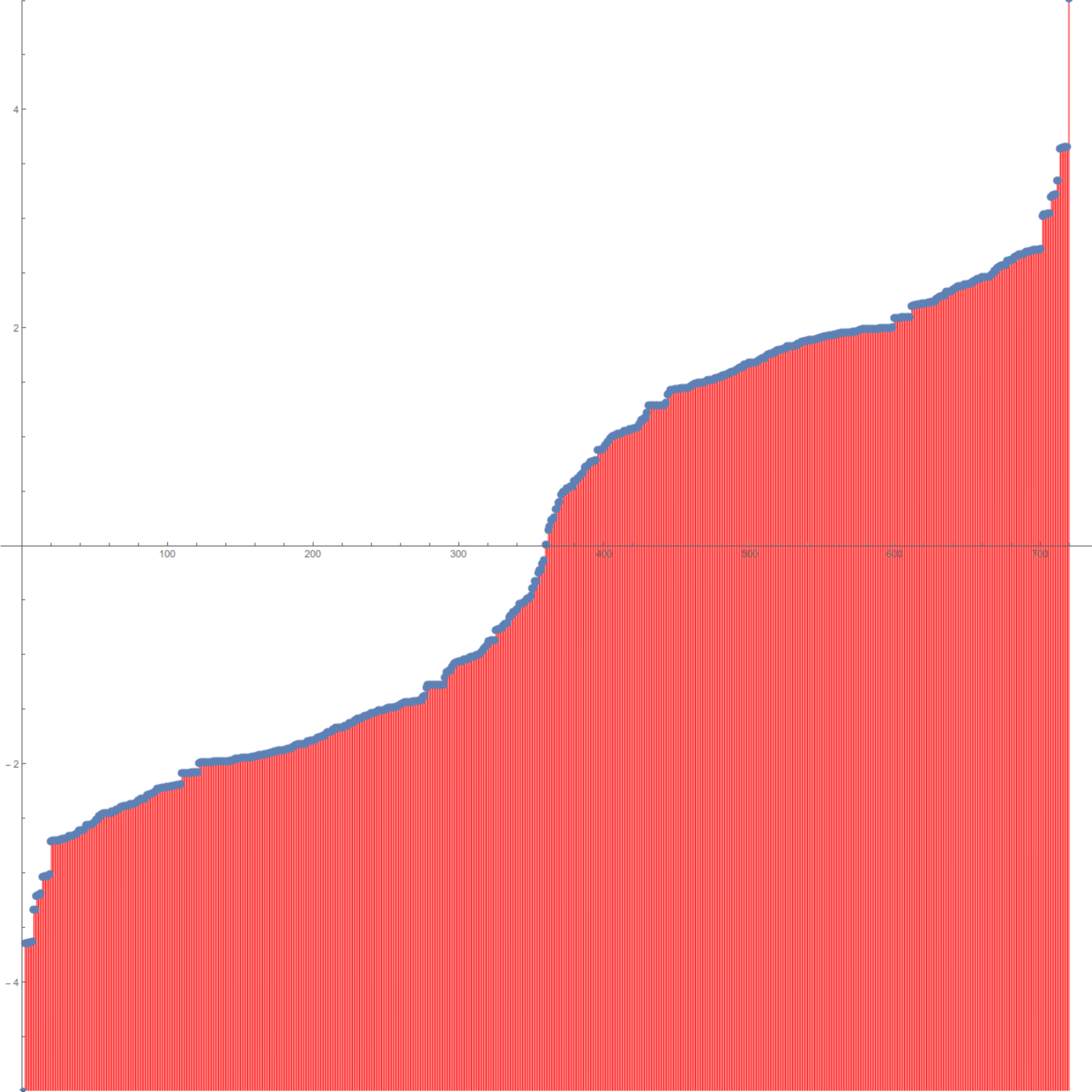}}
\scalebox{0.1}{\includegraphics{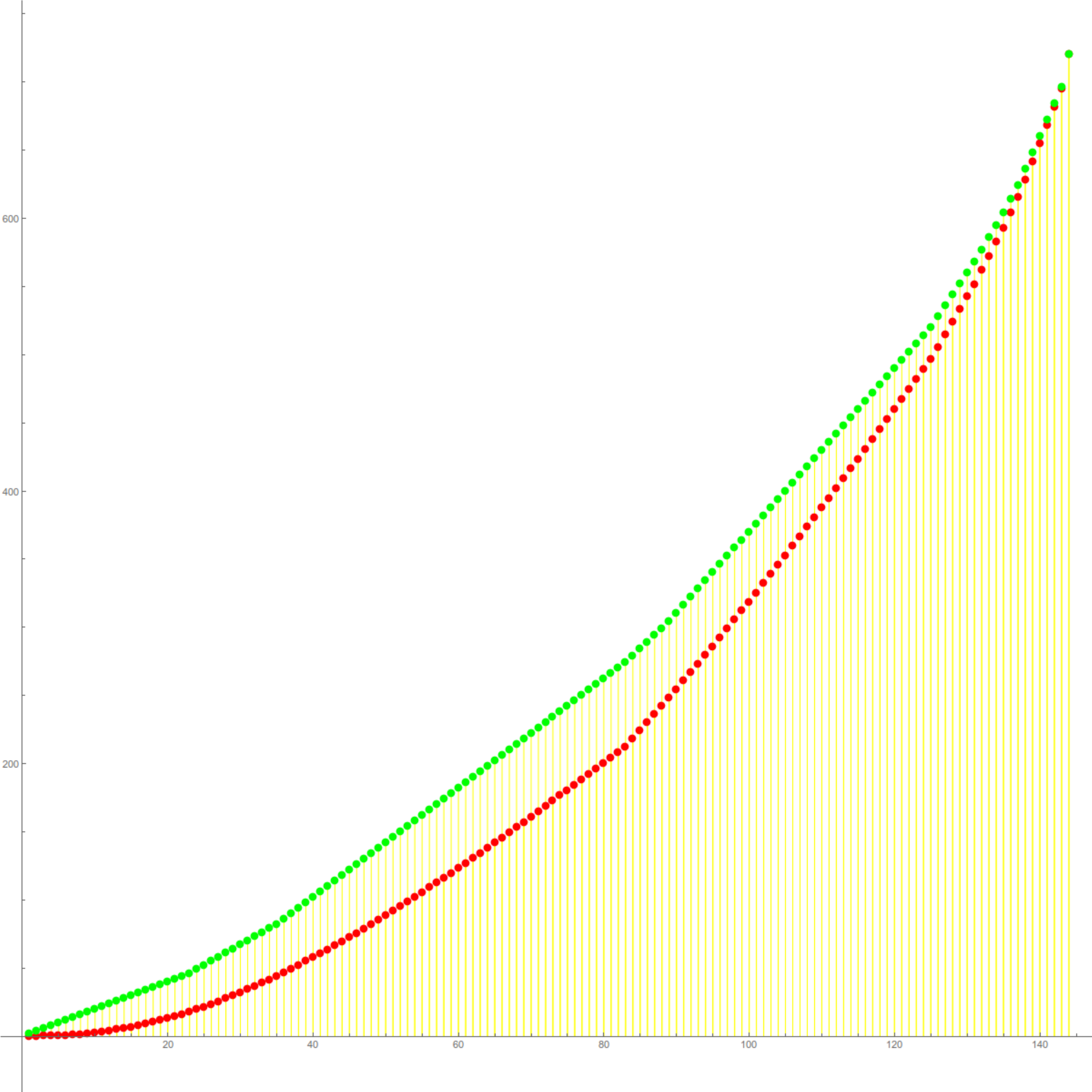}}
\caption{
To the left is a graph of the spectral density function 
of the Dirac operator $d+d^*$ of a barycentric refinement of the house graph. 
To the right an illustration of the Schur inequality: the 
integrated vertex degree function is an upper bound for the integrated 
eigenvalue function $\int_0^x F(t)\; dt$ and agrees at the end points $0,1$. }
\end{figure}

Since the matrices $D^2(G_m)$ and $L(G_{m+1})$ have the same size, we compared
the distribution of the Hodge Laplacian on the level $m$ with the distribution of
the scalar Laplacian on the level $m+1$ and they appear comparable. 
Because of super symmetry, the 1-form Laplacian $L_1$ has the same spectrum 
than the union of the $0$ and $2$-form Laplacians. The $2$-form Laplacian is
related to the Laplacian of the dual graph. \\

There might be more relations if things are extended to Schr\"odinger operators.
In \cite{Kni95}, we looked at renormalization
maps $L \to D$ satisfying $L=D^2+c$, where $D$ is a Laplacian on a barycentric
refined circular graph. Its not clear whether there are algebraic relations for
$d \geq 2$.  For a triangle $G=G_0$ with Laplacian
$\left[
                 \begin{array}{ccc}
                  2 & -1 & -1 \\
                  -1 & 2 & -1 \\
                  -1 & -1 & 2 \\
                 \end{array} \right]$ 
the Laplacian $L(G_1)$ of the refinement $G_1$ and the form Laplacian $D^2(G_0)$ 
have the same size
$$ L(G_1) = \left[
                 \begin{array}{ccccccc}
                  3 & -1 & 0 & -1 & -1 & 0 & 0 \\
                  -1 & 3 & -1 & -1 & 0 & 0 & 0 \\
                  0 & -1 & 3 & -1 & 0 & 0 & -1 \\
                  -1 & -1 & -1 & 6 & -1 & -1 & -1 \\
                  -1 & 0 & 0 & -1 & 3 & -1 & 0 \\
                  0 & 0 & 0 & -1 & -1 & 3 & -1 \\
                  0 & 0 & -1 & -1 & 0 & -1 & 3 \\
                 \end{array}
                 \right] \;, $$
$$  D^2(G_{0}) = \left[
                 \begin{array}{ccccccc}
                  2 & -1 & -1 & 0 & 0 & 0 & 0 \\
                  -1 & 2 & -1 & 0 & 0 & 0 & 0 \\
                  -1 & -1 & 2 & 0 & 0 & 0 & 0 \\
                  0 & 0 & 0 & 3 & 0 & 0 & 0 \\
                  0 & 0 & 0 & 0 & 3 & 0 & 0 \\
                  0 & 0 & 0 & 0 & 0 & 3 & 0 \\
                  0 & 0 & 0 & 0 & 0 & 0 & 3 \\
                 \end{array}
                 \right] \; .  $$

\section{Figures} 

\begin{figure}
\scalebox{0.08}{\includegraphics{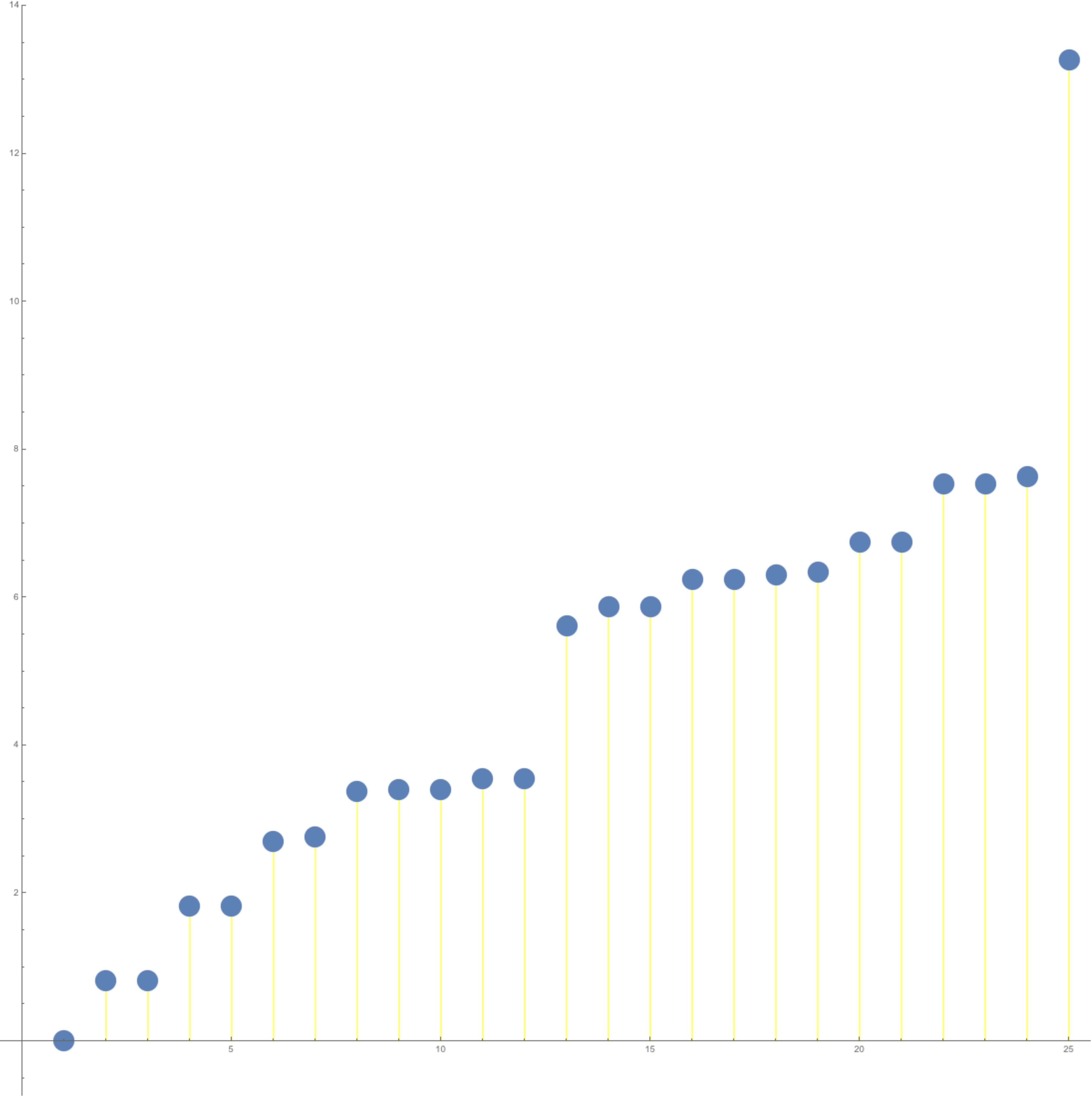}}
\scalebox{0.08}{\includegraphics{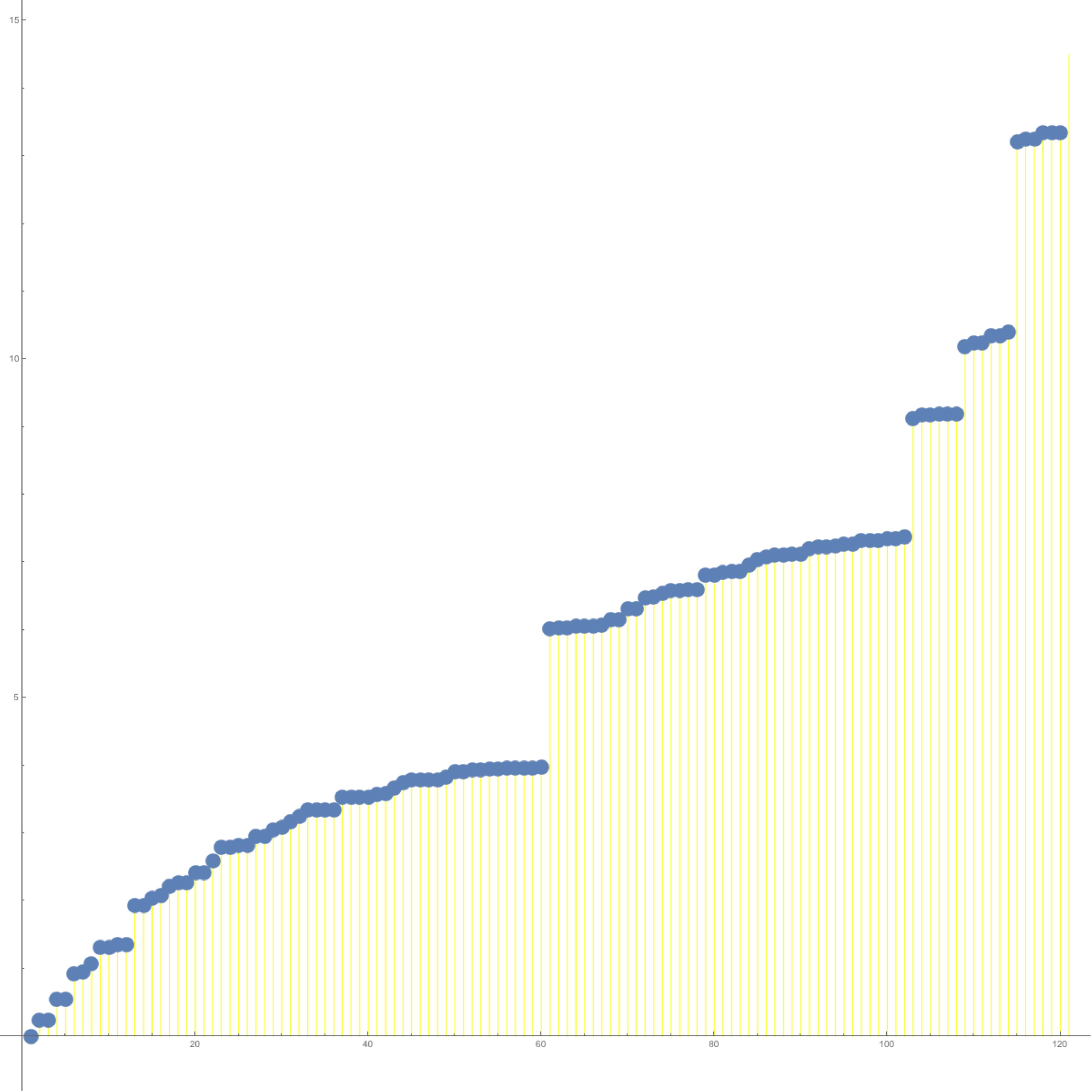}}
\scalebox{0.08}{\includegraphics{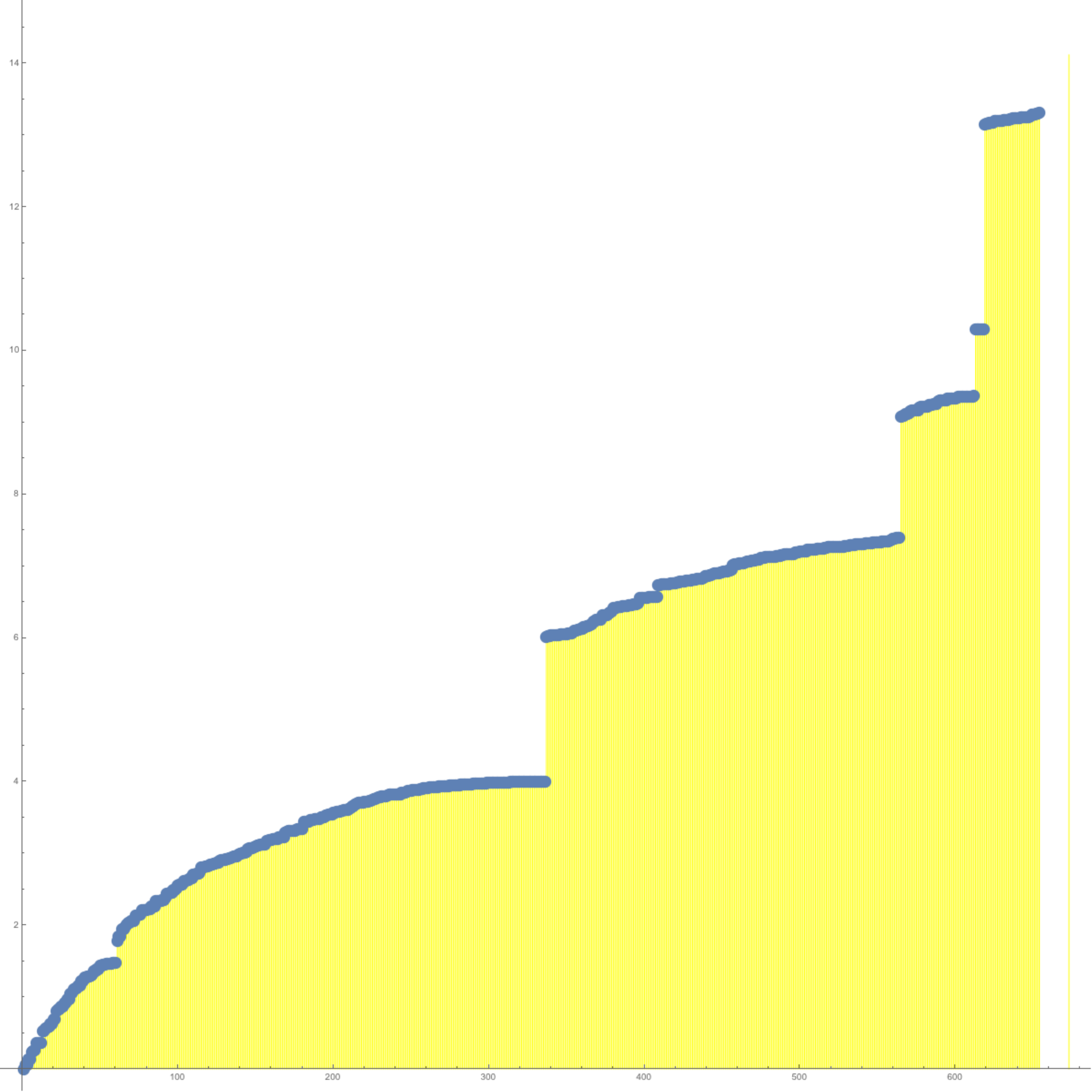}}
\caption{
The spectrum of the barycentric refinements $G_m$ of the triangle,
for $m=3,4,5$.
}
\end{figure}

\begin{figure}
\scalebox{0.08}{\includegraphics{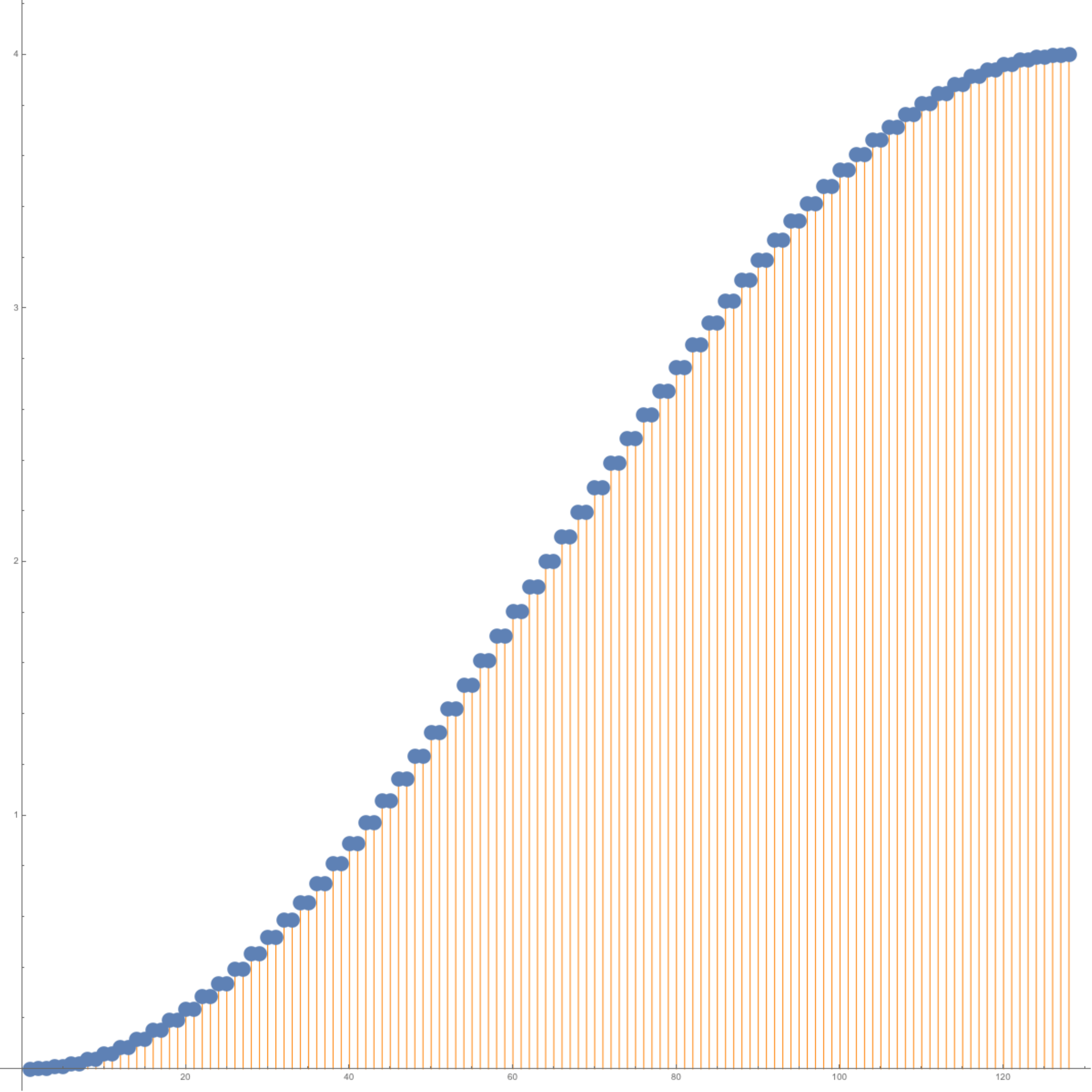}}
\scalebox{0.08}{\includegraphics{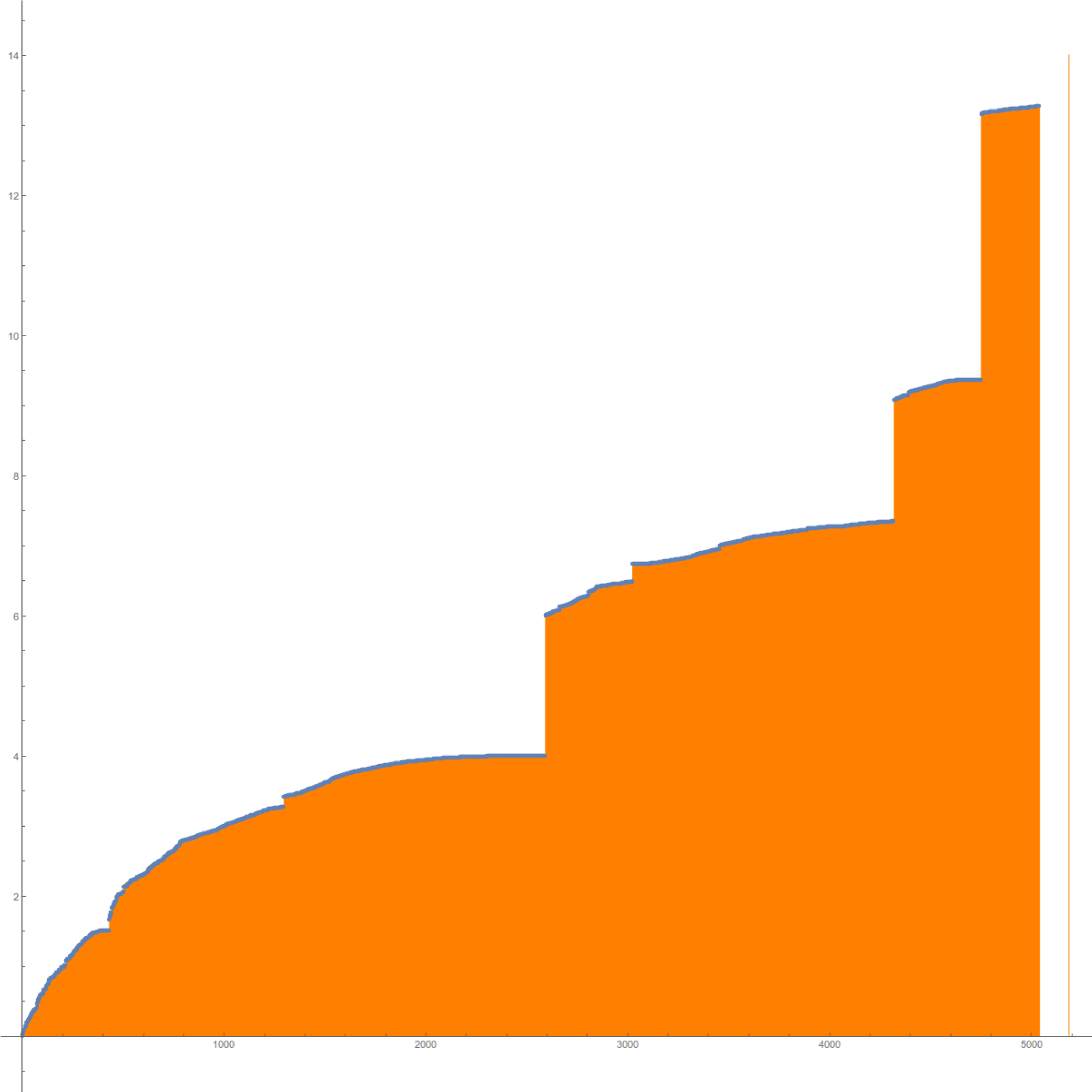}}
\scalebox{0.08}{\includegraphics{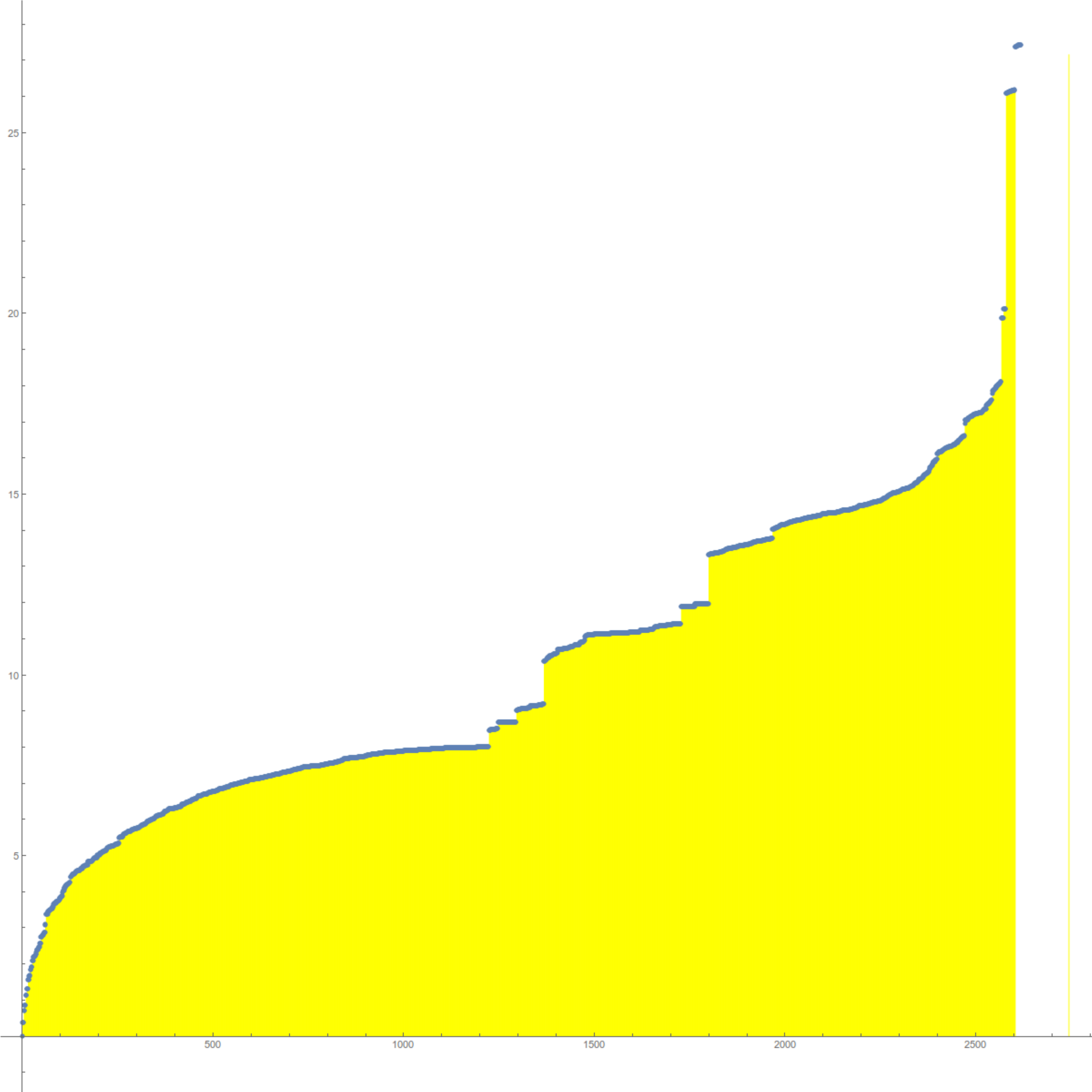}}
\caption{
The spectral functions $F$ in the case $d=1,d=2$ and $d=3$. 
We see the spectrum of barycentric refinements $G_7$ for $G=C_4$,
the spectrum of the refinement $G_4$ of the octahedron $G$ 
and finally the spectrum the refinement $G_3$ of the 
tetrahedron $K_4$. 
}
\end{figure}

\begin{figure}
\scalebox{0.08}{\includegraphics{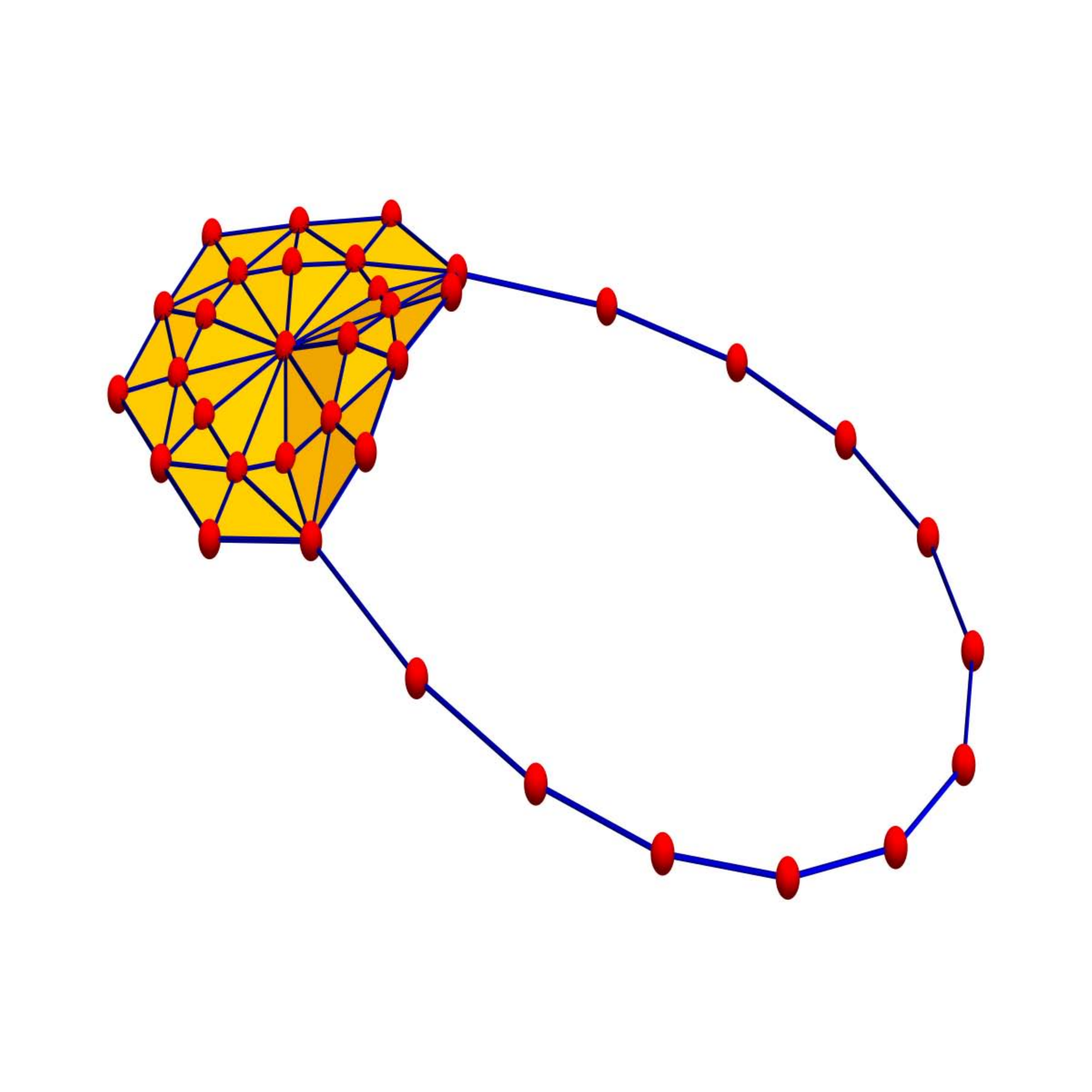}}
\scalebox{0.08}{\includegraphics{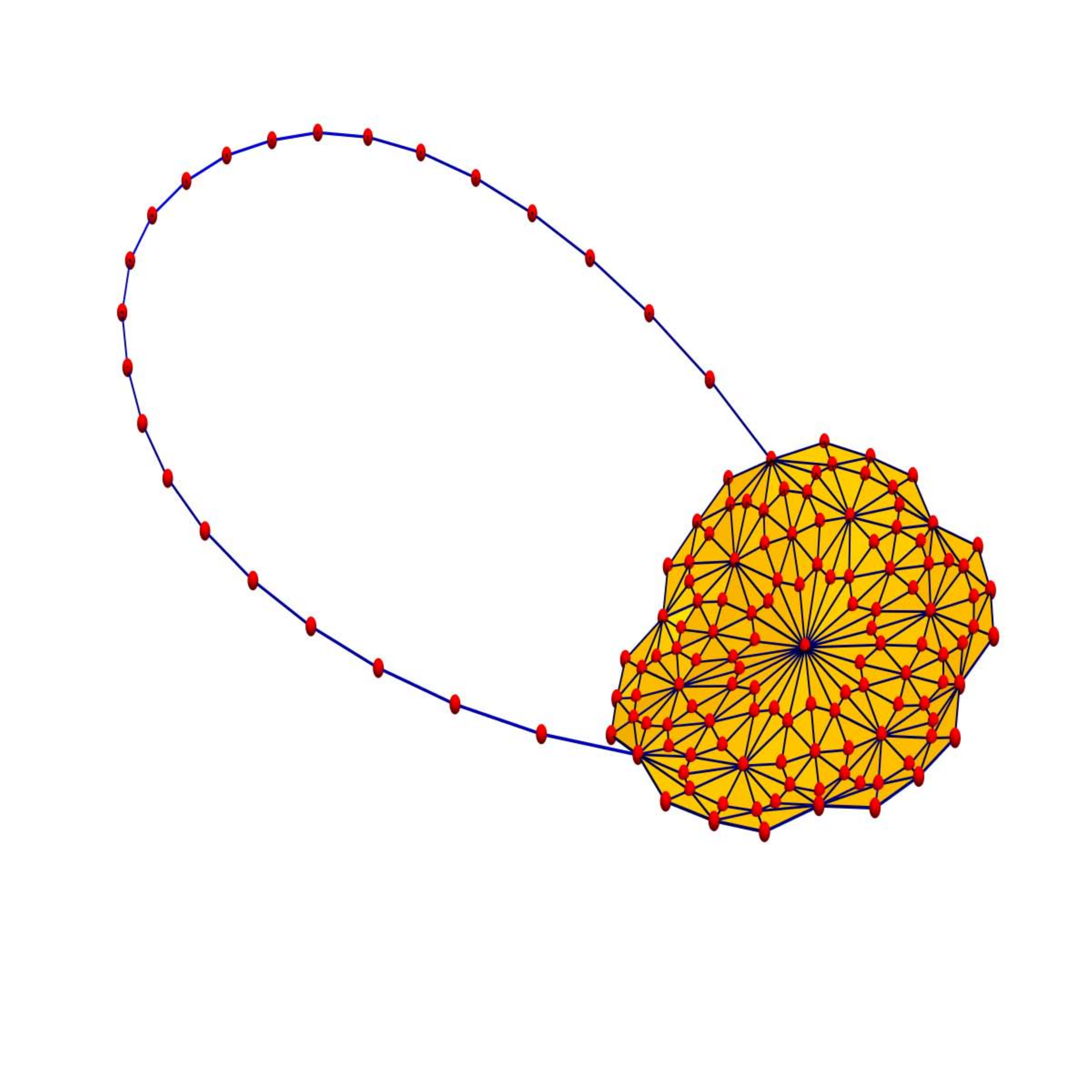}}
\scalebox{0.08}{\includegraphics{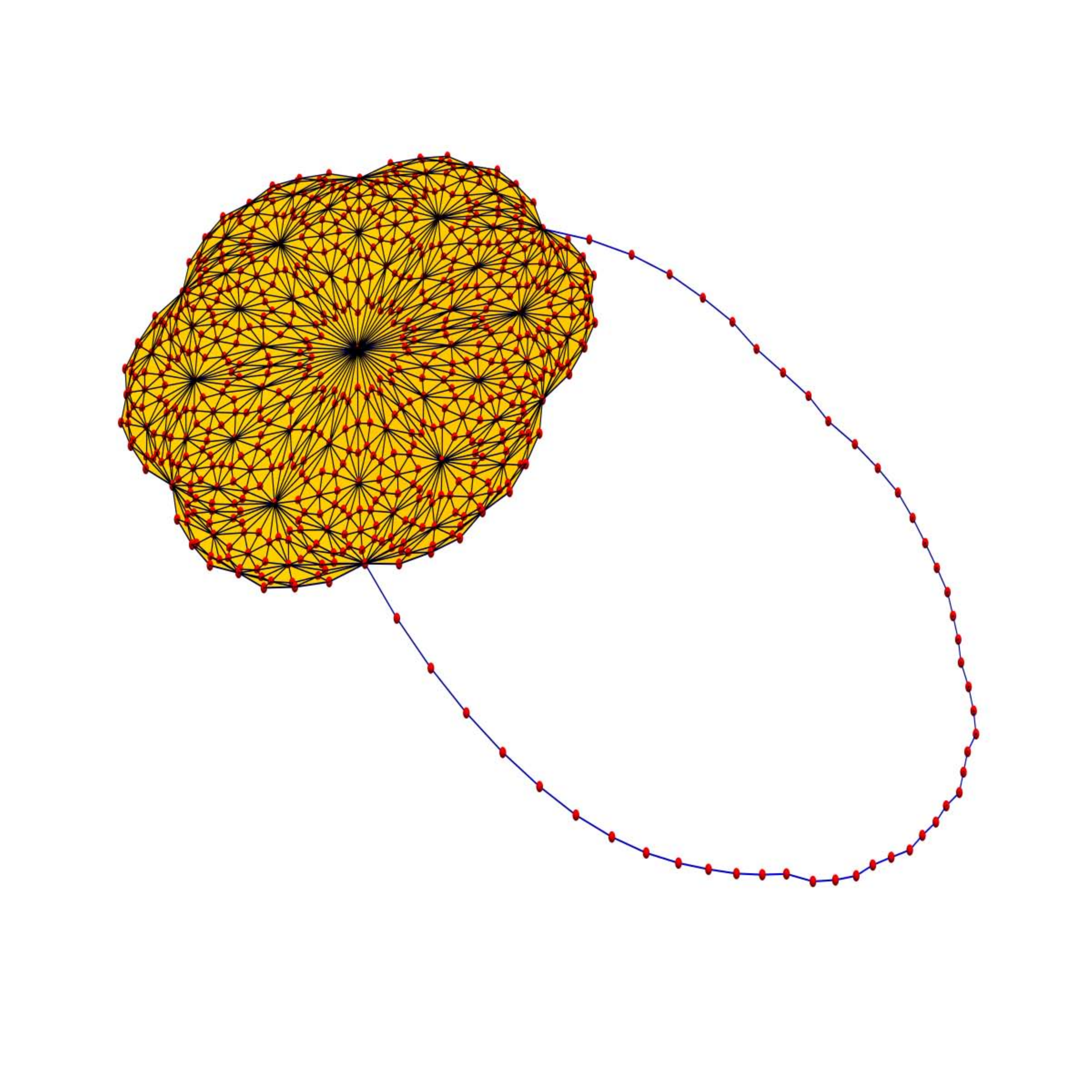}}
\caption{
Barycentric refinements of the house graph.
The one dimensional component can be neglected in the limit.
}
\end{figure}

\begin{figure}
\scalebox{0.08}{\includegraphics{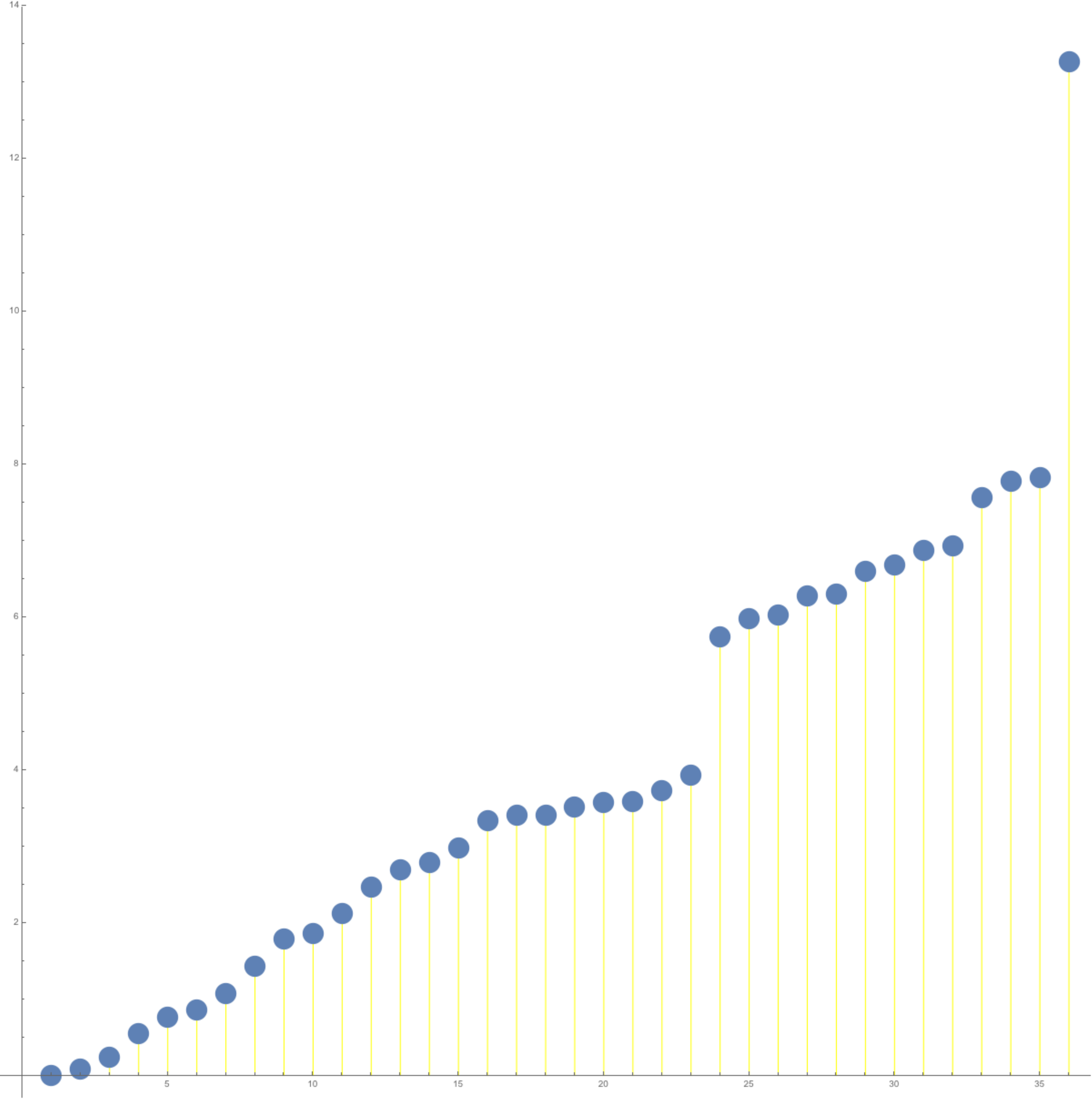}}
\scalebox{0.08}{\includegraphics{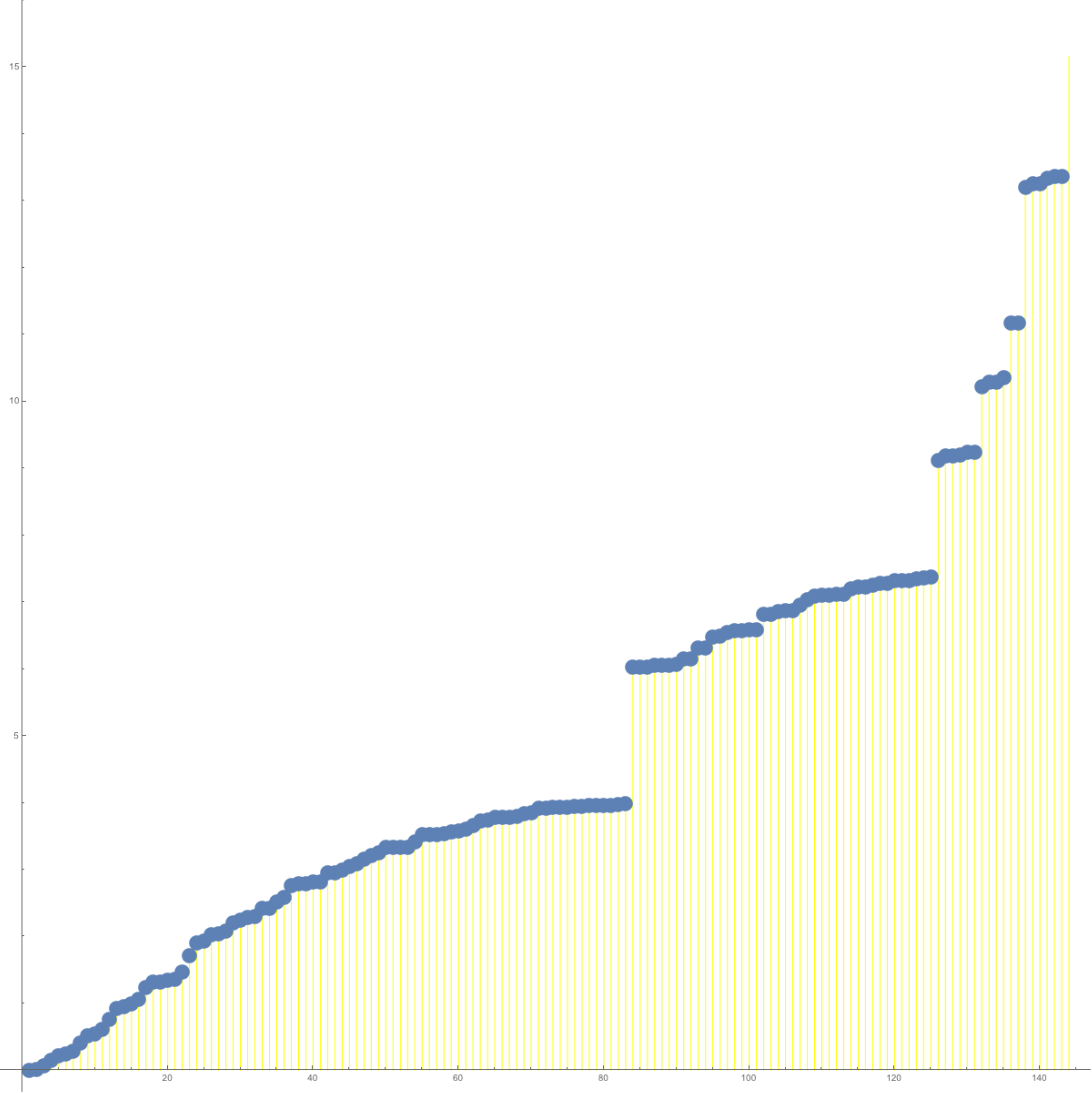}}
\scalebox{0.08}{\includegraphics{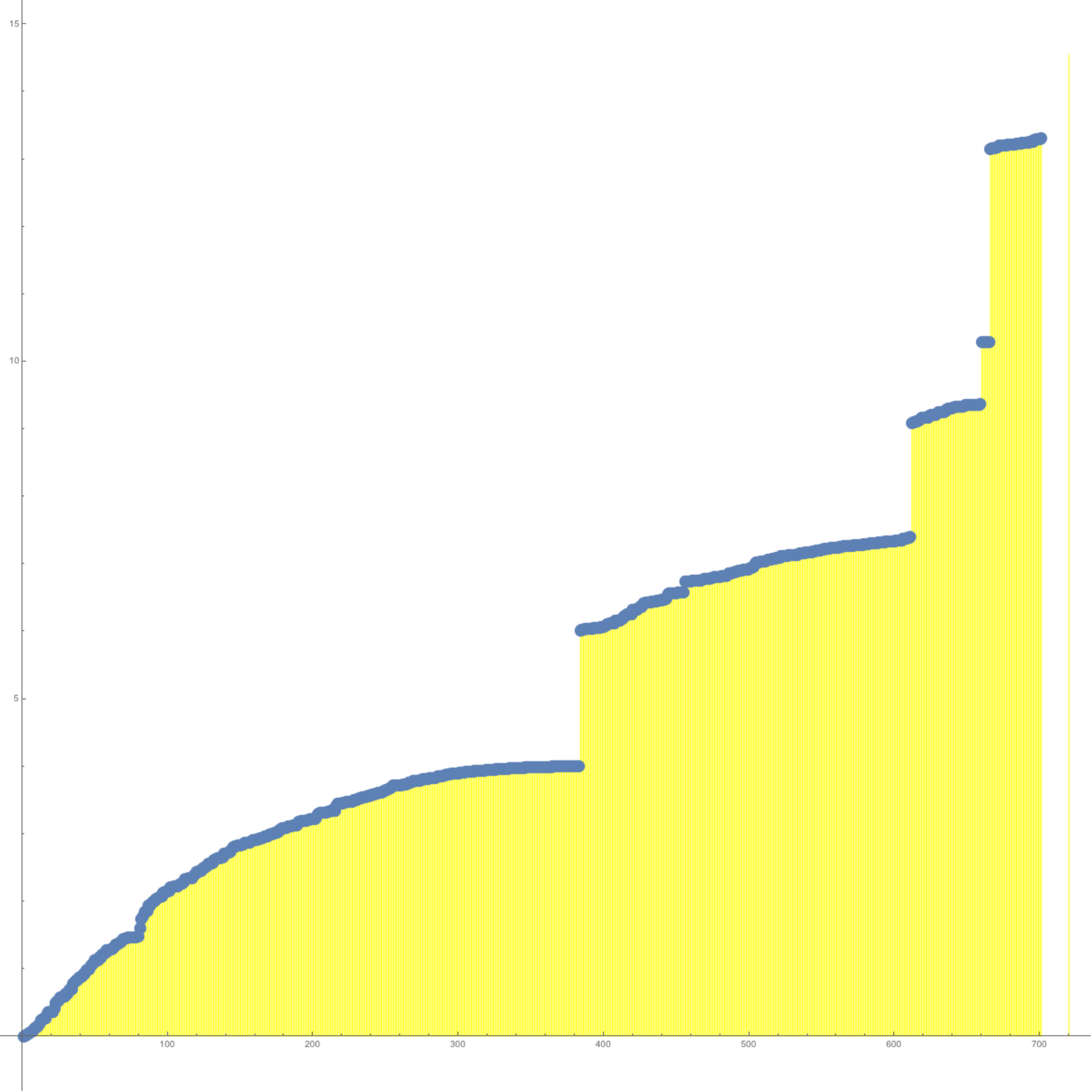}}
\caption{
The spectrum of barycentric refinements $G_m$ of the house graphs
for $m=3,4,5$, which is the case $d=2$. 
}
\end{figure}

\begin{figure}
\scalebox{0.08}{\includegraphics{figures/triangle5.pdf}}
\scalebox{0.08}{\includegraphics{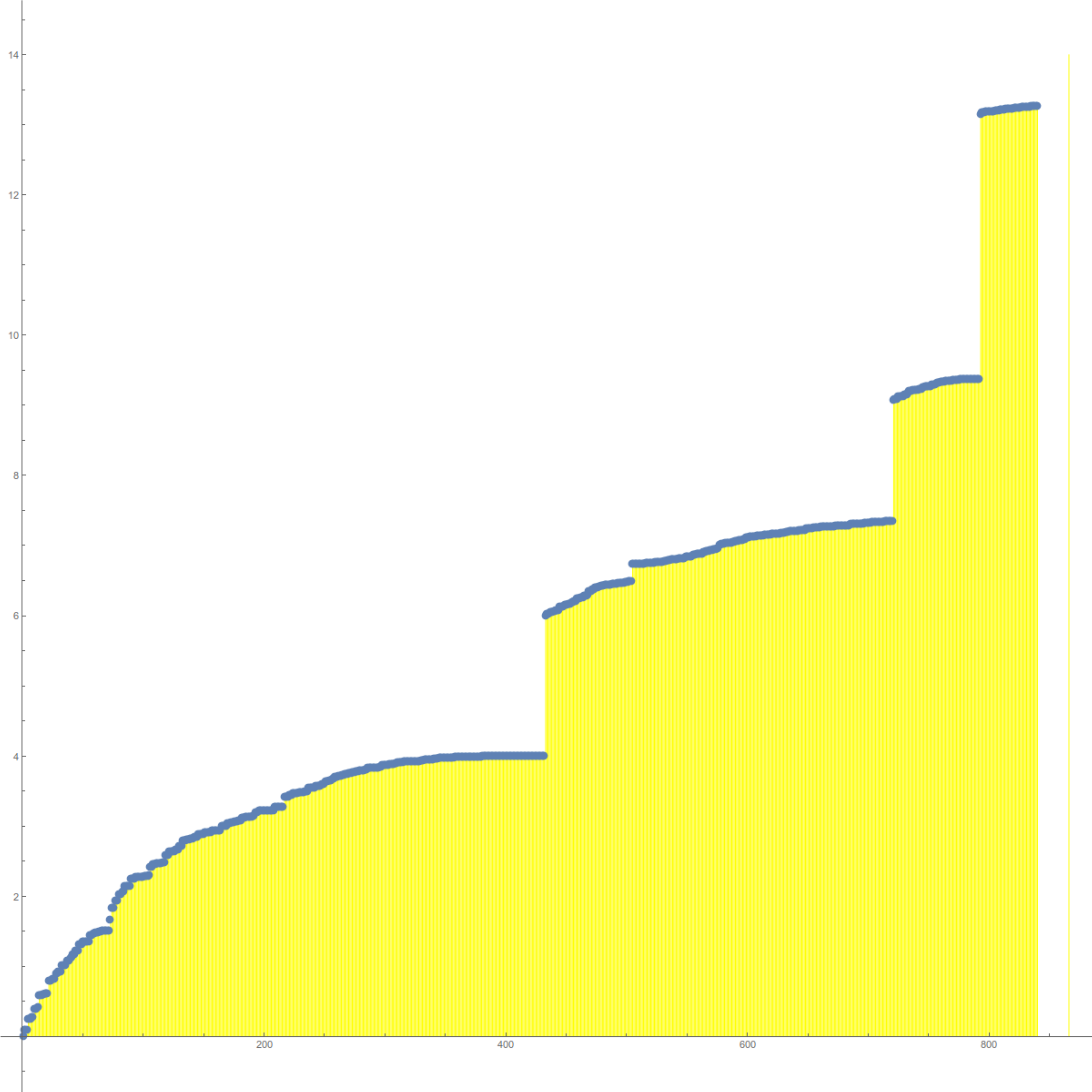}}
\scalebox{0.08}{\includegraphics{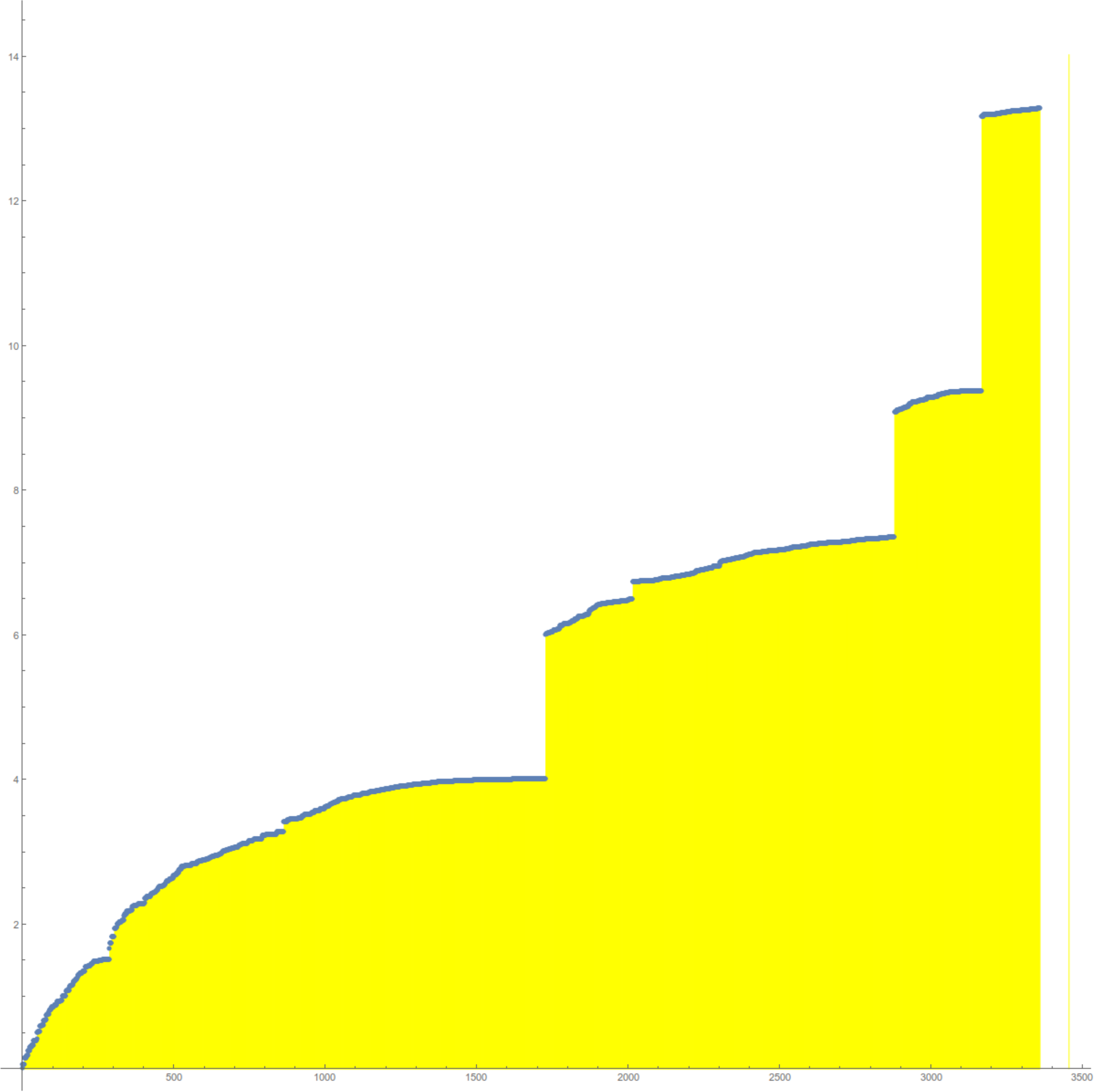}}
\caption{
The spectrum of barycentric refinements $G_4$ for the triangle
the octahedron and the torus. The eigenvalue distribution 
look almost identical. These are all 2 dimensional cases.
}
\end{figure}

\begin{figure}
\scalebox{0.1}{\includegraphics{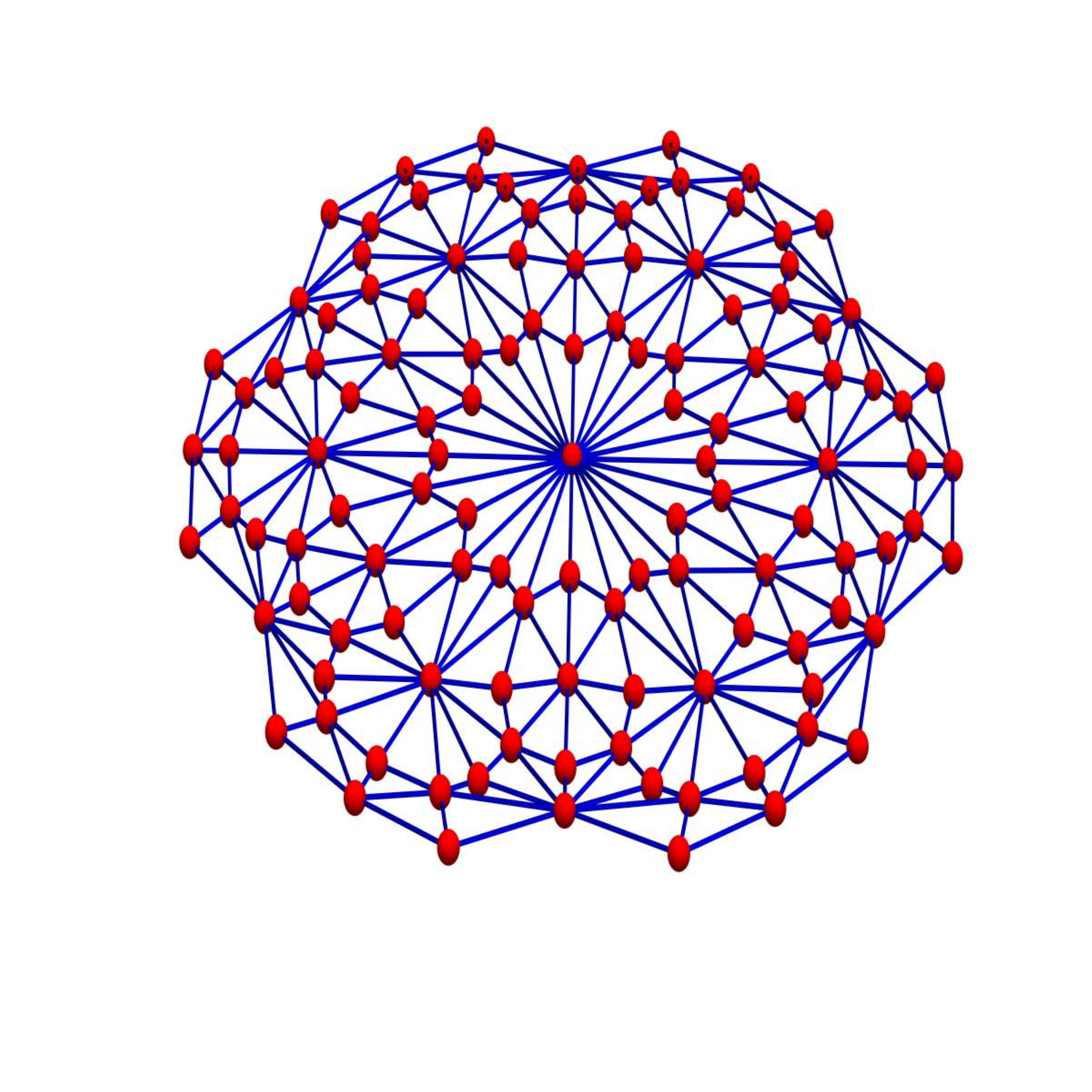}}
\scalebox{0.1}{\includegraphics{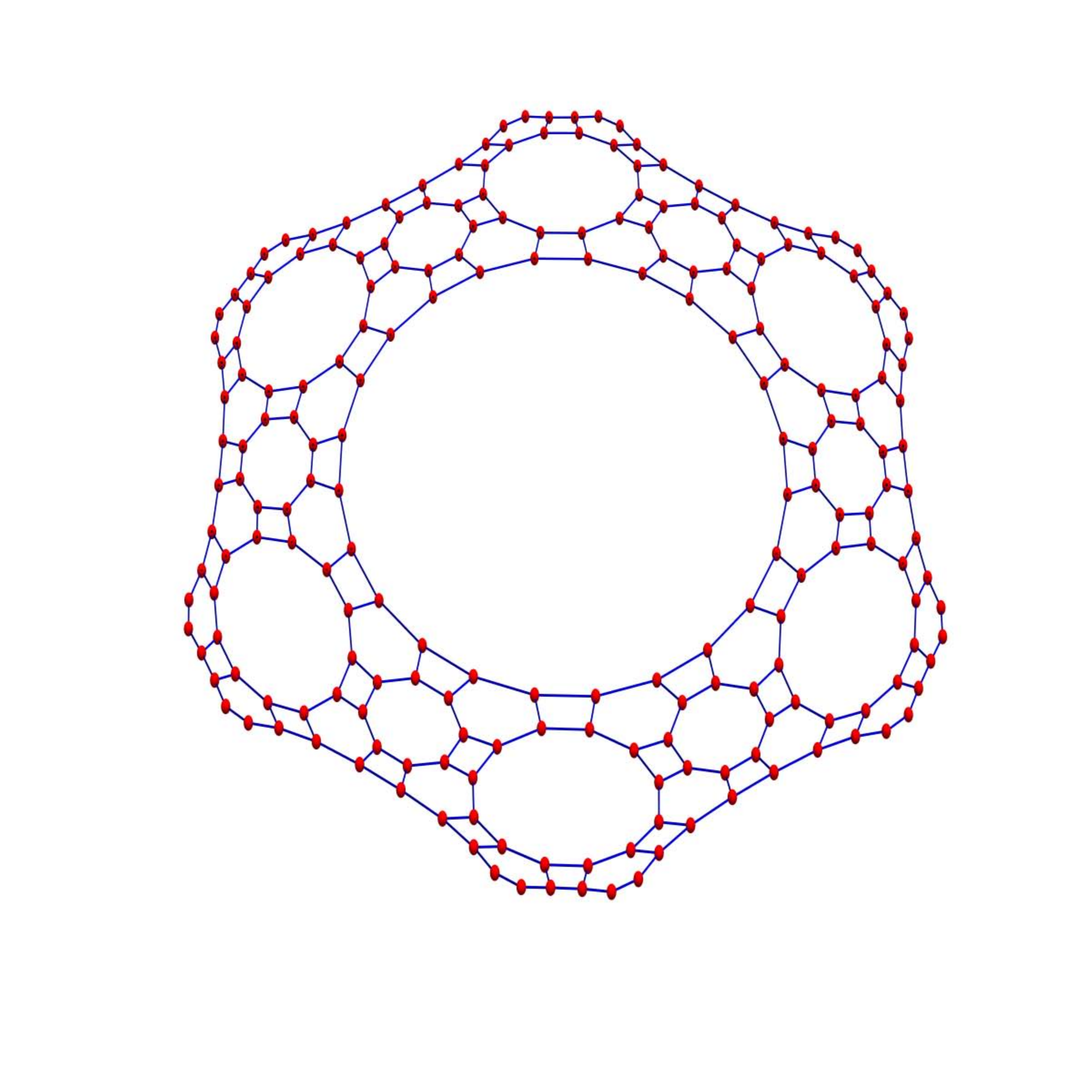}}
\caption{
A refinement of the triangle and its dual graph: the graph 
whose vertices are triangles and where two triangles are connected
if they intersect in an edge. The dual graph does not have
triangles.
}
\end{figure}

\begin{figure}
\scalebox{0.1}{\includegraphics{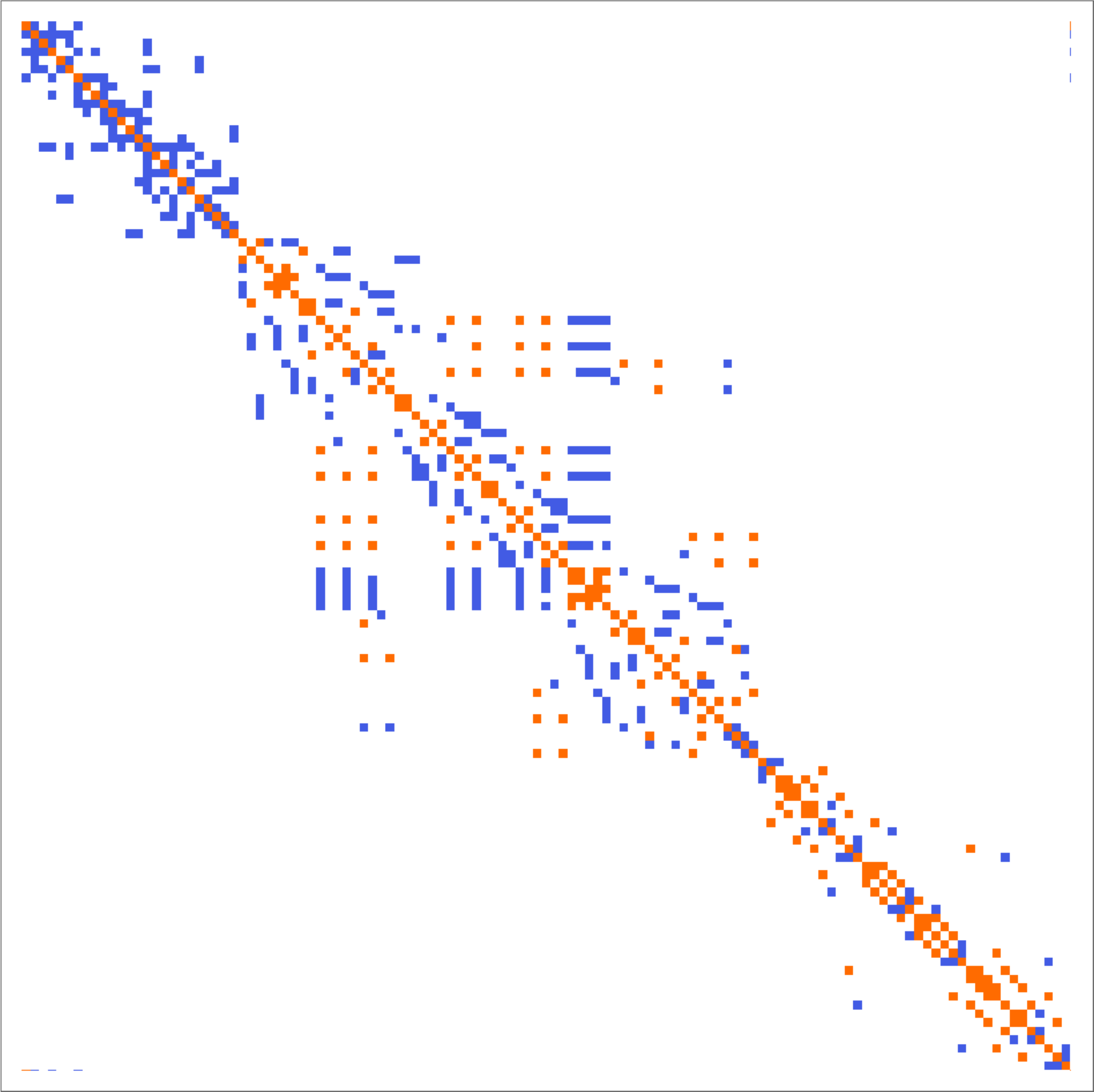}}
\scalebox{0.1}{\includegraphics{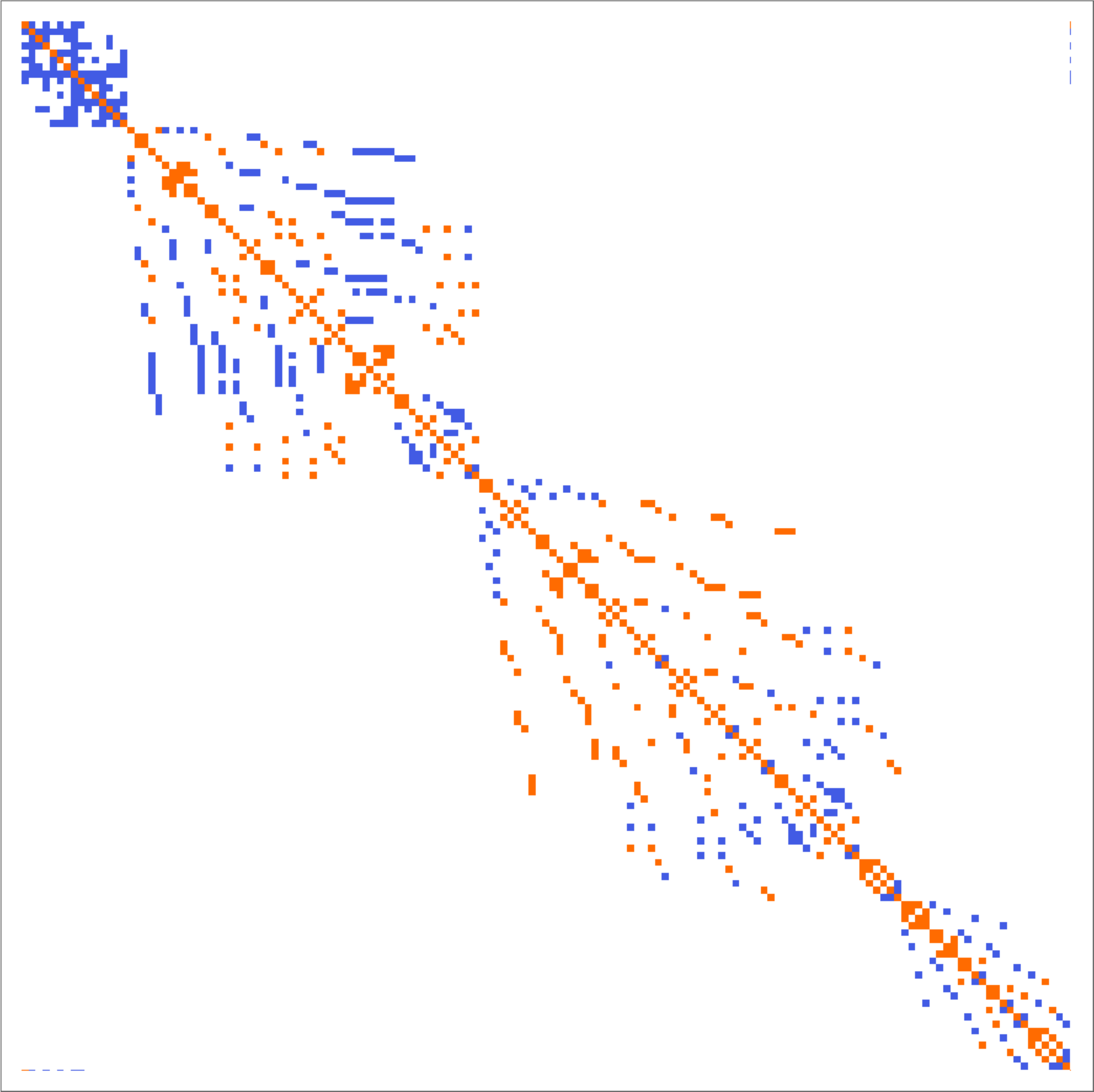}}
\caption{
The Hodge Laplacian $L=(d+d^*)^2$ for refinements
of the triangle $K_3$ has three blocks $L_0,L_1,L_2$, where $L_k$
acts on $k$-forms. The Hodge Laplacian of the tetrahedron $K_4$
has blocks $L_0,L_1,L_2,L_3$. Super-symmetry 
assures that the union of the spectra of the 
Bosonic parts $L_0,L_2$ is the union of the spectra of the Fermionic parts $L_1,L_3$. 
}
\end{figure}

\bibliographystyle{plain}

\end{document}